\newcommand{\blind}{1}
\definecolor {processblue}{cmyk}{0.96,0,0,0}
\definecolor{LightCyan}{rgb}{0.88,1,1}
\newtheorem{theorem}{Theorem}
\newtheorem{lemma}{Lemma}
\theoremstyle{remark}
\newtheorem{assumption}{Assumption}
\newtheorem{remark}{Remark}
\newcommand*{\addFileDependency}[1]{
\typeout{(#1)}
\@addtofilelist{#1}
\IfFileExists{#1}{}{\typeout{No file #1.}}
}
\def\inprob{\stackrel{p}{\rightarrow}}
\def\indist{\rightsquigarrow}
\def\ind{\perp\!\!\!\perp}
\def\notind{\not\!\perp\!\!\!\perp}
\newcommand{\var}{\text{var}}
\newcommand{\Pb}{\mathbb{P}}
\newcommand{\Qb}{\mathbb{Q}}
\newcommand{\Pn}{\mathbb{P}_n}
\newcommand{\Gn}{\mathbb{G}_n}
\newcommand{\Gb}{\mathbb{G}}
\newcommand{\E}{\mathbb{E}}
\newcommand{\R}{\mathbb{R}}
\newcommand{\Ob}{\mathbf{O}}
\newcommand{\Xb}{\mathbf{X}}
\newcommand{\xb}{\mathbf{x}}
\newcommand{\Ec}{\mathcal{E}}
\newcommand{\etab}{\pmb{\eta}}
\DeclareMathOperator*{\argmin}{arg\,min}
\DeclareMathOperator{\sgn}{sgn}
\newcommand{\one}{\mathbbm{1}}
\begin{document}

\def\spacingset#1{\renewcommand{\baselinestretch}%
	{#1}\small\normalsize} \spacingset{1}


\if1\blind
{
	\title{\bf Sensitivity Analysis via the Proportion of Unmeasured Confounding}
	\author{Matteo Bonvini\thanks{Department of Statistics \& Data Science, Carnegie Mellon University, 5000 Forbes Avenue, Pittsburgh, PA 15213. Email: mbonvini@stat.cmu.edu} \and Edward H. Kennedy\thanks{Assistant Professor, Department of Statistics \& Data Science, Carnegie Mellon University, 5000 Forbes Avenue, Pittsburgh, PA 15213. Email: edward@stat.cmu.edu}}
	\maketitle
} \fi

\if0\blind
{
	\bigskip
	\bigskip
	\bigskip
	\begin{center}
		{\LARGE\bf Title}
	\end{center}
	\medskip
} \fi

\bigskip
\begin{abstract}
	In observational studies, identification of ATEs is generally achieved by assuming that the correct set of confounders has been measured and properly included in the relevant models. Because this assumption is both strong and untestable, a sensitivity analysis should be performed. Common approaches include modeling the bias directly or varying the propensity scores to probe the effects of a potential unmeasured confounder. In this paper, we take a novel approach whereby the sensitivity parameter is the ``proportion of unmeasured confounding:" the proportion of units for whom the treatment is not as good as randomized even after conditioning on the observed covariates. We consider different assumptions on the probability of a unit being unconfounded. In each case, we derive sharp bounds on the average treatment effect as a function of the sensitivity parameter and propose nonparametric estimators that allow flexible covariate adjustment. We also introduce a one-number summary of a study's robustness to the number of confounded units. Finally, we explore finite-sample properties via simulation, and apply the methods to an observational database used to assess the effects of right heart catheterization. 
\end{abstract}

\noindent%
{\it Keywords:}  observational study, optimization, partial identification, semiparametric theory
\vfill

\newpage
\section{Introduction}
In an experiment, the random assignment of the treatment to the units ensures that any measured and unmeasured factors are balanced between the treatment and control groups, thereby allowing the researcher to attribute any observed effect to the treatment. In observational studies, however, achieving such balance requires the untestable assumption that all confounders, roughly variables affecting both the treatment $A$ and the outcome $Y$, are collected. To gauge the consequences of departures from the no-unmeasured-confounding assumption, a sensitivity analysis generally posits the existence of an unmeasured confounder $U$ and varies either the $U$-$A$ association or the $U$-$Y$ association or both. The minimal strength of these associations that would drive the observed $Y$-$A$ association to zero is often reported as a measure of the study's robustness to unmeasured confounding. 

Since the seminal work of \cite{cornfield1959smoking} on the association between smoking and lung cancer, a plethora of sensitivity analysis frameworks have been proposed. Here, we mention a few of them and refer to \cite{liu2013introduction} and \cite{richardson2014nonparametric} for excellent reviews. In the context of matched studies, Rosenbaum's framework \citep{rosenbaum1987sensitivity, rosenbaum2002covariance} is likely the most commonly used. It governs the $U$-$A$ association via a parameter $\Gamma \geq 1$ by requiring that, within each pair, the ratio of the odds that unit 1 is treated to the odds that unit 2 is treated falls in the interval $[\Gamma^{-1}, \Gamma]$. The $U$-$Y$ association is often left unrestricted or bounded as in \cite{gastwirth1998dual}. More recently, \cite{zhao2017sensitivity} and \cite{yadlowsky2018bounds} have proposed extensions to this framework that do not require matching. 

In addition, \cite{ding2016sensitivity} and \cite{vanderweele2017sensitivity} have derived a bounding factor for certain treatment effects in terms of two sensitivity parameters governing the $U$-$A$ and $U$-$Y$ relationships. Other authors have proposed modeling the distribution of $U$ and the relationships $U-Y$ and $U-A$ directly \citep{imbens2003sensitivity, rosenbaum1983assessing}, which has been recently extended to the case where the distribution of $U$ is left unspecified by \cite{zhang2019sens}. In the context of time-varying treatments, sensitivity analyses have been proposed for marginal structural models \citep{brumback2004sensitivity} and cause-specific selection models \citep{andrea2001methods}. 

In this paper, we propose a novel approach to sensitivity analysis based on a mixture model for confounding. We conceptualize that an unknown fraction $\epsilon$ of the units in the sample is arbitrarily confounded while the rest is not. The parameter $\epsilon$ is unknown and not estimable but can be varied as a sensitivity parameter. As discussed below, our model generalizes some relaxations to the no-unmeasured-confounding assumption that have been previously proposed in the literature. Furthermore, our framework yields a natural one-number summary of a study's robustness: the minimum proportion of confounded units such that bounds on the average treatment effect contain zero. All the code can be found in the Github repository \texttt{matteobonvini/experiments-sensitivity-paper}. 

\subsection{Motivation}\label{sec:motivation}
The most widely adopted frameworks for sensitivity analysis generally assume that each unit in the sample could be subject to unmeasured confounding and then proceed by specifying the maximal extent of such confounding. However, just like a treatment effect can be heterogeneous, confounding, too, can differ between units. We propose a complementary approach: in some instances, the researcher may have failed to measure relevant confounders but may hope that there is a subset of units, possibly unknown, for whom the treatment is as good as randomized given the measured covariates.

As a toy example, suppose it is observed that adolescent alcohol drinking (treatment $A$) is positively associated with the occurrence of liver diseases (outcome $Y$). Suppose all confounders $X$ have been recorded except for parental smoking, which could be associated with both $A$ \citep{oliveira2019influence, pengpid2019alcohol} and $Y$ due to second-hand smoking \citep{lammert2013questionnaire}. Previously proposed sensitivity analyses would check whether a small association between parental smoking and $A$ or $Y$ can explain away the observed A-Y association. Instead, we propose to leverage on the observation that parental smoking is a confounder only for units whose parents smoke at home. For instance, some parents may only smoke at work, in which case parental smoking would not have an effect on $Y$. The sample is thus composed of two groups: those units for which $A$ is as good as randomized given $X$ because they are not subject to second-hand smoking regardless of whether their parents smoke and those for which it is not. Depending on how prevalent the former group is, the observed $A$--$Y$ association might be at least partially attributed to the effect of $A$. This toy example generalizes to other cases. For instance, if a confounder is measured with error, the observed covariates may be sufficient to de-confound the treatment-outcome relationship only for an unknown subset of units. In such case, the sample can be thought of containing two groups: those units for whom the confounder was measured correctly, e.g. if the questionnaire on motivation or drugs usage was answered truthfully, and those for whom it was not.

The possibility that a sample comes from a mixture of distributions has been studied in great detail in statistics. In robust statistics, for example, it is assumed that a small unknown fraction of the sample comes from a ``corrupted" or ``contaminated" distribution that is not the target of inference (see Remark \ref{remark_robust_stat}). In causal inference, unmeasured confounding takes the role of contamination. Borrowing the contaminated model from this literature, we conceptualize that an unknown fraction of the sample suffers from unmeasured confounding. 

For example, consider Figure \ref{ex_partition}. In the shaded region of the space defined by the two observed covariates, the treatment is not assigned randomly; units with covariates' values falling in this region may have self-selected into the treatment arms and therefore estimating the effect of the treatment on their outcomes is impossible without making further, untestable assumptions. For brevity, we say these units are ``confounded," while the other units are ``unconfounded." Note that, except in special cases, some of which are discussed next, the region is not identifiable from the observed data. However, even if the region is not identifiable, its measure, termed $\epsilon$ in our model, might be specified or upper bounded using subject-matter knowledge. More generally, $\epsilon$ can be varied as a sensitivity parameter. In Figure \ref{ex_partition}, despite covering different sets of units, all three regions have the same mass, with approximately 20\% of the points falling inside them. Given a value for $\epsilon$, we show how to find the region yielding the most conservative inference.  

\begin{figure*}[h!]
	\subfloat{%
		\includegraphics[width=0.3\textwidth]{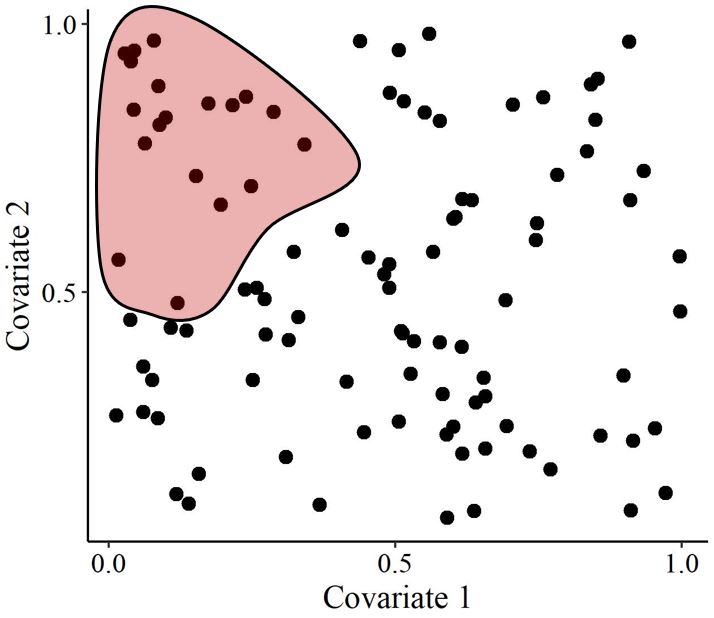}}
	\hspace{\fill}
	\subfloat{%
		\includegraphics[width=0.3\textwidth]{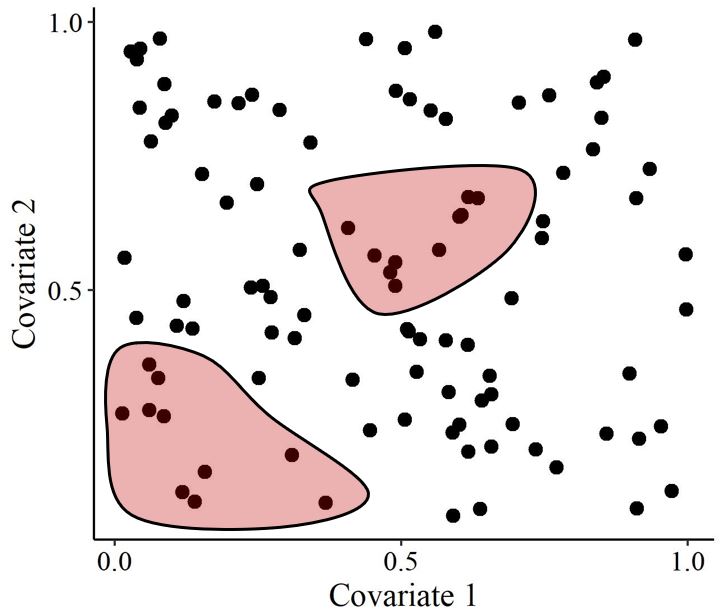}}
	\hspace{\fill}
	\subfloat{%
		\includegraphics[width=0.3\textwidth]{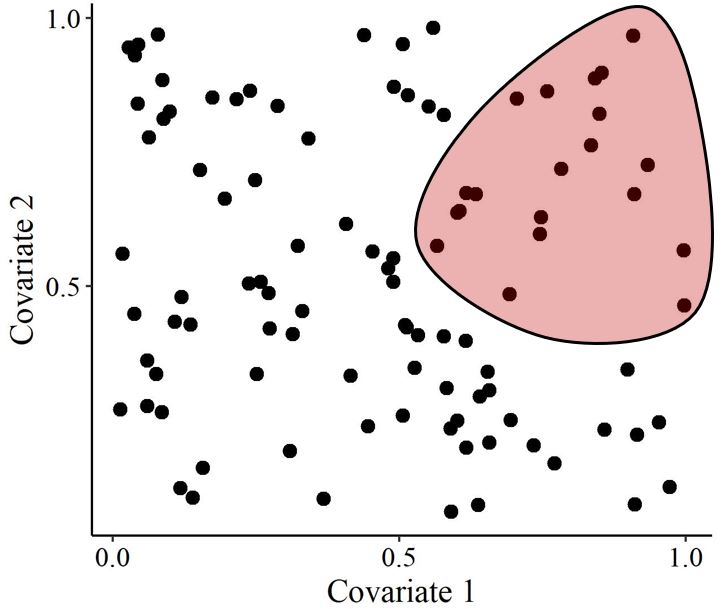}}\\
	\caption{\label{ex_partition} The shaded region represents the set of units for whom the treatment is not assigned randomly, even after conditioning on observed covariates. All three figures show approximately the same number of points falling within the ``confounded region," albeit covering different sets of units. The probability $\epsilon$ that a unit falls within the region is our model's sensitivity parameter, here $\epsilon \approx 0.2$.}
\end{figure*}

Special cases of our model have already been discussed in the literature when it is known who the confounded units are. For example, in introducing the selective ignorability framework, \cite{joffe2010selective} discuss estimating the effect of erythropoietin alpha (EPO) on mortality using an observational database containing information on all subjects in the United States on hemodialysis. The treatment is thought to be unconfounded only after conditioning on hematocrit, which, however, is not recorded for 10.6\% of the subjects. Thus, one may view 10.6\% of the sample as coming from a ``confounded distribution." In addition, the differential effects framework proposed in \cite{rosenbaum2006differential}, too, can be regarded as a special case of our model. Differential effects are treatment contrasts that are immune to certain types of biases called ``generic biases." For example, suppose two treatments are under study. In certain cases, it is plausible that, while units might self select into either treatment arm, the choice of the treatment among units who take exactly one treatment is as good as random. Notice that this setup is a special case of our model: the confounded units are precisely those who are not taking any treatment or are taking a combination of both of them.

Finally, a standard instrumental variables (IV) setting, too, can be thought of as a case where a fraction of the units is unconfounded. For example, consider an experiment with binary treatment that suffers from units' non-compliance. The treatment assignment is randomized but the treatment received is not. For the units who complied with the experimental guidelines, the treatment received is equal to the treatment assigned, which is randomly assigned. Thus, the compliers can be considered the units for whom the treatment / outcome relationship is not confounded. In fact, in their detailed analysis of the binary IV model, \cite{richardson2010analysis} propose a sensitivity analysis for the average treatment effect where the sensitivity parameter can be expressed as the proportion of compliers. For the observational setting considered in this paper, however, the instrument is never observed, thus, contrarily to a standard IV analysis, the sample contains no information regarding who the confounded units are. In this light, our contribution can also be regarded as an attempt to infer average treatment effects when it is plausible that nature is acting via an unobservable IV. 
\section{The Sensitivity Model}\label{section:model}
We suppose we are given an iid sample $(\Ob_1, \ldots, \Ob_n) \sim \Pb$ with $\Ob = (\Xb, A, Y)$, for covariates $\Xb \in \mathcal{X} \subseteq \R^p$, a binary treatment $A \in \{0, 1\}$ and an outcome $Y \in \mathcal{Y} \subseteq \R $. We let $Y^a$ denote the potential outcome that would have been observed had the treatment been set to $A = a$ \citep{rubin1974estimating}. The goal is to estimate the Average Treatment Effect (ATE) defined as $\psi ~= \E(Y^1 - Y^0)$. To ease the notation, we let $\pi(a\mid\Xb) = \Pb(A = a \mid \Xb)$, 
\begin{align*}
\mu_a(\Xb) = \E\left(Y\mid A = a, \Xb\right), \quad \text{and} \quad \etab = \left\{ \pi(0 \mid \Xb), \pi(1 \mid \Xb), \mu_0(\Xb), \mu_1(\Xb)  \right\}.
\end{align*}
Throughout, we assume that the following two assumptions hold
\medskip
\begin{assumption}[Consistency] \label{assumption:consistency}
	$Y = A Y^1 + (1 - A) Y^0$.
\end{assumption}
\begin{assumption}[Positivity] \label{assumption:positivity} $\Pb\left\{t \leq \pi(a \mid \Xb) \leq 1 - t\right\} = 1$ for some $t > 0$.
\end{assumption}
Both assumptions are standard in the causal inference literature. Consistency rules out any interference between the units, whereas positivity requires that each unit has a non-zero chance of receiving either treatment arm regardless of their covariates' values. It is well known that if, in addition to consistency and positivity, it also holds that $Y^a \ind A \mid \Xb$ (no unmeasured confounding), then $\psi$ can be point-identified as $\psi = \E\{ \mu_1(\Xb) - \mu_0(\Xb) \}$. In this work, we propose a sensitivity model that relaxes the no-unmeasured-confounding assumption while retaining both consistency and positivity. As a consequence of this relaxation, $\psi$ is no longer point-identified but it can still be bounded. 

Our model supposes that the observed distribution $\Pb$ is derived from a counterfactual distribution $\Qb$ of $(\Xb, A, Y^1, Y^0)$ such that
\begin{align}\label{eq:counterfactual_contamination}
\Qb = \epsilon \Qb_0 + (1 - \epsilon) \Qb_1
\end{align}
where $\Qb_0$ is a ``confounded distribution" for which $A \notind Y^a \mid \Xb$ and $\Qb_1$ is an ``unconfounded distribution" for which $A \ind Y^a \mid \Xb$. In practice, it might be useful to think of each $\Qb_i$ as potentially factoring according to $A \ind Y^a \mid \mathcal{S}_i$, where $\mathcal{S}_i$ is a set of confounding variables such that $\mathcal{S}_1$ is measured but $\mathcal{S}_0 \setminus \mathcal{S}_1 \neq \emptyset$ is not. \footnote{As pointed out by an anonymous reviewer, the mixture model \eqref{eq:counterfactual_contamination} could be generalized to $\Qb~=~ \sum_{j = 1}^J\epsilon_j\Qb_j$, where each $Q_j$ is a distribution on the counterfactuals capturing different degrees of the confounding. While richer sensitivity analyses can yield more nuanced conclusions, the large number of parameters whose plausibility range would need to be assessed ($J - 1$ in this case) may hinder their applications in many settings.}\footnote{For instance, consider the toy example above, with $X = \emptyset$ and $Y, A, U \in \{0, 1\}$ for simplicity. Suppose that $\Pb(U = 1 \mid A) = \gamma_0 + \gamma_1A$ and $\Qb_s(Y^a = 1 \mid A, U) = \alpha_1 s + (1-s)(\alpha_2 + \alpha_3U)$, for some constants $\gamma$ and $\alpha$. Then, $\E_{\Qb_0}(Y^1 - Y^0) = \E_{\Qb_1}(Y^1 - Y^0) = 0$ and $\E_{\Qb_1}(Y^1 \mid A = 1) - \E_{\Qb_1}(Y^0 \mid A = 0) = 0$, but $\E_{\Qb_0}(Y^1 \mid A = 1) - \E_{\Qb_1}(Y^0 \mid A = 0) = \alpha_3\gamma_1$, which is generally nonzero.} 

The parameter $\epsilon \in \Ec \subseteq[0, 1]$ governs the proportion of unmeasured confounding. It is unknown and not estimable but can be varied as a sensitivity parameter. Here, $\Ec$ is an interval that the user can specify. Although $\psi$ cannot be point-identified for $\epsilon > 0$, it is possible to bound it as a function of $\epsilon$. In particular, for $\epsilon = 1$, the familiar worst-case bounds are recovered. For an outcome bounded in $[0, 1]$, these bounds have width equal to 1, which means that the sign of the treatment effect is not identified. Varying the sensitivity parameter to recover different identification regions has been proposed in other works, such as \cite{richardson2014nonparametric}, \cite{kennedy2019survivor} and \cite{diaz2013sensitivity}, albeit for different targets of inference or sensitivity models. 

An equivalent formulation of our model \eqref{eq:counterfactual_contamination} is one where there is a latent selection indicator $S \in \{0, 1\}$, with $\Pb(S = 1) = 1 - \epsilon$, such that $A \notind Y^a \mid \Xb, S = 0$, but $A \ind Y^a \mid \Xb, S = 1$. The following lemma rewrites $\psi$ in terms of $S$.
\medskip
\begin{lemma}\label{general_def_ate}
	Let $\lambda_a(\Xb) = \E(Y^{a}\mid A = 1 - a, \Xb, S = 0)$. Under consistency \eqref{assumption:consistency} and positivity \eqref{assumption:positivity}, it holds that
	\begin{align*}
	\psi = \E((1-S)[\{Y - \lambda_{1-A}(\Xb)\}(2A - 1)] + S\{\E( Y \mid  A = 1, \Xb, S = 1) - \E(Y \mid  A = 0, \Xb, S = 1)\})
	\end{align*}
\end{lemma}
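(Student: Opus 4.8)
The plan is to split $\psi$ according to the latent indicator $S$ and treat the unconfounded and confounded strata separately. Since $1 = S + (1-S)$ identically, I would first write
\[
\psi = \E[S(Y^1 - Y^0)] + \E[(1-S)(Y^1 - Y^0)],
\]
and then verify that each term reproduces the corresponding piece of the claimed expression.

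For the unconfounded stratum, I would condition on $(\Xb, S)$; because the factor $S$ annihilates the $S=0$ contribution, this gives $\E[S(Y^1-Y^0)] = \E[S\,\E(Y^1-Y^0\mid\Xb, S=1)]$. Within this stratum the conditional independence $A \ind Y^a \mid \Xb, S=1$ lets me replace $\E(Y^a\mid\Xb, S=1)$ by $\E(Y^a\mid A=a, \Xb, S=1)$, and consistency \eqref{assumption:consistency} then replaces $Y^a$ by $Y$, yielding $\E(Y\mid A=a,\Xb,S=1)$. This reproduces the $S\{\E(Y\mid A=1,\Xb,S=1)-\E(Y\mid A=0,\Xb,S=1)\}$ term directly, with positivity \eqref{assumption:positivity} ensuring these conditional means are well defined.

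The main work is the confounded stratum, where the quantity $\{Y-\lambda_{1-A}(\Xb)\}(2A-1)$ must be shown to impute the contrast $Y^1-Y^0$ in conditional mean. I would argue by cases on $A$. When $A=1$, consistency gives $Y=Y^1$ and $2A-1=1$, while $\lambda_{1-A}(\Xb)=\lambda_0(\Xb)=\E(Y^0\mid A=1,\Xb,S=0)$ by definition, so the conditional mean given $(A=1,\Xb,S=0)$ equals $\E(Y^1\mid A=1,\Xb,S=0)-\E(Y^0\mid A=1,\Xb,S=0)$. When $A=0$, consistency gives $Y=Y^0$, the sign flips to $2A-1=-1$, and $\lambda_{1-A}(\Xb)=\lambda_1(\Xb)=\E(Y^1\mid A=0,\Xb,S=0)$, so the conditional mean given $(A=0,\Xb,S=0)$ again equals $\E(Y^1-Y^0\mid A=0,\Xb,S=0)$. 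In both cases the conditional mean given $(A,\Xb,S=0)$ coincides with $\E(Y^1-Y^0\mid A,\Xb,S=0)$; averaging over $A$ given $(\Xb,S=0)$ and then applying iterated expectations with the $(1-S)$ factor (which kills the $S=1$ part) yields $\E[(1-S)\{Y-\lambda_{1-A}(\Xb)\}(2A-1)] = \E[(1-S)(Y^1-Y^0)]$.

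The delicate point, and the step I would check most carefully, is the sign bookkeeping in $(2A-1)$ together with the index flip in $\lambda_{1-A}$: the construction is designed so that the observed outcome supplies whichever potential outcome is consistent with $A$, while $\lambda_{1-A}(\Xb)$ supplies the missing one, and the sign forces the difference to come out as $Y^1-Y^0$ rather than $Y^0-Y^1$ regardless of the value of $A$. Once both strata are verified, summing the two terms gives the stated identity.
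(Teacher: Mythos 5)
Your proof is correct and takes essentially the same approach as the paper's: decompose $\psi$ by $S$, handle the $S=1$ term via the conditional independence $A \ind Y^a \mid \Xb, S=1$ plus consistency, and handle the $S=0$ term by consistency together with iterated expectations given $(A,\Xb,S=0)$ so that $\lambda_{1-A}(\Xb)$ imputes the missing potential outcome. The only cosmetic difference is that the paper compresses your case analysis on $A$ into the single algebraic identity $(Y^1-Y^0)(1-S) = (1-S)\{(Y-Y^0)A + (Y^1-Y)(1-A)\}$ before taking expectations.
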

All proofs can be found in the supplementary material. As shown in Lemma \ref{general_def_ate}, $\psi$ depends on three unobservable quantities: $\lambda_0(\Xb)$, $\lambda_1(\Xb)$ and $S$. The quantity $\lambda_1(\Xb)$ ($\lambda_0(\Xb)$) represents the average outcome for those control (treated) units subject to unmeasured confounding had they taken the treatment (control) instead. Without further assumptions, the observed distribution $\Pb$ would not impose any restrictions on $\lambda_0(\Xb)$ or $\lambda_1(\Xb)$ even if $S$ was known.

For any given $\epsilon$, a sharp lower (upper) bound on $\psi$ can be obtained by minimizing (maximizing) $\psi$ in Lemma \ref{general_def_ate} over $\lambda_0(\Xb)$, $\lambda_1(\Xb)$ and $S$. Without imposing some restrictions on the distribution of $S$, the optimization step involves finding, and nonparametrically estimating, the optimal regression functions $\E(Y \mid A = a, \Xb, S = 1)$. Given a sample of $n$ observations, this step would involve fitting regression functions on ${n \choose \lceil n \epsilon \rceil}$ different sub-samples of size $\lceil n \epsilon \rceil$, which is computationally very costly even for moderate sample sizes. 

Instead, we proceed by requiring that $S \ind (Y, A) \mid \Xb$; we call the resulting sensitivity model ``$X$-mixture model". The assumption that $S \ind (Y, A) \mid \Xb$ can be interpreted in at least three ways. First, one may hope that it holds exactly for the mechanism that generated the sample. For instance, it is trivially satisfied, for example, if $S$ is just a possibly unknown, deterministic function of the observed covariates. An example satisfying this condition is given by the selected ignorability framework proposed in \cite{joffe2010selective}: if the treatment is as good as randomized conditional on hematocrit (and possibly other observed covariates), then $S$ could be an indicator of whether hematocrit is missing. 

Even if it does not hold exactly, assuming $S \ind (Y, A) \mid \Xb$ may be a close approximation to reality that one can use to make the problem computationally tractable. This second interpretation is in the same spirit as using parametric regression models in order to simplify a given problem, hoping that they will be a close approximation to the true regression function. Third, even if $S \not\ind (Y, A) \mid \Xb$, the $X$-mixture model can help determining whether a study is not robust to unmeasured confounding. Because the bounds if no assumptions are made will be at least as wide as those under $S \ind (Y, A) \mid \Xb$, if a study does not appear robust in the $X$-mixture model, it will not appear robust in the general case either. In the following theorem, we derive closed-form expressions for sharp bounds on $\psi$ in the $X$-mixture model. 
\medskip
\begin{theorem}[Bounds in $X$-mixture model] \label{theorem_xmodel}
	Suppose that assumptions \ref{assumption:consistency} and \ref{assumption:positivity} hold. Further suppose that
	\begin{align*}
	S \ind (A, Y) \mid  \Xb \tag{A1} \label{x_mixture_assumption}
	\end{align*}
	and that $\Pb( Y \in [y_{\text{min}}, y_{\text{max}}] ) = 1$, for $y_{\text{min}}, y_{\text{max}}$ finite. Choose $\delta \in [0, 1]$ such that
	\begin{align}\label{eq_bias_delta}
	L_a \equiv \delta\{y_{\text{min}} - \mu_a(\Xb) \} \leq \lambda_a(\Xb) - \mu_a(\Xb) \leq \delta\{y_{\text{max}} - \mu_a(\Xb) \} \equiv U_a \text{ with prob. 1}
	\end{align}
	for $a \in \{0, 1\}$. Then, as a function of $\epsilon$, sharp bounds on $\psi$ are:
	\begin{align*}
	\psi_l(\epsilon) & = \E\left[\mu_1(\Xb) - \mu_0(\Xb) + \one\left\{ g(\etab) \leq q_{\epsilon} \right\} g(\etab) \right] - \epsilon \delta (y_{\text{max}} - y_{\text{min}})\\
	\psi_u(\epsilon) & = \E\left[\mu_1(\Xb) - \mu_0(\Xb) + \one\left\{ g(\etab) > q_{1 - \epsilon }\right\} g(\etab) \right]
	\end{align*}
	where $g(\etab) = \pi(0\mid \Xb)U_1 - \pi(1\mid \Xb)L_0$ and $q_{\tau}$ is its $\tau$-quantile.
\end{theorem}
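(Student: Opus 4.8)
The plan is to start from the representation of $\psi$ in Lemma \ref{general_def_ate}, use \eqref{x_mixture_assumption} to eliminate all conditioning on the latent $S$, and then recast the problem of bounding $\psi$ as a constrained optimization over the unidentified objects, which I solve in two nested stages.

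First I would exploit \eqref{x_mixture_assumption}. Since $S \ind (A,Y)\mid\Xb$, conditioning on $S$ does not alter the law of $(A,Y)$ given $\Xb$; hence $\E(Y\mid A=a,\Xb,S=1)=\mu_a(\Xb)$, and for any function of $(A,Y)$ the conditional mean given $\{\Xb,S=0\}$ equals that given $\Xb$. Writing $\rho(\Xb)=\Pb(S=0\mid\Xb)$, the only restriction \eqref{x_mixture_assumption} places on this function is $0\le\rho\le1$ together with $\E\{\rho(\Xb)\}=\Pb(S=0)=\epsilon$. Taking iterated expectations in Lemma \ref{general_def_ate} and setting $\alpha_a(\Xb)=\lambda_a(\Xb)-\mu_a(\Xb)$, a short calculation using $\pi(0\mid\Xb)+\pi(1\mid\Xb)=1$ gives
\begin{align*}
\psi = \E\{\mu_1(\Xb)-\mu_0(\Xb)\} + \E\left(\rho(\Xb)\left[\pi(0\mid\Xb)\alpha_1(\Xb)-\pi(1\mid\Xb)\alpha_0(\Xb)\right]\right),
\end{align*}
and \eqref{eq_bias_delta} is precisely the statement $\alpha_a(\Xb)\in[L_a,U_a]$. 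Sharp bounds on $\psi$ are then the maximum and minimum of the right-hand side over all feasible triples $(\rho,\alpha_0,\alpha_1)$.

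Next I would solve this optimization in two steps. For fixed $\rho\ge0$ the integrand is linear in $(\alpha_0,\alpha_1)$ with coefficients $\rho\,\pi(0\mid\Xb)\ge0$ and $-\rho\,\pi(1\mid\Xb)\le0$, so the pointwise optimum is at a box corner, independently of $\rho$: for the upper bound $\alpha_1=U_1,\ \alpha_0=L_0$, which returns the bracket $g(\etab)=\pi(0\mid\Xb)U_1-\pi(1\mid\Xb)L_0$; for the lower bound $\alpha_1=L_1,\ \alpha_0=U_0$. Here I would record the algebraic identity that collapses both problems onto $g$: substituting the definitions of $L_a,U_a$ and again using $\pi(0\mid\Xb)+\pi(1\mid\Xb)=1$ yields $\pi(0\mid\Xb)L_1-\pi(1\mid\Xb)U_0 = g(\etab)-\delta(y_{\max}-y_{\min})$. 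It then remains to optimize over $\rho$ subject to $0\le\rho\le1$ and $\E\rho=\epsilon$: maximizing $\E\{\rho\, g(\etab)\}$ and minimizing $\E[\rho\{g(\etab)-\delta(y_{\max}-y_{\min})\}]$ are linear programs whose extreme-point solutions set $\rho=1$ on the set where $g$ is largest, respectively smallest, i.e.\ $\rho=\one\{g>q_{1-\epsilon}\}$ and $\rho=\one\{g\le q_{\epsilon}\}$. Plugging these in and using $\E[\one\{g\le q_\epsilon\}]=\epsilon$ reproduces the stated $\psi_u(\epsilon)$ and $\psi_l(\epsilon)$.

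The main obstacle is not the algebra but \emph{sharpness}, which has two parts. First, one must verify that the feasible set is exactly $\{(\rho,\alpha_0,\alpha_1):0\le\rho\le1,\ \E\rho=\epsilon,\ \alpha_a\in[L_a,U_a]\}$ --- that is, that \eqref{x_mixture_assumption}, the bias restriction \eqref{eq_bias_delta}, and agreement with the observed $\Pb$ impose no further constraints --- so that every feasible triple arises from a genuine counterfactual law $\Qb$ and the bounds are \emph{attained} rather than merely valid. Second is the quantile step: the extreme-point argument is immediate when $g(\etab)$ has no atom at its $\epsilon$- or $(1-\epsilon)$-quantile, but at a point mass the thresholded $\rho$ may overshoot or undershoot the mass constraint, so one must either randomize $\rho$ on the level set $\{g=q_\tau\}$ or invoke a Neyman--Pearson/fractional-knapsack duality argument to confirm optimality and enforce $\E[\one\{g\le q_\epsilon\}]=\epsilon$ with equality; I would dispatch this by the standard randomization at the threshold.
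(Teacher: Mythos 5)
Your proposal is correct and, in its skeleton, follows the same route as the paper's own proof: both use \eqref{x_mixture_assumption} to replace $\E(Y\mid \Xb, A=a, S)$ and $\Pb(A=a\mid\Xb,S)$ by $\mu_a(\Xb)$ and $\pi(a\mid\Xb)$, thereby collapsing Lemma \ref{general_def_ate} to $\psi = \E[\mu_1(\Xb)-\mu_0(\Xb) + (1-S)\{\pi(0\mid\Xb)\alpha_1(\Xb)-\pi(1\mid\Xb)\alpha_0(\Xb)\}]$; both then optimize the box constraints at corners, using exactly the identity $\pi(0\mid\Xb)L_1 - \pi(1\mid\Xb)U_0 = g(\etab)-\delta(y_{\text{max}}-y_{\text{min}})$ that makes both directions collapse onto $g(\etab)$; and both finish by placing the mass-$\epsilon$ confounded set on the upper (resp.\ lower) tail of $g(\etab)$. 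The one genuine difference is how that last step is justified: the paper simply invokes Proposition 4 of Horowitz and Manski (1995), whereas you re-derive it as a linear program over $\rho(\Xb)=\Pb(S=0\mid\Xb)$ subject to $0\le\rho\le1$ and $\E\{\rho(\Xb)\}=\epsilon$. Your parametrization is a nice touch: under \eqref{x_mixture_assumption} the joint law of $(S,\Xb,A,Y)$ is exactly determined by $(\Pb,\rho)$, so the feasible set really is the product of the box and the LP constraint set, which makes the attainability half of sharpness explicit where the paper only asserts in words that $\Pb$ and knowledge of $S$ place no restrictions on $\lambda_a(\Xb)$. Your handling of atoms is also more careful than the paper's: as you note, if $g(\etab)$ has an atom at $q_\epsilon$ or $q_{1-\epsilon}$, the thresholded $\rho$ can violate the mass constraint, the optimum requires randomization on the level set, and the sharp bound then picks up a correction term such as $q_{1-\epsilon}[\epsilon - \Pb\{g(\etab)>q_{1-\epsilon}\}]$, so the closed form in the theorem coincides with the sharp bound only in the atomless case --- a caveat the paper buries in the citation and effectively assumes away later through the margin condition (Assumption \ref{margin_condition}) and the continuity hypotheses of Theorem \ref{eps_tilde_theorem}. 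In short: same decomposition and same key identity, with the paper's cited lemma replaced by a self-contained Neyman--Pearson-type argument that buys transparency on sharpness at the cost of a few extra lines.
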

Theorem \ref{theorem_xmodel} yields the identification of sharp lower and upper bounds on $\psi$ when it is suspected that $100\epsilon \%$ of the units in the sample are confounded and it is assumed that predicting whether a unit is confounded or not cannot be improved by conditioning on $(Y, A)$. Relaxing condition \eqref{x_mixture_assumption} to $S \ind Y \mid (A, \Xb)$ poses no additional challenges and it is discussed in Appendix \ref{appendix:xa-model}. We refer to this relaxed version of the $X$-mixture model as the ``$XA$-mixture model." Notably, it covers the differential effects framework of \cite{rosenbaum2006differential}, as one could specify $S = \one(A_1 + A_2 = 1)$ for some binary treatment $A_1$ and $A_2$. 

The bounds are in terms of the parameters $\epsilon$ and $\delta$, as well as the regression functions $\pi(a \mid \Xb)$ and $\mu_a(\Xb)$, and they involve non-smooth transformations of unknown functions of $\Pb$.	The parameter $\epsilon$ is our main sensitivity parameter and controls the proportion of unmeasured confounding in the sample. Parallely, $\delta$ controls the extent of unmeasured confounding among the $S = 0$ units, as it bounds the difference between the unobservable regression $\lambda_a(\Xb)$ and the estimable regression $\mu_a(\Xb)$. Notice that \eqref{eq_bias_delta} always holds for $\delta =  1$. Setting $\delta < 1$ imposes an untestable assumption on the severity of the unmeasured confounding, which might be sensible if some knowledge on the confounding mechanism is available. Specifically, our parametrization is such that $\lambda_a(\Xb)$ can be bounded by linear combinations of $y_{\text{min}}$, $y_{\text{max}}$ and $\mu_a(\Xb)$:
\begin{align*}
\delta y_{\text{min}} + (1-\delta) \mu_a(\Xb) \leq \lambda_a(\Xb) \leq \delta y_{\text{max}} + (1-\delta) \mu_a(\Xb)
\end{align*}
Unless otherwise specified, in what follows we consider $y_{\text{min}} = 0$, $ y_{\text{max}} = 1$ and set $\delta = 1$, thus yielding
\begin{align*}
\psi_l(\epsilon) & = \E\left[\mu_1(\Xb) - \mu_0(\Xb) + \one\left\{ g(\etab) \leq q_{\epsilon} \right\} g(\etab) \right] - \epsilon \\
\psi_u(\epsilon) & = \E\left[\mu_1(\Xb) - \mu_0(\Xb) + \one\left\{ g(\etab) > q_{ 1 - \epsilon }\right\} g(\etab) \right]
\end{align*}
for $g(\etab) = \pi(0\mid \Xb)\{1-\mu_1(\Xb)\} + \pi(1\mid \Xb) \mu_0(\Xb)$.
If $Y$ is bounded, this choice does not impose any assumption since $Y$ can be rescaled to be in $[0, 1]$. If $Y$ is unbounded, Theorem \ref{theorem_xmodel} is not directly applicable, but a similar result can be derived if one is willing to assume that $| \lambda_a(\Xb) - \mu_a(\Xb)| \ \leq \delta$ for $a \in \{0, 1\}$ and $\delta < \infty$. We leave further investigation of the unbounded case as future work. We conclude this section with four remarks aiming to shed some more light on the bounds derived in Theorem \ref{theorem_xmodel}. 
\medskip
\begin{remark}
	Suppose $Y$ is bounded in $[0, 1]$ and take $\delta =  1$. The length of the bound is then
	\begin{align*}
	\Delta(\epsilon) = [\E\{g(\etab) \mid g(\etab) > q_{ 1 - \epsilon }\} - \E \{ g(\etab) \mid g(\etab) \leq q_{ \epsilon }\} + 1]\epsilon
	\end{align*}
	If $S$ was known, the length of the bound would reduce to $\Delta(\epsilon) = \epsilon$. Thus, we can view the term $[\E\{ g(\etab) \mid g(\etab) > q_{ 1 - \epsilon }\} - \E\{g(\etab) \mid g(\etab) \leq q_{ \epsilon }\}]\epsilon$ as the ``cost" of not knowing who the confounded units are. 
\end{remark}
\medskip
\begin{remark}\label{remark_robust_stat}
	The conditional independence of $S$ and $Y$ considerably simplifies the optimization step. To see this, notice that $\E(Y \mid A = a, \Xb, S = 1) = \mu_a(\Xb)$ if $S \ind Y \mid A, \Xb$. In turn, this implies that $\psi$ can be written as
	\begin{align*}
	\psi = \E[\Gamma(Y, A, \Xb) + S \{\mu_1(\Xb) - \mu_0(\Xb) - \Gamma(Y, A, \Xb)\}]
	\end{align*}
	where $\Gamma(Y, A, \Xb) = \{Y - \lambda_{1-A}(\Xb)\}(2A - 1)$. Therefore, bounds on $\psi$ can be derived from bounds on $~\E\left\{\mu_1(\Xb) - \mu_0(\Xb) - \Gamma(Y, A, \Xb) \mid S = 1\right\}$, which fits the framework studied by \cite{horowitz1995identification}. In their work, the goal is to do inference about a distribution $Q_1$ using data $Y$ such that $Y = ZY_1 + (1-Z)Y_0$, with $Z \in \{0, 1\}$ and $Y_i \sim Q_i$. They discuss two models: the ``contaminated sampling model", which assumes $Z$ to be independent of $Y_1$, and the ``corrupted sampling model", which does not make this assumption. If it is known that $\Pb(Z = 0) \leq \lambda$, they derive sharp bounds on the conditional expectation of $Y_1$ given some covariates $\Xb$ when contamination or corruption does not occur in $\Xb$. Our setup does not immediately fit this framework because corruption applies to all observed variables $(Y, A, \Xb)$. However, if $S \ind Y \mid A, \Xb$, the optimal solution for $S$ can be found by considering only the marginal distribution of the one-dimensional random variable $\mu_1(\Xb) - \mu_0(\Xb) - \Gamma(Y, A, \Xb)$. Following the terminology in \cite{horowitz1995identification}, we may view the assumption that $S \ind Y \mid A, \Xb$ as a compromise between contamination ($S \ind (Y, A , \Xb)$) and corruption (no assumption on $S$).
\end{remark}
\medskip
\begin{remark}
	As pointed out by \cite{robins2002covariance}, many interesting sensitivity analyses make use of parameters that depend on the covariates collected. In turn, this might hinder the direct comparison of studies' robustness. For example, a study where many confounders have been properly taken into account might appear more sensitive to departures from the no-unmeasured-confounding assumption than a study that failed to control for any confounder. This could happen, for instance, if the effect estimate in the former study is closer to the null value than the estimate from the latter. This apparent paradox might arise because a sensitivity analysis measures departures from a weak or strong assumption depending on whether many or few observed confounders are collected. Our proposed sensitivity analysis hinges on $\epsilon$, the proportion of unmeasured confounding, which depends on the covariates collected. As such, it might be subject to this paradox.
\end{remark}
\medskip
\begin{remark}
	Section 4  of \cite{rosenbaum1987sensitivity} contains a modification to the sensitivity analysis proposed in that paper, and briefly summarized in our introduction, that allows an unknown fraction $\beta$ of the sample to suffer from arbitrarily confounding. While conceptually similar to the approach presented in this paper, their method relies on exact matching. In fact, if units are exactly matched on observed covariates, our sensitivity model recovers Rosenbaum's with $\beta = \epsilon$ and $\Gamma = 0$. However, exact matching is often infeasible due to the presence of continuous or high-dimensional covariates. Therefore, our work can be viewed as an extension to Rosenbaum's Section 4 model to the case where units are not matched on observed covariates. 
\end{remark}
\subsection{One-number Summary of a Study's Robustness}\label{section:eps0}
In practice, one might want to report a one-number summary of how robust the estimated effect is to the number of confounded units. An example of such summary is the minimum proportion of confounded units $\epsilon_0$ such that the bounds on $\psi$ are no longer informative about the sign of the effect, i.e. that they contain zero. Larger values of $\epsilon_0$ indicate that the estimated effect is more robust to potential unmeasured confounding. Mathematically, 
\begin{align*}
\epsilon_0 = \argmin_{\epsilon \in \Ec} \one[\sgn\{\psi_l(\epsilon)\} \neq \sgn\{\psi_u(\epsilon)\}]
\end{align*}
where $\sgn(x)$ measures the sign of $x$, $\sgn(x) = -\one (x < 0) + \one(x > 0)$. Because $\psi_u(\epsilon = 1) - \psi_l(\epsilon = 1) = 1$, the minimum is guaranteed to be attained in $\Ec = [0, 1]$. Furthermore, under certain mild conditions, the bounds are continuous and strictly monotone in $\epsilon$, hence $\epsilon_0$ is generally the unique value such that $\psi_l(\epsilon_0) = 0$ or $\psi_u(\epsilon_0) = 0$. This motivates the moment condition $\psi_l(\epsilon_0) \psi_u(\epsilon_0) = 0$, which we use to construct a $Z$-estimator of $\epsilon_0$.

Other authors have proposed one-number summaries of a study's robustness to unmeasured confounding. For example, the minimum value for $\Gamma$ in Rosenbaum's framework and its extensions \citep{rosenbaum1987sensitivity, gastwirth1998dual, zhao2017sensitivity, yadlowsky2018bounds} such that the observed effect ceases to be statistically significant can be used as a summary of study's robustness to unmeasured confounding. Recently, \cite{ding2016sensitivity} and \cite{vanderweele2017sensitivity} have introduced the E-Value, which measures the minimum strength of association, on the risk ratio scale, that an unmeasured confounder would need to have with both the outcome and the treatment in order to ``explain away" the observed effect of the treatment on the outcome. In order to derive the elegant formula for the E-Value, the unobserved confounder is assumed to be associated with the treatment and with the outcome in equal magnitude. Furthermore, the derivation makes use of a bounding factor that needs to be computed for each stratum of the covariates. Computing such bounding factor when the observed covariates are continuous or high-dimensional can be problematic. Moreover, their method requires additional approximations if the outcome is not binary. On the other hand, the one-number summary proposed here does not require any further assumption other than the restriction on $S$ described above. Hence, we view these summary measures as complementary and the specific context would generally dictate which one is more appropriate.
\section{Estimation \& Inference}
\subsection{Proposed Estimators}\label{section:estimators}
There are at least two types of bias that can arise when estimating a causal effect using observational data: the bias arising from incorrectly assuming that all confounders have been collected and the statistical bias of the chosen estimator \citep{luedtke2015statistics}. In Section \ref{section:model}, we constructed a model to probe the effects of the former bias. In this section, we propose estimators that aim to minimize the latter. Our estimators of the bounds are built using the efficient influence functions (IFs) and cross-fitting. IFs play a crucial role in nonparametric efficiency theory, as the variance of the efficient IF can be considered the nonparametric counterpart of the Cramer-Rao lower bound in parametric models. Furthermore, estimators constructed using the efficient IF have favorable properties, such as doubly-robustness or second-order bias. Here, we note that $\psi_l(\epsilon)$ and $\psi_u(\epsilon)$ do not possess an influence function, as they are not pathwise differentiable. However, certain terms appearing in their expressions, such as $\E\{\mu_a(\Xb)\}$, are pathwise differentiable; as such, they can be estimated using IFs. For terms that are not pathwise differentiable we resort to plug-in estimators. We refer to \cite{bickel1993efficient}, \cite{vdvsemiparametric2002}, \cite{van2003unified}, \cite{tsiatis2007semiparametric}, \cite{chernozhukov2016double} and others for detailed accounts on IFs and their use.

To ease the notation in this section, let
\begin{align*}
\nu(\Ob; \etab) = \frac{(2A - 1)\left\{Y - \mu_A(\Xb) \right\}}{\pi(A\mid \Xb)} + \mu_1(\Xb) - \mu_0(\Xb)
\end{align*} 
denote the uncentered influence function for the parameter $\E\left\{ \mu_1(\Xb) - \mu_0(\Xb) \right\}$. Furthermore, let $\tau(\mathbf{O}; \etab)$ denote the uncentered influence function for $\E\left\{g(\etab)\right\}$:
\begin{align*}
\tau(\mathbf{O}; \etab) = \frac{(1-2A)\left\{Y - \mu_A(\Xb)\right\}}{\pi(A\mid \Xb) / \pi(1-A\mid \Xb)}+ A \mu_0(\Xb) + (1-A)\left(1 - \mu_1(\Xb)\right)
\end{align*}
and let
\begin{align*}
& \varphi_l(\Ob; \etab; q_\epsilon) = \nu(\Ob; \etab) + \one\{ g(\etab) \leq q_\epsilon \}\tau(\Ob; \etab) - \epsilon \\
& \varphi_u(\Ob; \etab; q_{1-\epsilon}) = \nu(\Ob; \etab) + \one\{ g(\etab) > q_{1-\epsilon} \}\tau(\Ob; \etab)
\end{align*}
Then, it holds that $\psi_l(\epsilon) = \E \{ \varphi_l(\Ob; \etab; q_\epsilon) \}$ and $\psi_u(\epsilon) = \E\{ \varphi_u(\Ob; \etab; q_{1-\epsilon}) \}$.

Following \cite{robins2008higher}, \cite{zheng2010asymptotic} and \cite{chernozhukov2016double} among others, we use cross-fitting to allow for arbitrarily complex estimators of the nuisance functions $\etab$ and $q_\tau$ in order to avoid empirical process conditions. Specifically, we split the data into $B$ disjoint groups of size $n / B$ and we let $K_i = k$ indicate that subject $i$ is split into group $k$, for $k \in \{1, \ldots, B\}$. Notice that it is not required that the groups have equal size, for example each $K_i$ could be drawn uniformly from $\{1, \ldots, B\}$. For simplicity, we proceed with having equal-size groups. We let $\Pn$ denote the empirical measure as $\Pn\left\{f(\mathbf{O})\right\} = \frac{1}{n}\sum_{i = 1}^n f(\mathbf{O}_i)$ and $\Pn^k$ denote the sub-empirical measure as $\Pn^k\left\{f(\mathbf{O})\right\} = \sum_{i = 1}^n f(\mathbf{O}_i) \one(K_i = k) / \sum_{i = 1}^n \one(K_i = k)$. In addition, we let $\widehat{\etab}_{-k}$ denote the estimator of $\etab$ computed without using observations from fold $K = k$ and $\widehat{q}_{\tau, -k}$ denote the estimator of $q_\tau$ equal to the empirical quantile of $g(\widehat{\etab}_{-k})$ solving $\Pn^k[\one\{ g(\widehat{\etab}_{-k}) \leq \widehat{q}_{\tau, -k} \}] = \tau + o_\Pb(n^{-1/2})$. Then, we estimate the bounds as
\begin{align*}
\widehat{\psi}_l(\epsilon) & = \frac{1}{B}\sum_{k = 1}^B \Pn^k [\nu(\Ob; \widehat{\etab}_{-k}) + \one\{g(\widehat{\etab}_{-k}) \leq \widehat{q}_{\epsilon, -k}\}\tau(\mathbf{O}; \widehat{\etab}_{-k})] - \epsilon \nonumber  \equiv \Pn\left\{ \varphi_l(\mathbf{O}; \widehat{\etab}_{-K}, \widehat{q}_{-K, \epsilon}) \right\} \\
\widehat{\psi}_u(\epsilon) & = \frac{1}{B}\sum_{k = 1}^B \Pn^k [\nu(\Ob; \widehat{\etab}_{-k}) + \one\{g(\widehat{\etab}_{-k}) > \widehat{q}_{1-\epsilon, -k}\} \tau(\mathbf{O}; \widehat{\etab}_{-k})]\equiv \Pn\left\{ \varphi_u(\mathbf{O}; \widehat{\etab}_{-K}, \widehat{q}_{-K, 1-\epsilon}) \right\} 
\end{align*}
The computation of the estimators above is straightforward as it amounts to fitting regression functions on $B-1$ subsets of the data and evaluate the estimated functions at the values of the covariates on the corresponding test set. The use of cross-fitting lends itself naturally to the use of parallel computing as one can estimate the regression functions on different subsets of the data simultaneously. We incorporate this possibility in our implementation of the methods in \texttt{R}. Moreover, it is worth noting that cross-fitting does not discard any data point in the estimation step, since each observation is used twice without overfitting: once for estimating the regression functions and once for estimating the expectation operator. In addition, because we are working under a fully nonparametric model, there exists only one influence function; therefore, our estimators of the pathwise differentiable terms are efficient in the sense that they  asymptotically achieve the semiparametric efficiency bound. 

Finally, while the estimators of the bounds discussed in this section have several attractive properties in terms of computational tractability and convergence rates, they might not be monotone in $\epsilon$ in finite samples. To remedy this, the estimators can be ``rearranged" using the procedure described in \cite{chernozhukov2009improving}. We apply this procedure in Section \ref{section:illustrations}, although we find that the original, non-rearranged estimators achieve low bias and nominal uniform coverage as well. 
\subsection{Establishing Weak Convergence} \label{weak_convergece}
To state asymptotic guarantees for the proposed estimators, we first make the following technical assumption:
\medskip
\begin{assumption}[Margin Condition] \label{margin_condition}
	The random variable $g(\etab)$ has absolutely continuous CDF and there exists $\alpha > 0$ such that for all $t > 0$ and $\tau \in \Ec$, it holds that $\Pb \left( \left|  g\left(\etab\right) - q_\tau \right| \leq t  \right) \lesssim t^\alpha$ and $\Pb \left( \left|  g\left(\etab\right) - q_{1-\tau} \right| \leq t  \right) \lesssim t^\alpha$. 
\end{assumption}
Assumption \ref{margin_condition} requires that there is not too much mass around any $\epsilon$-quantile or ($1-\epsilon$)-quantile of $g(\etab)$, for $\epsilon \in \Ec$. It is essentially equivalent to the margin condition used in classification problems \citep{audibert2007fast}, optimal treatment regime settings \citep{luedtke2016statistical, van2014targeted}, and other problems involving estimation of non-smooth functionals \citep{kennedy2018sharp, kennedy2019survivor}. Notably it is satisfied for $\alpha = 1$ if, for instance, the density of $g(\etab)$ is bounded on $\Ec$. We give the main convergence theorem for $\widehat{\psi}_u(\epsilon)$. A similar statement holds for $\widehat{\psi}_l(\epsilon)$.
\medskip
\begin{theorem}\label{gp_convergence}
	Let 
	\begin{align*}
	\widehat{\sigma}^2_u(\epsilon) = \Pn \{( \varphi_u(\Ob; \widehat{\etab}_{-K}, \widehat{q}_{1-\epsilon, -K}) - \widehat{\psi}_u(\epsilon) - \widehat{q}_{1-\epsilon, -K}[ \one\{g(\widehat{\etab}_{-K}) > \widehat{q}_{1-\epsilon, -K} \}-  \epsilon ])^2\}
	\end{align*}
	be the estimator of the variance function 
	\begin{align*}
	\sigma^2_u(\epsilon) = \E \{( \varphi_u(\Ob; \etab, q_{1-\epsilon}) - \psi_u(\epsilon) -q_{1-\epsilon}[\one\{g(\etab) > q_{1-\epsilon} \} - \epsilon] )^2\}
	\end{align*}
	If assumptions \ref{assumption:consistency}, \ref{assumption:positivity} and \ref{margin_condition} hold, and the following conditions also hold:
	\begin{enumerate}
		\item \label{gp_convergence_condpos} $\Pb \left\{t \leq \widehat{\pi}(a\mid \mathbf{X}) \leq 1 - t \right\} = 1$ for $a = 0, 1$ and some $t > 0$. 
		\item \label{gp_convergence_condvar} $\sup_{\epsilon \in \Ec} \left| \frac{\widehat{\sigma}_u(\epsilon)}{\sigma_u(\epsilon)} - 1 \right| = o_\Pb(1)$. 
		\item  \label{gp_convergence_cond_envelope} $\| \sup_{\epsilon \in \Ec} \ | \varphi_u(\mathbf{o}; \widehat{\etab}, \widehat{q}_{1-\epsilon}) - \varphi_u(\mathbf{o}; \etab, q_{1-\epsilon}) - q_{1-\epsilon}[\one\{g(\widehat{\etab}) > \widehat{q}_{1-\epsilon} \} - \one\{g(\etab) > q_{1-\epsilon} \}] | \| \ = o_\Pb(1)$.
		\item \label{gp_convergence_condLinf} $\left( \left\|  g(\widehat{\etab}) - g\left(\etab\right) \right\|_\infty + \sup_{\epsilon \in \Ec} \left|  \widehat{q}_{1-\epsilon} - q_{1-\epsilon} \right|  \right)^{1+\alpha} = o_\Pb(n^{-1/2})$, for $\alpha$ satisfying assumption \ref{margin_condition}. 
		\item \label{gp_convergence_condL2} $\left\| \widehat{\pi}(1\mid \mathbf{X}) - \pi(1\mid \mathbf{X}) \right\| \max_a \|  \widehat{\mu}_a(\mathbf{X}) - \mu_a(\mathbf{X}) \| \ = o_\Pb(n^{-1/2})$.
	\end{enumerate}
	Then $\sqrt{n}\{\widehat{\psi}_u(\epsilon) - \psi_u(\epsilon)\}/\widehat{\sigma}_u(\epsilon) \indist \Gb(\epsilon)$ in $\ell^{\infty}(\Ec)$, with $\Ec \subseteq [0, 1]$, where $\Gb(\cdot)$ is a mean-zero Gaussian process with covariance $\E\left\{\Gb(\epsilon_1)\Gb(\epsilon_2)\right\} = \E\left\{\phi_u(\Ob; \etab, q_{1-\epsilon_1})\phi_u(\Ob; \etab, q_{1-\epsilon_2})\right\}$ and 
	\begin{align*}
	\phi_u(\mathbf{O}; \etab, q_{1-\epsilon}) = \frac{\varphi_u(\Ob; \etab, q_{1-\epsilon}) - \psi_u(\epsilon) -  q_{1-\epsilon}[\one\{g(\etab) > q_{1-\epsilon}\} - \epsilon]}{\sigma_u(\epsilon)}.
	\end{align*}
\end{theorem}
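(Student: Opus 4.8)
The plan is to prove a \emph{uniform} (in $\epsilon\in\Ec$) asymptotically linear representation for $\widehat{\psi}_u(\epsilon)$, to invoke a functional central limit theorem for the resulting $\epsilon$-indexed influence-function class, and to finish by self-normalizing with $\widehat\sigma_u$. Writing $\Gn=\sqrt{n}(\Pn-\Pb)$ and introducing the corrected influence function
\[
\varphi_u^{\circ}(\Ob;\epsilon)=\varphi_u(\Ob;\etab,q_{1-\epsilon})-q_{1-\epsilon}[\one\{g(\etab)>q_{1-\epsilon}\}-\epsilon],
\]
so that $\E\{\varphi_u^{\circ}(\Ob;\epsilon)\}=\psi_u(\epsilon)$, $\sigma_u^2(\epsilon)=\var\{\varphi_u^{\circ}(\Ob;\epsilon)\}$, and $\phi_u=\{\varphi_u^{\circ}-\psi_u\}/\sigma_u$, the goal of the first stage is to show
\[
\sqrt{n}\{\widehat{\psi}_u(\epsilon)-\psi_u(\epsilon)\}=\Gn\{\varphi_u^{\circ}(\Ob;\epsilon)\}+o_\Pb(1)\quad\text{uniformly over }\epsilon\in\Ec .
\]
Granting this, once the class $\{\varphi_u^{\circ}(\cdot;\epsilon):\epsilon\in\Ec\}$ is shown to be $\Pb$-Donsker, $\Gn\{\varphi_u^{\circ}(\cdot;\epsilon)\}$ converges weakly in $\ell^{\infty}(\Ec)$ to a tight mean-zero Gaussian process with covariance $\cov\{\varphi_u^{\circ}(\Ob;\epsilon_1),\varphi_u^{\circ}(\Ob;\epsilon_2)\}$; dividing by $\widehat\sigma_u(\epsilon)$, using condition \ref{gp_convergence_condvar} and $\inf_{\epsilon\in\Ec}\sigma_u(\epsilon)>0$ (which follows from Assumption \ref{assumption:positivity} and boundedness of $Y$), a uniform Slutsky argument delivers the stated limit: a mean-zero Gaussian process $\Gb(\epsilon)$ with covariance $\E\{\phi_u(\Ob;\etab,q_{1-\epsilon_1})\phi_u(\Ob;\etab,q_{1-\epsilon_2})\}$.

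For the linear representation I would use the cross-fitting decomposition
\[
\widehat{\psi}_u(\epsilon)-\psi_u(\epsilon)=(\Pn-\Pb)\,\varphi_u(\Ob;\etab,q_{1-\epsilon})+T_1(\epsilon)+T_2(\epsilon),
\]
where $T_1(\epsilon)=(\Pn-\Pb)\{\varphi_u(\Ob;\widehat\etab_{-K},\widehat q_{1-\epsilon,-K})-\varphi_u(\Ob;\etab,q_{1-\epsilon})\}$ is the empirical-process remainder and $T_2(\epsilon)=\Pb\{\varphi_u(\Ob;\widehat\etab_{-K},\widehat q_{1-\epsilon,-K})-\varphi_u(\Ob;\etab,q_{1-\epsilon})\}$ the conditional bias. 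Because $\widehat\etab_{-k}$ and $\widehat q_{1-\epsilon,-k}$ are independent of fold $k$, conditioning on the training folds renders $T_1$ a sum of centered terms whose conditional second moment is controlled by the $L_2(\Pb)$ object in condition \ref{gp_convergence_cond_envelope}; this is precisely why that condition subtracts the threshold-indicator difference, and it yields $\sqrt{n}\,\sup_{\epsilon}|T_1(\epsilon)|=o_\Pb(1)$.

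The crux is $T_2$. I would split it into the nuisance bias of the $\nu$ and $\tau$ pieces and the threshold-change term. The former, after iterated expectations, reduces to products of the errors of $\widehat\pi$ and $\widehat\mu_a$ by the standard doubly-robust structure and is $o_\Pb(n^{-1/2})$ by condition \ref{gp_convergence_condL2}. For the latter, since the indicators are $\Xb$-measurable, iterated expectations give $\Pb\{[\one\{g(\widehat\etab)>\widehat q_{1-\epsilon,-K}\}-\one\{g(\etab)>q_{1-\epsilon}\}]\,g(\etab)\}$; writing $g(\etab)=q_{1-\epsilon}+\{g(\etab)-q_{1-\epsilon}\}$ and noting that on the disagreement event $|g(\etab)-q_{1-\epsilon}|\lesssim\|g(\widehat\etab)-g(\etab)\|_\infty+|\widehat q_{1-\epsilon,-K}-q_{1-\epsilon}|$, Assumption \ref{margin_condition} bounds the measure of that event by the same quantity to the power $\alpha$, so the $\{g-q\}$ part is of order (nuisance error)$^{1+\alpha}=o_\Pb(n^{-1/2})$ by condition \ref{gp_convergence_condLinf}. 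The surviving piece is the first-order threshold effect $q_{1-\epsilon}\{\Pb(g(\etab)>\widehat q_{1-\epsilon,-K})-\epsilon\}\approx -q_{1-\epsilon}f(q_{1-\epsilon})(\widehat q_{1-\epsilon,-K}-q_{1-\epsilon})$, where $f$ is the density of $g(\etab)$; substituting the quantile's linear representation $\widehat q_{1-\epsilon}-q_{1-\epsilon}\approx(\Pn-\Pb)\{[\one\{g(\etab)>q_{1-\epsilon}\}-\epsilon]/f(q_{1-\epsilon})\}$, in which $f(q_{1-\epsilon})$ cancels, turns this contribution into $-(\Pn-\Pb)\{q_{1-\epsilon}[\one\{g(\etab)>q_{1-\epsilon}\}-\epsilon]\}$, which combines with the leading term $(\Pn-\Pb)\varphi_u$ to yield exactly $\Gn\{\varphi_u^{\circ}\}$.

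The main obstacle is executing the $T_2$ analysis uniformly over $\epsilon\in\Ec$ while simultaneously taming the non-smooth indicator through the margin condition and extracting the quantile's first-order contribution: the estimated threshold $\widehat q_{1-\epsilon,-K}$ varies with $\epsilon$, so both the disagreement-set bound and the quantile expansion must hold uniformly, and the cross-fitting bookkeeping (the quantile is estimated on the complementary folds) must be tracked when these empirical-process pieces are aggregated over the full sample. A secondary point is the Donsker property of $\{\varphi_u^{\circ}(\cdot;\epsilon):\epsilon\in\Ec\}$: the indicators $\{\one\{g(\etab)>q_{1-\epsilon}\}\}_{\epsilon}$ form a monotone (hence VC-type) family, the remaining factors are uniformly bounded under positivity and boundedness of $Y$ and depend continuously on $\epsilon$, so the class has a square-integrable envelope and is $\Pb$-Donsker, supplying both asymptotic tightness and Gaussian finite-dimensional limits. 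Condition \ref{gp_convergence_condvar} then completes the self-normalization.
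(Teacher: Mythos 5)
Your overall architecture matches the paper's proof: the cross-fitting decomposition into an empirical-process remainder and a conditional-bias term, the Donsker argument built on monotone-in-$\epsilon$ indicator classes, the doubly-robust bound for the $\nu$/$\tau$ bias via condition \ref{gp_convergence_condL2}, and the margin-condition treatment of the indicator disagreement via condition \ref{gp_convergence_condLinf}. The gap is in the single most delicate step, where you extract the correction term $-q_{1-\epsilon}[\one\{g(\etab)>q_{1-\epsilon}\}-\epsilon]$ by a delta-method expansion $\Pb\{g(\etab)>\widehat q_{1-\epsilon,-K}\}-\epsilon\approx -f(q_{1-\epsilon})(\widehat q_{1-\epsilon,-K}-q_{1-\epsilon})$ followed by a Bahadur representation for $\widehat q_{1-\epsilon,-K}$. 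Neither step is available under the theorem's hypotheses. The margin condition (Assumption \ref{margin_condition}) is only an \emph{upper} bound on the mass of $g(\etab)$ near its quantiles: it neither implies that a density $f$ exists nor that it is positive at $q_{1-\epsilon}$; indeed it is easiest to satisfy when $f(q_{1-\epsilon})=0$, in which case your linearization degenerates and the Bahadur representation (which divides by $f(q_{1-\epsilon})$) is undefined, yet the theorem still applies. Moreover, $\widehat q_{1-\epsilon,-K}$ is a quantile of the \emph{estimated} values $g(\widehat\etab)$, not of $g(\etab)$: the representation you invoke omits the drift induced by $\|g(\widehat\etab)-g(\etab)\|_\infty$, and your preceding silent replacement of $\Pb\{g(\widehat\etab)>\widehat q_{1-\epsilon,-K}\}$ by $\Pb\{g(\etab)>\widehat q_{1-\epsilon,-K}\}$ carries an error of order $(\|g(\widehat\etab)-g(\etab)\|_\infty+|\widehat q_{1-\epsilon}-q_{1-\epsilon}|)^\alpha$, which condition \ref{gp_convergence_condLinf} only makes $o_\Pb(n^{-\alpha/(2(1+\alpha))})$ — not $o_\Pb(n^{-1/2})$. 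These two dropped terms happen to cancel, but exhibiting that cancellation is precisely the content of the step you skipped.

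The paper sidesteps all of this with an exact algebraic identity rather than an expansion. Writing $\lambda_{1-\epsilon}=\one\{g(\etab)>q_{1-\epsilon}\}$ and $\widehat\lambda_{-k,1-\epsilon}=\one\{g(\widehat\etab_{-k})>\widehat q_{1-\epsilon,-k}\}$, the defining property of the empirical quantile gives $\Pn^k(\widehat\lambda_{-k,1-\epsilon})=\epsilon=\Pb(\lambda_{1-\epsilon})$ (the second equality uses only continuity of the CDF at $q_{1-\epsilon}$, which the margin condition supplies). Subtracting yields $\Pb(\widehat\lambda_{-k,1-\epsilon}-\lambda_{1-\epsilon})=-(\Pn^k-\Pb)(\widehat\lambda_{-k,1-\epsilon}-\lambda_{1-\epsilon})-(\Pn^k-\Pb)(\lambda_{1-\epsilon})$, so the quantile-estimation bias is converted — with no density, no linearization, and no assumption on how fast $\widehat q$ converges beyond condition \ref{gp_convergence_condLinf} — into a centered empirical-process term with vanishing conditional variance (negligible) plus exactly the term $-(\Pn^k-\Pb)(q_{1-\epsilon}\lambda_{1-\epsilon})$ that produces $\phi_u$. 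Replacing your delta-method/Bahadur step with this identity (which is also why condition \ref{gp_convergence_cond_envelope} subtracts the indicator difference, a bookkeeping detail your $T_1$ does not track) repairs the proof; as written, your key step fails on distributions the theorem is designed to cover.
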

Theorem \ref{gp_convergence} gives sufficient conditions so that the estimated curves tracing the lower and upper bounds as a function of $\epsilon$ converge to a Gaussian process. In turn, this enables the computation of confidence bands trapping the average treatment effect with any desired confidence level uniformly over $\epsilon$. The first three conditions of the theorem are quite mild. Condition \ref{gp_convergence_condpos} is a positivity condition requiring that the estimator of the propensity score is bounded away from 0 and 1. Condition \ref{gp_convergence_condvar} requires uniform consistency of the variance estimator at any rate. Condition \ref{gp_convergence_cond_envelope} holds if, in addition to satisfying the margin assumption \ref{margin_condition}, $g(\widehat{\etab})$ and $\widehat{q}_\tau$ converge uniformly, in $\xb$ and $\epsilon$ respectively, to the truth at any rate. 

The key assumptions are conditions \ref{gp_convergence_condLinf} and \ref{gp_convergence_condL2}. While more restrictive than the first three, these conditions can be satisfied even if flexible machine learning tools are used. In fact, condition \ref{gp_convergence_condL2} only requires that the product of the $L_2$ errors in estimating $\pi(a \mid \Xb)$ and $\mu_a(\Xb)$ is of order $n^{-1/2}$, which means that, for example, each regression function can be estimated at the slower rate $n^{-1/4}$. A rate of convergence in $L_\infty$ norm of order $n^{-1/4}$ is also sufficient to satisfy condition \ref{gp_convergence_condLinf} if the density of $g(\etab)$ is bounded because the margin assumption \ref{margin_condition} would hold for $\alpha = 1$. A convergence rate of order $n^{-1/4}$ can be achieved if nonparametric smoothness, sparsity or other structural assumptions are imposed on the true regression functions. For instance, if a minimax optimal estimator is used, in order to satisfy condition \ref{gp_convergence_condL2}, it is sufficient that the underlying regression functions belong to a $\beta$-H\"older class with smoothness parameter $\beta > p/2$, where $p$ is the number of covariates. In addition, even in regimes of very large $p$, convergence at $n^{-1/4}$ rate can be achieved under structural assumptions such as additivity or sparsity \citep{horowitz2009semiparametric, raskutti2012minimax, farrell2015robust, yang2015minimax, kandasamy2016additive}. Furthermore, such convergence rate can also be achieved if the regression functions belong to the class of cadlag functions with bounded variation norm \citep{benkeser2016highly, van2017generally}. We refer to \cite{gyorfi2006distribution} among others for additional convergence results. 

Similarly to \cite{kennedy2018nonparametric}, we can use Theorem \ref{gp_convergence} and the multiplier bootstrap to construct uniform confidence bands covering the identification region $[\psi_l(\epsilon), \psi_u(\epsilon)]$. Placing $(1-\alpha/2)$ uniform confidence bands on each curve also yields a (conservative) $(1-\alpha)$ uniform confidence band for $\psi$. We also deploy the procedure of \cite{imbensmanskyci} to construct bands covering just $\psi$ that are valid pointwise. Details are provided in Appendix \ref{section:uniform_bands}. Constructing uniformly valid bands covering $\psi$, as opposed to the whole identification region, is left for future research. 
\subsection{Estimation of the One-Number Summary $\epsilon_0$}\label{section:est_eps0}	
In our settings, a natural way to define $\epsilon_0$ is via the moment condition $\psi_l(\epsilon_0) \psi_u(\epsilon_0) = 0$ and construct an estimator $\widehat{\epsilon}_0$ defined implicitly as the solution to the empirical moment condition 
\begin{align*}
\Pn\{ \varphi_l(\mathbf{O}; \widehat{\etab}_{-K}, \widehat{q}_{\widehat{\epsilon}_0, -K})\} \Pn\{ \varphi_u(\mathbf{O}; \widehat{\etab}_{-K}, \widehat{q}_{1-\widehat{\epsilon}_0, -K}) \} = o_\Pb(n^{-1/2}).
\end{align*}
Standard results in $Z$-estimation theory (Theorem 3.3.1 in \cite{van1996weak}) yield the following theorem.
\medskip
\begin{theorem}\label{eps_tilde_theorem}
	Suppose that the CDF $G$ of $g(\etab)$ is strictly increasing in neighborhoods of $q_{\epsilon_0}$ and $q_{1-\epsilon_0}$. Suppose assumptions \ref{assumption:consistency}, \ref{assumption:positivity}, \ref{margin_condition} and conditions 1, 3, 4, 5 (and 3's and 4's counterpart for the lower bound) from Theorem \ref{gp_convergence} are satisfied with $\Ec = [0, 1]$.	Then
	\begin{align*}
	\sqrt{n} \left( \widehat{\epsilon}_0 - \epsilon_0 \right) \indist N\left(0, \left[\psi_u(\epsilon_0)(q_{\epsilon_0} - 1) + \psi_l(\epsilon_0) q_{1-\epsilon_0}\right]^{-2} \text{var}\left\{ \tilde{\varphi} (\epsilon_0) \right\}\right)
	\end{align*}
	provided that the denominator $\psi_u(\epsilon_0)(q_{\epsilon_0} - 1) + \psi_l(\epsilon_0)q_{1-\epsilon_0} \neq 0$, and where the unscaled influence function is
	\begin{align*}
	\tilde{\varphi}(\epsilon_0) = \psi_u(\epsilon_0)[\varphi_l(\Ob; \etab, q_{\epsilon_0}) - q_{\epsilon_0} \one\{g(\etab) \leq q_{\epsilon_0}\}] + \psi_l(\epsilon_0)[\varphi_u(\Ob; \etab, q_{1-\epsilon_0}) - q_{1-\epsilon_0}  \one\{g(\etab) > q_{1-\epsilon_0}\}].
	\end{align*}
\end{theorem}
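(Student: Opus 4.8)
The plan is to view $\widehat{\epsilon}_0$ as an approximate $Z$-estimator for the scalar moment function $\Psi(\epsilon) = \psi_l(\epsilon)\psi_u(\epsilon)$, whose relevant zero is $\epsilon_0$, and then to verify the hypotheses of Theorem 3.3.1 in \cite{van1996weak}. Writing $\widehat{\Psi}_n(\epsilon) = \widehat{\psi}_l(\epsilon)\widehat{\psi}_u(\epsilon)$, the defining moment condition states that $\widehat{\Psi}_n(\widehat{\epsilon}_0) = o_\Pb(n^{-1/2})$. The three ingredients I would assemble are: (i) consistency $\widehat{\epsilon}_0 \inprob \epsilon_0$; (ii) differentiability of $\epsilon \mapsto \Psi(\epsilon)$ at $\epsilon_0$ with nonzero derivative $\Psi'(\epsilon_0)$; and (iii) an asymptotically linear expansion of $\sqrt{n}\,\widehat{\Psi}_n(\epsilon_0)$ together with stochastic equicontinuity of $\epsilon \mapsto \sqrt{n}\{\widehat{\Psi}_n(\epsilon) - \Psi(\epsilon)\}$, so that the expansion at the fixed point $\epsilon_0$ transfers to the random point $\widehat{\epsilon}_0$.

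For (ii), computing $\Psi'(\epsilon_0)$ is what identifies the denominator. Since $q_\tau = G^{-1}(\tau)$ with $G$ absolutely continuous and strictly increasing near $q_{\epsilon_0}$ and $q_{1-\epsilon_0}$, I would differentiate under the integral sign in $\psi_l(\epsilon) = \E[\mu_1(\Xb) - \mu_0(\Xb) + \one\{g(\etab) \leq q_\epsilon\}g(\etab)] - \epsilon$ and $\psi_u(\epsilon) = \E[\mu_1(\Xb) - \mu_0(\Xb) + \one\{g(\etab) > q_{1-\epsilon}\}g(\etab)]$, using $dq_\tau/d\tau = 1/G'(q_\tau)$. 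The density factors cancel, leaving $\psi_l'(\epsilon) = q_\epsilon - 1$ and $\psi_u'(\epsilon) = q_{1-\epsilon}$. The product rule then gives $\Psi'(\epsilon_0) = \psi_u(\epsilon_0)(q_{\epsilon_0} - 1) + \psi_l(\epsilon_0)q_{1-\epsilon_0}$, exactly the quantity assumed nonzero in the statement.

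For (iii), I would expand the product around the zero $\Psi(\epsilon_0) = 0$:
\[
\widehat{\Psi}_n(\epsilon_0) = \psi_u(\epsilon_0)\{\widehat{\psi}_l(\epsilon_0) - \psi_l(\epsilon_0)\} + \psi_l(\epsilon_0)\{\widehat{\psi}_u(\epsilon_0) - \psi_u(\epsilon_0)\} + \{\widehat{\psi}_l(\epsilon_0) - \psi_l(\epsilon_0)\}\{\widehat{\psi}_u(\epsilon_0) - \psi_u(\epsilon_0)\}.
\]
Each first-order factor is $O_\Pb(n^{-1/2})$ by Theorem \ref{gp_convergence} and its lower-bound analogue, so the cross term is $O_\Pb(n^{-1})$ and negligible at the $\sqrt{n}$ scale. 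Those same theorems supply the influence-function representations of $\sqrt{n}\{\widehat{\psi}_u(\epsilon_0) - \psi_u(\epsilon_0)\}$ and $\sqrt{n}\{\widehat{\psi}_l(\epsilon_0) - \psi_l(\epsilon_0)\}$, whose summands are centered versions of $\varphi_u(\Ob; \etab, q_{1-\epsilon_0}) - q_{1-\epsilon_0}\one\{g(\etab) > q_{1-\epsilon_0}\}$ and $\varphi_l(\Ob; \etab, q_{\epsilon_0}) - q_{\epsilon_0}\one\{g(\etab) \leq q_{\epsilon_0}\}$, where the quantile-estimation corrections $-q_{1-\epsilon_0}[\one\{g(\etab) > q_{1-\epsilon_0}\} - \epsilon_0]$ and $-q_{\epsilon_0}[\one\{g(\etab) \leq q_{\epsilon_0}\} - \epsilon_0]$ are precisely those appearing in the variance functions of Theorem \ref{gp_convergence}. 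Substituting and collecting terms gives $\sqrt{n}\,\widehat{\Psi}_n(\epsilon_0)$ equal to the empirical average of the mean-zero influence function $\tilde{\varphi}(\epsilon_0)$, up to additive constants that do not affect the variance; by the central limit theorem $\sqrt{n}\,\widehat{\Psi}_n(\epsilon_0) \indist N(0, \var\{\tilde{\varphi}(\epsilon_0)\})$, and dividing by $-\Psi'(\epsilon_0)$ via the $Z$-estimation master theorem yields the stated limit.

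The main obstacle is the stochastic-equicontinuity requirement in (iii), which is what legitimizes evaluating the expansion at $\epsilon_0$ rather than at $\widehat{\epsilon}_0$. Because the indicators $\one\{g(\etab) \leq q_\epsilon\}$ render $\epsilon \mapsto \widehat{\Psi}_n(\epsilon)$ nondifferentiable sample-path-wise, a naive Taylor argument in $\epsilon$ fails. Instead I would rely on the uniform-in-$\epsilon$ weak convergence already established in Theorem \ref{gp_convergence} (invoking conditions 1, 3, 4, 5 and their lower-bound counterparts, together with the margin condition in Assumption \ref{margin_condition}), which delivers asymptotic tightness of the bound processes in $\ell^{\infty}(\Ec)$; affine combinations and products of asymptotically tight processes remain tight, giving equicontinuity of $\sqrt{n}\{\widehat{\Psi}_n - \Psi\}$ near $\epsilon_0$. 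Consistency of $\widehat{\epsilon}_0$ in (i) then follows from the well-separated-zero structure guaranteed by $G$ being strictly increasing around the two quantiles together with $\Psi'(\epsilon_0) \neq 0$. With these in hand, the conditions of Theorem 3.3.1 in \cite{van1996weak} are met and the argument closes.
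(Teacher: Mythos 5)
Your overall architecture matches the paper's: both treat $\widehat{\epsilon}_0$ as an approximate $Z$-estimator for $\Psi(\epsilon)=\psi_l(\epsilon)\psi_u(\epsilon)$ and verify the hypotheses of Theorem 3.3.1 in \cite{van1996weak}; your derivative computation $\Psi'(\epsilon_0)=\psi_u(\epsilon_0)(q_{\epsilon_0}-1)+\psi_l(\epsilon_0)q_{1-\epsilon_0}$ and your identification of $\tilde{\varphi}(\epsilon_0)$ up to variance-irrelevant additive constants reproduce the paper's Statements 3 and 1 (your product expansion of $\widehat{\psi}_l\widehat{\psi}_u$ is in fact cleaner than the paper's fold-wise decomposition). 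The genuine gap is in the step you yourself call the main obstacle: stochastic equicontinuity. Asymptotic tightness of the bound processes in $\ell^{\infty}(\Ec)$ yields asymptotic equicontinuity only with respect to the \emph{intrinsic} semimetric $\rho(\epsilon_1,\epsilon_2)=\|\tilde{\varphi}(\epsilon_1)-\tilde{\varphi}(\epsilon_2)\|_{L_2(\Pb)}$, not with respect to Euclidean distance on $\Ec$. To evaluate the expansion at the random point $\widehat{\epsilon}_0$, which converges to $\epsilon_0$ in Euclidean distance, you must additionally show that $|\epsilon_n-\epsilon_0|\to 0$ forces $\rho(\epsilon_n,\epsilon_0)\to 0$, i.e.\ that $\Pb\{\tilde{\varphi}(\epsilon_n)-\tilde{\varphi}(\epsilon_0)\}^2\to 0$. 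This is not automatic, because the indicator terms $\one\{g(\etab)\leq q_{\epsilon_n}\}$ can fail to converge in $L_2(\Pb)$ if $G$ has atoms or flat pieces. The paper devotes its entire Statement 2 to this point: it invokes Lemma 3.3.5 of \cite{van1996weak} and bounds quantities such as $\Pb\left|\one\{g(\etab)\leq q_{\epsilon_n}\}-\one\{g(\etab)\leq q_{\epsilon_0}\}\right|\lesssim |q_{\epsilon_n}-q_{\epsilon_0}|^{\alpha}$ using the indicator-comparison lemma, the margin condition, and continuity of the quantile map near $\epsilon_0$ and $1-\epsilon_0$. In your proposal the margin condition appears only as an imported hypothesis of Theorem \ref{gp_convergence}, never as the tool that converts Euclidean convergence into $L_2$ convergence of the influence functions, so the inference from tightness to ``equicontinuity near $\epsilon_0$'' does not close as written.

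A second, smaller gap is consistency. You assert that $\widehat{\epsilon}_0\inprob\epsilon_0$ follows from the well-separated-zero structure ``guaranteed by $G$ being strictly increasing around the two quantiles together with $\Psi'(\epsilon_0)\neq 0$,'' but those are local conditions, while consistency of a $Z$-estimator over $\Ec=[0,1]$ requires ruling out zeros of $\Psi$ far from $\epsilon_0$. The paper does this globally: absolute continuity of $G$ makes $\psi_l$ strictly decreasing and $\psi_u$ strictly increasing on all of $[0,1]$, so the zero of the product is unique; combined with uniform consistency $\|\widehat{\Psi}_n-\Psi\|_\Ec=o_\Pb(1)$ (Glivenko--Cantelli plus the negligibility bounds carried over from Theorem \ref{gp_convergence}) and Theorem 2.10 of \cite{kosorok2008introduction}, this delivers $\widehat{\epsilon}_0\inprob\epsilon_0$. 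Both gaps are fillable with ingredients already present in your hypotheses, but neither is filled by the argument you give.
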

Theorem \ref{eps_tilde_theorem} describes sufficient conditions so that $\widehat{\epsilon}_0$ is $\sqrt{n}$-consistent and asymptotically normally distributed. We require the same conditions as the ones required for Theorem \ref{gp_convergence}, plus that the CDF of $g(\etab)$ is strictly increasing in neighborhoods of $q_{\epsilon_0}$ and $q_{1-\epsilon_0}$. The asymptotic normality of $\widehat{\epsilon}_0$ relies on the existence (and non-singularity) of the derivative of the map $\epsilon \mapsto \psi_l(\epsilon)\psi_u(\epsilon)$ at $\epsilon = \epsilon_0$. Calculating such derivative requires computing the derivative of the quantile function, which is why we require the CDF of $g(\etab)$ to be strictly increasing in the relevant neighborhoods. We expect all these conditions to be satisfied in practice in the presence of continuous covariates and enough smoothness or sparsity for the regression functions.\footnote{In principle, one could construct the empirical moment condition after performing the rearrangement procedure of \cite{chernozhukov2009improving}. Whether or not the rearrangement is done, we expect the inference about $\epsilon_0$ to be equivalent asymptotically and vary minimally in finite samples.} Asymptotic normality allows the straightforward calculation of a Wald-type confidence interval for $\epsilon_0$ using a consistent estimate for the variance. We thus propose reporting both a point-estimate for $\epsilon_0$ and $1-\alpha$ confidence interval as a summary of the study's robustness to unmeasured confounding.\footnote{In order to incorporate finite sampling uncertainty in sensitivity analyses, one-number summaries of a study's robustness are generally computed as the values of the sensitivity parameter(s) such that a $\alpha$-level confidence interval for the ATE under no unmeasured confounding includes the null value. Choosing different $\alpha$s to estimate the ATE with no residual confounding may then yield different conclusions regarding the study's robustness to unmeasured confounding, despite the latter being a separate inferential task. Constructing a confidence interval for $\epsilon_0$ directly bypasses this issue.}
\section{Illustrations}\label{section:illustrations}
\subsection{Simulation Study} \label{section:sim}
In this section, we report the results of the simulations we performed to investigate the finite-sample performance of our proposed estimators. We consider the following data generating mechanism:
\begin{align*}
& X_i \sim \text{TruncNorm}(\mu=0, \sigma=1, \text{lb}=-2, \text{ub}=2) \text{ for } i \in \{1, 2\}, \ U \sim \text{Bern}(0.5), \\
& S \mid X_1, X_2 \sim \text{Bern}\{\Phi(X_1)\},
\\ & A \mid X_1, X_2, U, S \sim \text{Bern}[0.5\{\Phi(X_1) + 0.5S + (1-S)U\}], \\
& Y^{a} \mid X_1, X_2, U, S, A \sim \text{Bern}\{0.25 + 0.5\Phi(X_1 + X_2) + (a - 0.5)r - 0.1 U\}, \\
& Y = AY^1 + (1-A)Y^0,
\end{align*}
where $\Phi(\cdot)$ denotes the CDF of a standard normal random variable. Notice that 
\begin{align*}
\Pb(A = 1 \mid X_1, X_2, S = 0) = \Pb(A = 1 \mid X_1, X_2, S = 1) = 0.5\Phi(X_1) + 0.25,
\end{align*}
thus this model satisfies the assumptions of Theorem \ref{gp_convergence} and it implies that $\E(Y^{1} - Y^{0})~=~r$. The random variable $U$ acts as a binary unmeasured confounder; given the observed covariates $\Xb$, units with $S = 0$ and $U = 1$ are more likely to be treated and exhibit $Y = 0$ than those with $S = 0$ and $U = 0$. Therefore, under this setup, one would expect the treatment effect to be underestimated if the no-unmeasured-confounding assumption is (incorrectly) assumed to be true.\footnote{In the context of the toy example of Section \ref{sec:motivation}, $U$ and $S$ indicate whether the parents are smokers and whether they would smoke at home respectively, $X_1$ and $X_2$ may be measures of the parents' education level and income respectively, $A$ indicates adolescent alcohol consumption and $Y$ indicates the occurrence of liver disease.}

We estimate the lower bound $\psi_l(\epsilon)$, the upper bound $\psi_u(\epsilon)$ and $\epsilon_0$ using the methods outlined in Section \ref{section:estimators}. In particular, we use 5-fold cross-fitting to estimate the nuisance functions, fitting both generalized linear and additive models via the SuperLearner method \citep{van2007super}. The performance of the proposed estimators is evaluated via integrated bias, root-mean-squared-error (RMSE), and coverage. These evaluation metrics offer insight into what sample size is required to achieve a good performance of the multiplier bootstrap, which relies on the convergence of the bounds' estimators to a Gaussian process.
\begin{align*}
\widehat{\text{bias}} = \frac{1}{I} \sum_{i=1}^I \left| \frac{1}{J} \sum_{j=1}^J \{\widehat{\psi}_{l, j}(\epsilon_i) - \psi_{l, j}(\epsilon_i) \} \right|, \quad \widehat{\text{RMSE}} = \frac{1}{I} \sum_{i=1}^I \left[ \frac{1}{J} \sum_{j=1}^J \{\widehat{\psi}_{l, j}(\epsilon_i) - T_j(\epsilon_i) \}^2 \right]^{1/2}
\end{align*}
and suitably modified formulas for $\psi_u(\epsilon)$ and $\epsilon_0$. We run $J = 500$ simulations across $I=21$ values of $\epsilon$ equally spaced in $\Ec = [0, 0.2]$. To better estimate $\epsilon_0$ we make the grid finer and consider $201$ values of $\epsilon$ equally spaced in $\Ec$. To evaluate 95\% uniform coverage, we say that the uniform band covers if it contains the true region $[\psi_l(\epsilon), \psi_u(\epsilon)]$ for all $\epsilon \in \Ec$. Finally, we assess bias and 95\% coverage for $\epsilon_0$. 
\begin{table}[H]
\centering
\begin{tabular}{c|ccc|ccc|cc}
	\multicolumn{1}{c}{$n$} & \multicolumn{3}{c}{Bias ($\times 100$) } & \multicolumn{3}{c}{$\sqrt{n} \times $RMSE} & \multicolumn{2}{c}{Coverage ($\times 100$)} \\
	\hline\hline
	& $\psi_l(\epsilon)$ & $\psi_u(\epsilon)$ & $\epsilon_0$ & $\psi_l(\epsilon)$ & $\psi_u(\epsilon)$ & $\epsilon_0$ & $[\psi_l(\epsilon), \psi_u(\epsilon)]$ & $\epsilon_0$ \\
	\hline
	500 & 0.38 & 0.12 & 2.47 & 0.95 & 0.96 & 1.32 & 95.4 & 97.0 \\
	1000 & 0.51 & 0.14 & 1.59 & 0.95 & 0.95 & 1.45 & 93.2 & 95.6 \\
	5000 & 0.04 & 0.10 & 0.16 & 0.99 & 0.98 & 1.72 & 92.4 & 95.4 \\
	10000 & 0.05 & 0.09 & 0.07 & 0.95 & 0.96 & 1.75 & 93.6 & 94.8 \\
	\hline\hline
\end{tabular}
\caption{\label{table1:simulation_res} Simulation results across 500 simulations.}
\end{table}
Table \ref{table1:simulation_res} shows the results of our simulation for $r = 0.05$. This set up is such that $\epsilon_0 = 0.041$. In addition, if no-unmeasured-confounding is erroneously thought to hold ($\epsilon = 0$), $\psi$ is, on average, underestimated since $\E\left\{ \mu_1(\Xb) - \mu_0(\Xb) \right\} \approx 0.023 < r$.
This simple simulation setup exemplifies what our theory predicts. Even for moderate sample sizes, we achieve approximately correct nominal uniform coverage for the identification region and $\epsilon_0$. Furthermore, the $\sqrt{n}\times$RMSE remains roughly constant as the sample size increases. Finally, in Section \ref{sec:power_sensitivity} of the Appendix, we extend this simulation study to investigate how conservative our model would be if the true $\epsilon_0$ is actually zero, i.e. there is no unmeasured confounding.

\subsection{Application} \label{section:data_analysis}
In this section, we illustrate the proposed sensitivity analysis by reanalyzing the data from the study on Right Heart Catheterization (RHC) conducted by \cite{connors1996effectiveness}.\footnote{Available at \url{http://biostat.mc.vanderbilt.edu/wiki/Main/DataSets}.} The data consist of 5735 records from critically ill adult patients receiving care in an ICU for certain disease categories in one out of five US teaching hospitals between 1989 and 1994. For each patient, demographic variables, comorbitidies and diagnosis variables as well as several laboratory values were recorded. A total of 2184 patients underwent RHC within the first 24 hours in the ICU. Within 30 days of admission, 1918 patients died, approximately $38.00\%$ and $30.64\%$ of the treated and control groups respectively. After conditioning on the measured confounders, the authors concluded that patients treated with RHC had, on average, lower probability of surviving (30-day mortality: $\text{OR}=1.24$, $95\% \text{ CI}=[1.03, 1.49]$). Notably, sensitivity analyses targeting potential violations of the propensity score model suggested robustness of the study's conclusions to unmeasured confounding.

We investigate the effects of varying the proportion of confounded units while avoiding any parametric assumptions on the nuisance regression functions. One reason to believe that a fraction of the sample might be effectively unconfounded is the following. Suppose there are two types of surgeons: those who prefer performing RHC (R-surgeon) and those who don't (NR-surgeon). One might believe that the surgeon's preference for RHC is a valid instrument. Roughly, an instrument is a variable that is unconfounded, associated with the treatment receipt, and that affects the outcome only through the treatment. It appears plausible that a surgeon's preference for RHC would satisfy these conditions if, for instance, the efficacy of RHC was not well understood at the time the study was conducted. In fact, physicians' preferences for a treatment have been used as IVs before, see for example \cite{hernan2006instruments} and \cite{baiocchi2014instrumental} for reviews and discussions. Then, the patients who would undergo RHC if assigned to an R-surgeon but would not undergo RHC if assigned to a NR-surgeon represent the unconfounded unknown fraction of the sample. 

Consider the group of patients who underwent RHC. A unit in this group can be either a ``complier" or a ``non-complier". She's a complier if she would not have undergone RHC if assigned to an NR-surgeon, whereas she's a non-complier if she would have undergone RHC regardless of the type of surgeon or only if assigned to a NR-surgeon. In many instances, these two types will differ in terms of observed covariates $\Xb$. However, for certain values $\xb$ of $\Xb$, a unit might be either a complier or a non-complier with non-zero probability. In this scenario, our relaxed $XA$-model posits that the probability of survival conditional on receiving RHC is the same for a complier and a non-complier sharing the same $\Xb = \xb$. Notice that this is not imposing any assumption on what would have happened to the non-complier had she not been treated. In fact, we derived the lower (upper) bound on the average effect of RHC by assuming that she would have certainly survived (died) had she not undergone RHC. This maximal conservativeness in deriving the bounds likely protects our conclusions from mild violations of our $X$- and $XA$-models.  

To construct the curves tracing the bounds using the data, we estimate the nuisance regression functions via the cross-validation-based SuperLearner ensemble \citep{van2007super}, combining generalized additive models, random forests, splines, support vector machines as well as generalized linear models. We perform 5-fold cross-fitting. We also construct pointwise and uniform confidence bands. Results are reported in Figure \ref{connors_results}.

In line with the results in \cite{connors1996effectiveness}, if no-unmeasured-confounding holds, patients treated with RHC show a statistically significant decrease in 30-day survival rates. The risk difference equals $-3.74\%$ ($95\% \text{ CI}=[-6.00\%, -1.49\%]$). Under the $X$-mixture model, the bounds on the difference in survival rate would include zero if more than $4.89\%$ ($95\% \text{ CI}=[1.50\%, 8.28\%]$) of the patients were confounded. The value reduces to $4.02\%$ ($95\% \text{ CI}=[1.59\%, 6.45\%]$) under the relaxed $XA$-mixture model. Whether robustness to $5\%$ of potentially confounded units is enough to attach a causal interpretation to the study's result largely depends on subject-matter knowledge. Earlier we have described $\epsilon_0$ as the proportion of ``non-compliers," but other interpretations are also possible. For instance, suppose it is known that, before deciding whether a patient undergoes RHC, most surgeons look at lab value $v_1$, but some may check lab value $v_2$ as well. Both values are correlated with survival, but only $v_1$ is measured. If reviewers of the study have an idea of how common it is for surgeons to check $v_2$ in addition to $v_1$, then they would be able to decide whether $\widehat\epsilon = 5 \%$ is large or small. In the supplementary material, we consider varying $\delta$, the parameter governing the severity of the unmeasured confounding. For instance, if $\delta = 0.5$ is thought to be reasonable, robustness would increase to $11.00\%$ ($95\% \text{ CI}=[3.84\%, 18.16\%]$) under the $X$-mixture model. 

Finally, we refer the readers to \cite{lin1998assessing} and \cite{altonji2008using}, among others, for additional sensitivity analyses applied to this dataset. In particular, in the context of Cox proportional hazard regression, and under certain simplifying assumptions, \cite{lin1998assessing} derive that a confidence interval for the relative hazard of death would include 1 as long as the prevalence of a binary unmeasured confounder is at least 10\% greater in the group that underwent RHC than in the control group. Using a probit model of mortality at day 90, \cite{altonji2008using} show that the observed positive association between mortality and RHC usage could be ``explained away" if the correlation between the unmeasured factors determining RHC usage and mortality is approximately 0.15. In addition, in Section \ref{section:extra_data_analysis} of the supplementary material, we apply the sensitivity analysis designed for linear models proposed in \cite{cinelli2020making}. We find that an unmeasured confounder that explains 4.2+\% of the variance in mortality not captured by RHC usage and the measured covariates and 4.2+\% of the variance in RHC usage not captured by the measured covariates would be sufficient to drive the observed effect ($\approx -0.04$) to zero. Notice that these approaches are designed for specific models used in the primary analysis, whereas our framework is agnostic regarding modeling choices. Further, they assume that the treatment-outcome association may be confounded for every unit, while our sensitivity model captures departures from such homogeneity by allowing the treatment-outcome association to be unconfounded for an unknown subgroup of units.  
\begin{figure*}[h!]
	\subfloat[$X$-mixture model ($S\ind (Y, A) \mid \Xb$) ]{%
		\includegraphics[width=0.5\textwidth]{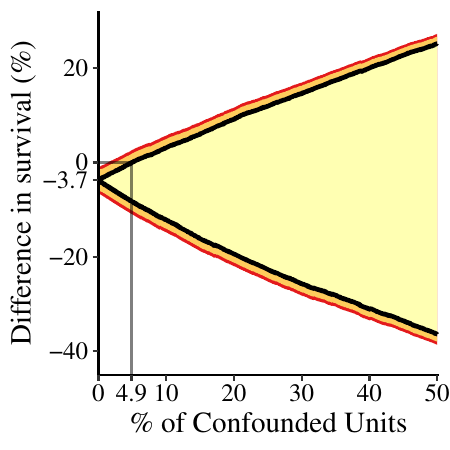}}
	\hspace{\fill}
	\subfloat[$XA$-mixture model ($S\ind Y \mid (A, \Xb)$)]{%
		\includegraphics[width=0.5\textwidth]{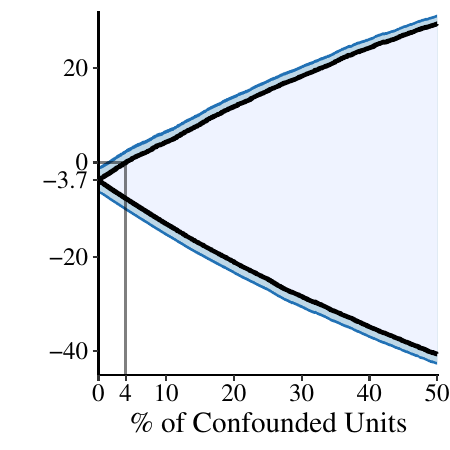}}\\
	\caption{\label{connors_results} Estimated bounds on the Average Treatment Effect as a function of the proportion of confounded units $\epsilon$ assuming ``worst-case" $\delta = 1$, with pointwise \citep{imbensmanskyci} and uniform 95\% confidence bands. Curves under the $X$-mixture model and under the $XA$-mixture model are shown along with estimates of $\epsilon_0$ on the abscissa.}
\end{figure*}
\section{Discussion}\label{section:discussion}
In this paper, we propose a novel approach to sensitivity analysis in observational studies where the sensitivity parameter is the proportion of unmeasured confounding. A strength of our model is that it captures a rich form of unmeasured confounding heterogeneity. While even richer models may allow for a more flexible characterization of confounding heterogeneity, we believe our approach strikes a nice balance between complexity and transparency. In fact, it captures heterogeneity with just one, intuitive sensitivity parameter:  an unknown fraction $\epsilon$ of the units can be arbitrarily confounded while the rest are not. The model is general enough to cover some relaxations to the no-unmeasured-confounding assumption already proposed in the literature. As $\epsilon$ is varied, lower and upper bounds on the ATE are derived under certain assumptions on the distribution of the confounded units. The parameter $\epsilon$ is interpretable and yields a natural one-number summary of a study's robustness to unmeasured confounding, namely the minimal proportion of confounding such that the bounds on the ATE contain zero. We provide sufficient conditions to construct both pointwise and uniform confidence bands around the curves tracing the lower and upper bounds on the ATE as a function of $\epsilon$. We also describe the asymptotic normality of a $Z$-estimator of $\epsilon_0$; we propose reporting an estimate of $\epsilon_0$ together with a Wald-type confidence interval when discussing results from an observational study.

Several questions remain unanswered and could be the subject of future research. First, bounding the ATE under no restrictions on the distribution of the confounded units is currently computationally intractable. Therefore, the discovery of a clever way to compute the bounds in this setting would generalize the current version of our model. Second, generalizing the approach of \cite{imbensmanskyci} to construct uniform confidence bands trapping the true ATE $\psi$, rather than the identification region $[\psi_l(\epsilon), \psi_u(\epsilon)] $, would allow far more precise inference. Lastly, extensions to our model other than the one considered in Appendix \ref{appendix:extra_extension} would likely lead to a richer set of sensitivity models, ultimately allowing the user to gauge the effects of departures from the no-unmeasured-confounding assumption in more nuanced ways. For example, it would be interesting to extend our sensitivity model to accommodate time-varying or continuous exposures, as well as to explore the possibility of tighter bounds by employing specific sensitivity analysis models to the confounded fraction of the sample. 
\section{Acknowledgments}
The authors thank Sivaraman Balakrishnan, Colin Fogarty, Marshall Joffe, Alan Mishler, Pratik Patil and members of the Causal Group at Carnegie Mellon University for helpful discussions. Edward Kennedy gratefully acknowledges financial support from NSF Grant DMS1810979.
\bibliographystyle{plainnat}
\bibliography{references}

\begin{thebibliography}{63}
\providecommand{\natexlab}[1]{#1}
\providecommand{\url}[1]{\texttt{#1}}
\expandafter\ifx\csname urlstyle\endcsname\relax
  \providecommand{\doi}[1]{doi: #1}\else
  \providecommand{\doi}{doi: \begingroup \urlstyle{rm}\Url}\fi

\bibitem[Acerbi and Tasche(2002)]{acerbi2002coherence}
Carlo Acerbi and Dirk Tasche.
\newblock On the coherence of expected shortfall.
\newblock \emph{Journal of Banking \& Finance}, 26\penalty0 (7):\penalty0
  1487--1503, 2002.

\bibitem[Altonji et~al.(2008)Altonji, Elder, and Taber]{altonji2008using}
Joseph~G Altonji, Todd~E Elder, and Christopher~R Taber.
\newblock Using selection on observed variables to assess bias from
  unobservables when evaluating swan-ganz catheterization.
\newblock \emph{American Economic Review}, 98\penalty0 (2):\penalty0 345--50,
  2008.

\bibitem[Audibert et~al.(2007)Audibert, Tsybakov, et~al.]{audibert2007fast}
Jean-Yves Audibert, Alexandre~B Tsybakov, et~al.
\newblock Fast learning rates for plug-in classifiers.
\newblock \emph{The Annals of statistics}, 35\penalty0 (2):\penalty0 608--633,
  2007.

\bibitem[Baiocchi et~al.(2014)Baiocchi, Cheng, and
  Small]{baiocchi2014instrumental}
Michael Baiocchi, Jing Cheng, and Dylan~S Small.
\newblock Instrumental variable methods for causal inference.
\newblock \emph{Statistics in medicine}, 33\penalty0 (13):\penalty0 2297--2340,
  2014.

\bibitem[Benkeser and Van Der~Laan(2016)]{benkeser2016highly}
David Benkeser and Mark Van Der~Laan.
\newblock The highly adaptive lasso estimator.
\newblock In \emph{2016 IEEE international conference on data science and
  advanced analytics (DSAA)}, pages 689--696. IEEE, 2016.

\bibitem[Bickel et~al.(1993)Bickel, Klaassen, Ritov, and
  Wellner]{bickel1993efficient}
Peter Bickel, Chris~AJ Klaassen, Ya’acov Ritov, and Jon~A Wellner.
\newblock \emph{Efficient and adaptive estimation for semiparametric models},
  volume~4.
\newblock Springer-Verlag New York, 1993.

\bibitem[Brumback et~al.(2004)Brumback, Hern{\'a}n, Haneuse, and
  Robins]{brumback2004sensitivity}
Babette~A Brumback, Miguel~A Hern{\'a}n, Sebastien~JPA Haneuse, and James~M
  Robins.
\newblock Sensitivity analyses for unmeasured confounding assuming a marginal
  structural model for repeated measures.
\newblock \emph{Statistics in medicine}, 23\penalty0 (5):\penalty0 749--767,
  2004.

\bibitem[Chernozhukov et~al.(2009)Chernozhukov, Fernandez-Val, and
  Galichon]{chernozhukov2009improving}
Victor Chernozhukov, Ivan Fernandez-Val, and Alfred Galichon.
\newblock Improving point and interval estimators of monotone functions by
  rearrangement.
\newblock \emph{Biometrika}, 96\penalty0 (3):\penalty0 559--575, 2009.

\bibitem[Chernozhukov et~al.(2016)Chernozhukov, Chetverikov, Demirer, Duflo,
  Hansen, and Newey]{chernozhukov2016double}
Victor Chernozhukov, Denis Chetverikov, Mert Demirer, Esther Duflo, Christian
  Hansen, and Whitney~K Newey.
\newblock Double machine learning for treatment and causal parameters.
\newblock Technical report, cemmap working paper, 2016.

\bibitem[Cinelli and Hazlett(2020)]{cinelli2020making}
Carlos Cinelli and Chad Hazlett.
\newblock Making sense of sensitivity: Extending omitted variable bias.
\newblock \emph{Journal of the Royal Statistical Society: Series B (Statistical
  Methodology)}, 82\penalty0 (1):\penalty0 39--67, 2020.

\bibitem[Connors et~al.(1996)Connors, Speroff, Dawson, Thomas, Harrell, Wagner,
  Desbiens, Goldman, Wu, Califf, et~al.]{connors1996effectiveness}
Alfred~F Connors, Theodore Speroff, Neal~V Dawson, Charles Thomas, Frank~E
  Harrell, Douglas Wagner, Norman Desbiens, Lee Goldman, Albert~W Wu, Robert~M
  Califf, et~al.
\newblock The effectiveness of right heart catheterization in the initial care
  of critically ill patients.
\newblock \emph{Jama}, 276\penalty0 (11):\penalty0 889--897, 1996.

\bibitem[Cornfield et~al.(1959)Cornfield, Haenszel, Hammond, Lilienfeld,
  Shimkin, and Wynder]{cornfield1959smoking}
Jerome Cornfield, William Haenszel, E~Cuyler Hammond, Abraham~M Lilienfeld,
  Michael~B Shimkin, and Ernst~L Wynder.
\newblock Smoking and lung cancer: recent evidence and a discussion of some
  questions.
\newblock \emph{Journal of the National Cancer institute}, 22\penalty0
  (1):\penalty0 173--203, 1959.

\bibitem[D{\'\i}az and van~der Laan(2013)]{diaz2013sensitivity}
Iv{\'a}n D{\'\i}az and Mark~J van~der Laan.
\newblock Sensitivity analysis for causal inference under unmeasured
  confounding and measurement error problems.
\newblock \emph{The international journal of biostatistics}, 9\penalty0
  (2):\penalty0 149--160, 2013.

\bibitem[Ding and VanderWeele(2016)]{ding2016sensitivity}
Peng Ding and Tyler~J VanderWeele.
\newblock Sensitivity analysis without assumptions.
\newblock \emph{Epidemiology (Cambridge, Mass.)}, 27\penalty0 (3):\penalty0
  368, 2016.

\bibitem[Farrell(2015)]{farrell2015robust}
Max~H Farrell.
\newblock Robust inference on average treatment effects with possibly more
  covariates than observations.
\newblock \emph{Journal of Econometrics}, 189\penalty0 (1):\penalty0 1--23,
  2015.

\bibitem[Gastwirth et~al.(1998)Gastwirth, Krieger, and
  Rosenbaum]{gastwirth1998dual}
Joseph~L Gastwirth, Abba~M Krieger, and Paul~R Rosenbaum.
\newblock Dual and simultaneous sensitivity analysis for matched pairs.
\newblock \emph{Biometrika}, 85\penalty0 (4):\penalty0 907--920, 1998.

\bibitem[Gy{\"o}rfi et~al.(2006)Gy{\"o}rfi, Kohler, Krzyzak, and
  Walk]{gyorfi2006distribution}
L{\'a}szl{\'o} Gy{\"o}rfi, Michael Kohler, Adam Krzyzak, and Harro Walk.
\newblock \emph{A distribution-free theory of nonparametric regression}.
\newblock Springer Science \& Business Media, 2006.

\bibitem[Hern{\'a}n and Robins(2006)]{hernan2006instruments}
Miguel~A Hern{\'a}n and James~M Robins.
\newblock Instruments for causal inference: an epidemiologist's dream?
\newblock \emph{Epidemiology}, pages 360--372, 2006.

\bibitem[Horowitz(2009)]{horowitz2009semiparametric}
Joel~L Horowitz.
\newblock \emph{Semiparametric and nonparametric methods in econometrics},
  volume~12.
\newblock Springer, 2009.

\bibitem[Horowitz and Manski(1995)]{horowitz1995identification}
Joel~L Horowitz and Charles~F Manski.
\newblock Identification and robustness with contaminated and corrupted data.
\newblock \emph{Econometrica: Journal of the Econometric Society}, pages
  281--302, 1995.

\bibitem[Imbens(2003)]{imbens2003sensitivity}
Guido~W Imbens.
\newblock Sensitivity to exogeneity assumptions in program evaluation.
\newblock \emph{American Economic Review}, 93\penalty0 (2):\penalty0 126--132,
  2003.

\bibitem[Imbens and Manski(2004)]{imbensmanskyci}
Guido~W Imbens and Charles~F Manski.
\newblock Confidence intervals for partially identified parameters.
\newblock \emph{Econometrica}, 72\penalty0 (6):\penalty0 1845--1857, 2004.

\bibitem[Joffe et~al.(2010)Joffe, Yang, and Feldman]{joffe2010selective}
Marshall~M Joffe, Wei~Peter Yang, and Harold~I Feldman.
\newblock Selective ignorability assumptions in causal inference.
\newblock \emph{The International Journal of Biostatistics}, 6\penalty0 (2),
  2010.

\bibitem[Kandasamy and Yu(2016)]{kandasamy2016additive}
Kirthevasan Kandasamy and Yaoliang Yu.
\newblock Additive approximations in high dimensional nonparametric regression
  via the salsa.
\newblock In \emph{International Conference on Machine Learning}, pages 69--78,
  2016.

\bibitem[Kennedy(2018)]{kennedy2018nonparametric}
Edward~H Kennedy.
\newblock Nonparametric causal effects based on incremental propensity score
  interventions.
\newblock \emph{Journal of the American Statistical Association}, pages 1--12,
  2018.

\bibitem[Kennedy et~al.(2018)Kennedy, Balakrishnan, and
  G'Sell]{kennedy2018sharp}
Edward~H Kennedy, Sivaraman Balakrishnan, and Max G'Sell.
\newblock Sharp instruments for classifying compliers and generalizing causal
  effects.
\newblock \emph{arXiv preprint arXiv:1801.03635}, 2018.

\bibitem[Kennedy et~al.(2019)Kennedy, Harris, and Keele]{kennedy2019survivor}
Edward~H Kennedy, Steve Harris, and Luke~J Keele.
\newblock Survivor-complier effects in the presence of selection on treatment,
  with application to a study of prompt icu admission.
\newblock \emph{Journal of the American Statistical Association}, 114\penalty0
  (525):\penalty0 93--104, 2019.

\bibitem[Kosorok(2008)]{kosorok2008introduction}
Michael~R Kosorok.
\newblock \emph{Introduction to empirical processes and semiparametric
  inference.}
\newblock Springer, 2008.

\bibitem[Lammert et~al.(2013)Lammert, Nguyen, Juran, Schlicht, Larson,
  Atkinson, and Lazaridis]{lammert2013questionnaire}
Craig Lammert, Douglas~L Nguyen, Brian~D Juran, Erik Schlicht, Joseph~J Larson,
  Elizabeth~J Atkinson, and Konstantinos~N Lazaridis.
\newblock Questionnaire based assessment of risk factors for primary biliary
  cirrhosis.
\newblock \emph{Digestive and Liver Disease}, 45\penalty0 (7):\penalty0
  589--594, 2013.

\bibitem[Lin et~al.(1998)Lin, Psaty, and Kronmal]{lin1998assessing}
Danyu~Y Lin, Bruce~M Psaty, and Richard~A Kronmal.
\newblock Assessing the sensitivity of regression results to unmeasured
  confounders in observational studies.
\newblock \emph{Biometrics}, pages 948--963, 1998.

\bibitem[Liu et~al.(2013)Liu, Kuramoto, and Stuart]{liu2013introduction}
Weiwei Liu, S~Janet Kuramoto, and Elizabeth~A Stuart.
\newblock An introduction to sensitivity analysis for unobserved confounding in
  nonexperimental prevention research.
\newblock \emph{Prevention science}, 14\penalty0 (6):\penalty0 570--580, 2013.

\bibitem[Luedtke and Van Der~Laan(2016)]{luedtke2016statistical}
Alexander~R Luedtke and Mark~J Van Der~Laan.
\newblock Statistical inference for the mean outcome under a possibly
  non-unique optimal treatment strategy.
\newblock \emph{Annals of statistics}, 44\penalty0 (2):\penalty0 713, 2016.

\bibitem[Luedtke et~al.(2015)Luedtke, Diaz, and van~der
  Laan]{luedtke2015statistics}
Alexander~R Luedtke, Ivan Diaz, and Mark~J van~der Laan.
\newblock The statistics of sensitivity analyses.
\newblock 2015.

\bibitem[Oliveira et~al.(2019)Oliveira, Santos, Farah, Ritti-Dias, Freitas, and
  Diniz]{oliveira2019influence}
Luciano Machado Ferreira Ten{\'o}rio~de Oliveira, Ana Raquel Mendes~dos Santos,
  Breno~Quintella Farah, Raphael~Mendes Ritti-Dias, Clara Maria Silvestre
  Monteiro~de Freitas, and Paula Rejane~Beserra Diniz.
\newblock Influence of parental smoking on the use of alcohol and illicit drugs
  among adolescents.
\newblock \emph{Einstein (S{\~a}o Paulo)}, 17\penalty0 (1), 2019.

\bibitem[Pengpid and Peltzer(2019)]{pengpid2019alcohol}
Supa Pengpid and Karl Peltzer.
\newblock Alcohol use and misuse among school-going adolescents in thailand:
  results of a national survey in 2015.
\newblock \emph{International journal of environmental research and public
  health}, 16\penalty0 (11):\penalty0 1898, 2019.

\bibitem[Raskutti et~al.(2012)Raskutti, Wainwright, and
  Yu]{raskutti2012minimax}
Garvesh Raskutti, Martin~J Wainwright, and Bin Yu.
\newblock Minimax-optimal rates for sparse additive models over kernel classes
  via convex programming.
\newblock \emph{Journal of Machine Learning Research}, 13\penalty0
  (Feb):\penalty0 389--427, 2012.

\bibitem[Richardson et~al.(2014)Richardson, Hudgens, Gilbert, and
  Fine]{richardson2014nonparametric}
Amy Richardson, Michael~G Hudgens, Peter~B Gilbert, and Jason~P Fine.
\newblock Nonparametric bounds and sensitivity analysis of treatment effects.
\newblock \emph{Statistical science: a review journal of the Institute of
  Mathematical Statistics}, 29\penalty0 (4):\penalty0 596, 2014.

\bibitem[Richardson and Robins(2010)]{richardson2010analysis}
Thomas~S Richardson and James~M Robins.
\newblock Analysis of the binary instrumental variable model.
\newblock \emph{Heuristics, Probability and Causality: A Tribute to Judea
  Pearl}, pages 415--444, 2010.

\bibitem[Robins et~al.(2008)Robins, Li, Tchetgen, van~der Vaart,
  et~al.]{robins2008higher}
James Robins, Lingling Li, Eric Tchetgen, Aad van~der Vaart, et~al.
\newblock Higher order influence functions and minimax estimation of nonlinear
  functionals.
\newblock In \emph{Probability and statistics: essays in honor of David A.
  Freedman}, pages 335--421. Institute of Mathematical Statistics, 2008.

\bibitem[Robins(2002)]{robins2002covariance}
James~M Robins.
\newblock [covariance adjustment in randomized experiments and observational
  studies]: Comment.
\newblock \emph{Statistical Science}, 17\penalty0 (3):\penalty0 309--321, 2002.

\bibitem[Rosenbaum(1987)]{rosenbaum1987sensitivity}
Paul~R Rosenbaum.
\newblock Sensitivity analysis for certain permutation inferences in matched
  observational studies.
\newblock \emph{Biometrika}, 74\penalty0 (1):\penalty0 13--26, 1987.

\bibitem[Rosenbaum(2004)]{rosenbaum2004design}
Paul~R Rosenbaum.
\newblock Design sensitivity in observational studies.
\newblock \emph{Biometrika}, 91\penalty0 (1):\penalty0 153--164, 2004.

\bibitem[Rosenbaum(2006)]{rosenbaum2006differential}
Paul~R Rosenbaum.
\newblock Differential effects and generic biases in observational studies.
\newblock \emph{Biometrika}, 93\penalty0 (3):\penalty0 573--586, 2006.

\bibitem[Rosenbaum and Rubin(1983)]{rosenbaum1983assessing}
Paul~R Rosenbaum and Donald~B Rubin.
\newblock Assessing sensitivity to an unobserved binary covariate in an
  observational study with binary outcome.
\newblock \emph{Journal of the Royal Statistical Society: Series B
  (Methodological)}, 45\penalty0 (2):\penalty0 212--218, 1983.

\bibitem[Rosenbaum et~al.(2002)]{rosenbaum2002covariance}
Paul~R Rosenbaum et~al.
\newblock Covariance adjustment in randomized experiments and observational
  studies.
\newblock \emph{Statistical Science}, 17\penalty0 (3):\penalty0 286--327, 2002.

\bibitem[Rotnitzky et~al.(2001)Rotnitzky, Scharfstein, Su, and
  Robins]{andrea2001methods}
Andrea Rotnitzky, Daniel Scharfstein, Ting-Li Su, and James Robins.
\newblock Methods for conducting sensitivity analysis of trials with
  potentially nonignorable competing causes of censoring.
\newblock \emph{Biometrics}, 57\penalty0 (1):\penalty0 103--113, 2001.

\bibitem[Rubin(1974)]{rubin1974estimating}
Donald~B Rubin.
\newblock Estimating causal effects of treatments in randomized and
  nonrandomized studies.
\newblock \emph{Journal of educational Psychology}, 66\penalty0 (5):\penalty0
  688, 1974.

\bibitem[Stoye(2009)]{stoye2009more}
J{\"o}rg Stoye.
\newblock More on confidence intervals for partially identified parameters.
\newblock \emph{Econometrica}, 77\penalty0 (4):\penalty0 1299--1315, 2009.

\bibitem[Tsiatis(2007)]{tsiatis2007semiparametric}
Anastasios Tsiatis.
\newblock \emph{Semiparametric theory and missing data}.
\newblock Springer Science \& Business Media, 2007.

\bibitem[van~der Laan(2017)]{van2017generally}
Mark van~der Laan.
\newblock A generally efficient targeted minimum loss based estimator based on
  the highly adaptive lasso.
\newblock \emph{The international journal of biostatistics}, 13\penalty0 (2),
  2017.

\bibitem[van~der Laan and Luedtke(2014)]{van2014targeted}
Mark~J van~der Laan and Alexander~R Luedtke.
\newblock Targeted learning of an optimal dynamic treatment, and statistical
  inference for its mean outcome.
\newblock 2014.

\bibitem[Van~der Laan et~al.(2003)Van~der Laan, Laan, and
  Robins]{van2003unified}
Mark~J Van~der Laan, MJ~Laan, and James~M Robins.
\newblock \emph{Unified methods for censored longitudinal data and causality}.
\newblock Springer Science \& Business Media, 2003.

\bibitem[Van~der Laan et~al.(2007)Van~der Laan, Polley, and
  Hubbard]{van2007super}
Mark~J Van~der Laan, Eric~C Polley, and Alan~E Hubbard.
\newblock Super learner.
\newblock \emph{Statistical applications in genetics and molecular biology},
  6\penalty0 (1), 2007.

\bibitem[Van~der Vaart(2000)]{van2000asymptotic}
Aad~W Van~der Vaart.
\newblock \emph{Asymptotic statistics}, volume~3.
\newblock Cambridge university press, 2000.

\bibitem[van~der Vaart(2002)]{vdvsemiparametric2002}
Aad~W. van~der Vaart.
\newblock Semiparametric statistics.
\newblock In \emph{Lectures on Probability Theory and Statistics}, pages
  331--457. Springer, 2002.

\bibitem[van~der Vaart and Wellner(1996)]{van1996weak}
Aad~W. van~der Vaart and John~A. Wellner.
\newblock \emph{Weak Convergence and Empirical Processes with Application to
  Statistics}.
\newblock Springer Verlage, 1996.

\bibitem[VanderWeele and Ding(2017)]{vanderweele2017sensitivity}
Tyler~J VanderWeele and Peng Ding.
\newblock Sensitivity analysis in observational research: introducing the
  e-value.
\newblock \emph{Annals of internal medicine}, 167\penalty0 (4):\penalty0
  268--274, 2017.

\bibitem[Vansteelandt et~al.(2006)Vansteelandt, Goetghebeur, Kenward, and
  Molenberghs]{vansteelandt2006ignorance}
Stijn Vansteelandt, Els Goetghebeur, Michael~G Kenward, and Geert Molenberghs.
\newblock Ignorance and uncertainty regions as inferential tools in a
  sensitivity analysis.
\newblock \emph{Statistica Sinica}, 16\penalty0 (3):\penalty0 953--979, 2006.

\bibitem[Yadlowsky et~al.(2018)Yadlowsky, Namkoong, Basu, Duchi, and
  Tian]{yadlowsky2018bounds}
Steve Yadlowsky, Hongseok Namkoong, Sanjay Basu, John Duchi, and Lu~Tian.
\newblock Bounds on the conditional and average treatment effect in the
  presence of unobserved confounders.
\newblock \emph{arXiv preprint arXiv:1808.09521}, 2018.

\bibitem[Yang and Tokdar(2015)]{yang2015minimax}
Yun Yang and Surya~T Tokdar.
\newblock Minimax-optimal nonparametric regression in high dimensions.
\newblock \emph{The Annals of Statistics}, 43\penalty0 (2):\penalty0 652--674,
  2015.

\bibitem[Zhang and Tchetgen~Tchetgen(2019)]{zhang2019sens}
Bo~Zhang and Eric~J Tchetgen~Tchetgen.
\newblock A semiparametric approach to model-based sensitivity analysis in
  observational studies.
\newblock \emph{arXiv preprint arXiv:1910.14130}, 2019.

\bibitem[Zhao et~al.(2017)Zhao, Small, and Bhattacharya]{zhao2017sensitivity}
Qingyuan Zhao, Dylan~S Small, and Bhaswar~B Bhattacharya.
\newblock Sensitivity analysis for inverse probability weighting estimators via
  the percentile bootstrap.
\newblock \emph{arXiv preprint arXiv:1711.11286}, 2017.

\bibitem[Zheng and van~der Laan(2010)]{zheng2010asymptotic}
Wenjing Zheng and Mark~J van~der Laan.
\newblock Asymptotic theory for cross-validated targeted maximum likelihood
  estimation.
\newblock 2010.

\end{thebibliography}

\appendix
\addcontentsline{toc}{section}{Appendices}
\section{Proof of Lemma 1}
Notice that, because $A \ind Y^a \mid  \Xb, S = 1$, we have
\begin{align*}
\E\{( Y^1 - Y^0)S\} = \E[S\{\E( Y \mid  A = 1, \Xb, S = 1) - \E(Y \mid  A = 0, \Xb, S = 1)\}]
\end{align*}
and, by the consistency assumption, it holds that
\begin{align*}
\E\left\{ ( Y^1 - Y^0)(1-S) \right\} & = \E\left[ (1-S)\left\{(Y - Y^0)A + (Y^1 - Y) (1-A)\right\} \right] \\
& = \E((1-S)[\{Y - \lambda_{1-A}(\Xb)\}(2A - 1)])
\end{align*} 
Therefore, we conclude that
\begin{align*}
\psi = \E((1-S)[\{Y - \lambda_{1-A}(\Xb)\}(2A - 1)] + S\{\E( Y \mid  A = 1, \Xb, S = 1) - \E(Y \mid  A = 0, \Xb, S = 1)\})
\end{align*} 
as desired.
\section{Proof of Theorem 1}\label{appendix:proof_theorem_xmodel}
Notice that \eqref{x_mixture_assumption} is equivalent to $S \ind A \mid  \Xb$ and $S \ind Y \mid  \Xb, A$. Then, under  \eqref{x_mixture_assumption}, we have that $\E\left(Y \mid  \Xb, A = a, S \right) = \mu_a(\Xb)$ and $\Pb\left(A = a \mid  \Xb, S\right) = \pi(a\mid \Xb)$. This means that the result in Lemma \ref{general_def_ate} simplifies to
\begin{align*}
\psi(S, \lambda_0, \lambda_1) = \E\left(\mu_1(\Xb) - \mu_0(\Xb) + (1 - S) \left[\pi(0\mid \Xb) \left\{\lambda_1(\Xb) - \mu_1(\Xb) \right\} - \pi(1\mid \Xb)\left\{\lambda_0(\Xb) - \mu_0(\Xb) \right\}\right] \right)
\end{align*}
The observed distribution $\Pb$ and the knowledge of $S$ places no restrictions on $\lambda_0(\Xb)$ and $\lambda_1(\Xb)$. Recalling that $\delta$ is chosen such that
\begin{align*}
L_a \equiv \delta\{y_{\text{min}} - \mu_a(\Xb) \} \leq \lambda_a(\Xb) - \mu_a(\Xb) \leq \delta\{y_{\text{max}} - \mu_a(\Xb) \} \equiv U_a \text{ with prob. 1}
\end{align*}
for $a \in \{0, 1\}$, we have that
\begin{align*}
\E \left\{ \mu_1(\Xb) - \mu_0(\Xb) + (1-S) g(\etab) \right\} - \epsilon\delta(y_{\text{max}}-y_{\text{min}}) \leq \psi(S, \lambda_0, \lambda_1) \leq \E \left\{ \mu_1(\Xb) - \mu_0(\Xb) + (1-S) g(\etab) \right\}
\end{align*}
where $g(\etab) = \pi(0\mid\Xb) U_1 - \pi(1\mid\Xb)L_0$. These bounds are sharp for any given $S$. 

Next, notice that $g(\etab) : \mathcal{X}^p \to \R$ and $\Pb(S = 0) = \epsilon$. Thus, by Proposition 4 in \cite{horowitz1995identification}, it holds that $\psi \in [\psi_l(\epsilon), \psi_u(\epsilon)]$ where
\begin{align*}
\psi_l(\epsilon) & = \E\left[ \mu_1(\Xb) - \mu_0(\Xb) + \one\left\{ g(\etab) \leq q_{\epsilon} \right\} g(\etab) \right] - \epsilon\delta(y_{\text{max}}-y_{\text{min}}) \\
\psi_u(\epsilon) & = \E\left[ \mu_1(\Xb) - \mu_0(\Xb) + \one\left\{ g(\etab) > q_{ 1 - \epsilon }\right\} g(\etab) \right]
\end{align*}
and these bounds are sharp.
\section{Bounds in $XA$-mixture model}\label{appendix:xa-model}
The restriction in \eqref{x_mixture_assumption} can easily be weakened to 
\begin{align*}
S \ind Y \mid \Xb, A \tag{A2} \label{xa_mixture_assumption}
\end{align*}
Under \eqref{xa_mixture_assumption}, it still holds that $\E(Y \mid \Xb, A = a, S) = \mu_a(X)$, but $\pi(a \mid \Xb, S = 1)$ does not equal  $\pi(a \mid \Xb, S = 0)$ necessarily. Therefore, the result in Lemma \ref{general_def_ate} simplifies only to
\begin{align*}
\psi(S, \lambda_0, \lambda_1) = \E(\mu_1(\Xb) - \mu_0(\Xb) + (1-S)[(1-A)\{\lambda_1(\Xb) - \mu_1(\Xb)\} - A\{\mu_0(\Xb) - \lambda_0(\Xb)\}])
\end{align*}
where $\lambda_a(\Xb) = \E( Y^a \mid A = 1 - a, \Xb, S = 0)$. Following the same line of reasoning as in the proof of Theorem \ref{theorem_xmodel}, under consistency and positivity, sharp bounds on $\psi$ are:
\begin{align*}
\psi_l(\epsilon) & = \E[\mu_1(\Xb) - \mu_0(\Xb) + \one\{g(A, \etab) \leq q_{\epsilon} \}g(A, \etab)] - \epsilon \delta (y_{\text{min}} - y_{\text{max}})\\
\psi_u(\epsilon) & = \E[\mu_1(\Xb) - \mu_0(\Xb) + \one\{g(A, \etab) > q_{1-\epsilon}\}g(A, \etab)]
\end{align*}  
where  $g(A, \etab) = (1-A)U_1 - AL_0$, $q_{\tau}$ is the $\tau$-quantile of $g(A, \etab)$ and $\delta$ is chosen such that 
\begin{align*}
L_a \equiv \delta\{y_{\text{min}} - \mu_a(\Xb) \} \leq \lambda_a(\Xb) - \mu_a(\Xb) \leq \delta\{y_{\text{max}} - \mu_a(\Xb) \} \equiv U_a \text{ with prob. 1.}
\end{align*}
with $y_{\min}$ and $y_{\max}$ finite. The following lemma shows that the bounds assuming $S \ind Y \mid \Xb, A$ are at least as wide as those assuming $S \ind (Y, A) \mid \Xb$. 
\medskip
\begin{lemma}\label{lemma_expected_shortfall}
	Let $X, A$ be two random variables and let $\pi(X) = \E \left(A \mid X\right)$. Consider the functions:
	\begin{align*}
	g_1(a, x) = af(x) \quad \text{ and } \quad g_2(x) = \pi(x)f(x)
	\end{align*}
	for a measurable function $f$. Then, it holds that
	\begin{align}
	\E\left[ g_1(A, X) \one \left\{ g_1(A, X) \leq q_{1\tau} \right\} \right] & \leq \E\left[ g_2(X) \one \left\{ g_2(X) \leq q_{2\tau} \right\} \right] \nonumber \\
	\E\left[ g_1(A, X) \one \left\{ g_1(A, X) > q_{1\tau} \right\} \right] &  \geq \E\left[ g_2(X) \one \left\{ g_2(X) > q_{2\tau} \right\} \right] \label{eq_expected_shortfall_ub}
	\end{align}
	where $q_{i\tau}$ is the $\tau$-quantile of $g_i(\cdot)$.
\end{lemma}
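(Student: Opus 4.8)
The plan is to reduce the entire lemma to the single structural fact linking $g_1$ and $g_2$. Since $\pi(X) = \E(A\mid X)$, the tower property gives
\[
g_2(X) = \pi(X) f(X) = \E(A\mid X)\, f(X) = \E\{ g_1(A,X)\mid X\},
\]
so $g_2$ is a version of the conditional expectation of $g_1$ given $X$. Two consequences drive everything: first, $\E\{g_1(A,X)\} = \E\{g_2(X)\}$; and second, by (conditional) Jensen's inequality $g_1$ dominates $g_2$ in the convex order, i.e. $\E\{\phi(g_1)\}\ge \E\{\phi(g_2)\}$ for every convex $\phi$. Both inequalities in the lemma are just translations of this domination into statements about tail partial means, so I would not invoke the full machinery of convex order but only the one convex function I actually need.

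For the first (lower-tail) inequality I would use the variational (Rockafellar--Uryasev) representation of the lower partial expectation: for a random variable $Z$ with $\tau$-quantile $q_\tau$,
\[
\E\{Z\,\one\{Z \le q_\tau\}\} = \max_{c\in\R}\bigl[\tau c - \E\{(c - Z)_+\}\bigr],
\]
with the maximum attained at $c = q_\tau$. For each fixed $c$ the map $z\mapsto (c-z)_+$ is convex, so conditional Jensen gives $\E\{(c-g_2)_+\} = \E[\{c - \E(g_1\mid X)\}_+] \le \E\{(c-g_1)_+\}$. Hence, for every $c$, $\tau c - \E\{(c-g_1)_+\} \le \tau c - \E\{(c-g_2)_+\}$; taking the supremum over $c$ on both sides preserves the inequality and yields exactly $\E\{g_1\,\one\{g_1\le q_{1\tau}\}\}\le \E\{g_2\,\one\{g_2\le q_{2\tau}\}\}$.

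The upper-tail inequality then follows for free by complementation. Writing $\E\{g_i\,\one\{g_i > q_{i\tau}\}\} = \E(g_i) - \E\{g_i\,\one\{g_i\le q_{i\tau}\}\}$ and invoking the equal-means identity $\E(g_1)=\E(g_2)$ from the first step, the sign of the first inequality flips and produces $\E\{g_1\,\one\{g_1 > q_{1\tau}\}\} \ge \E\{g_2\,\one\{g_2 > q_{2\tau}\}\}$, which is \eqref{eq_expected_shortfall_ub}.

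The one delicate point is the exact identity between the lower partial mean $\E\{Z\,\one\{Z\le q_\tau\}\}$ and the integrated-quantile object $\int_0^\tau Q_Z(u)\,du$ delivered by the variational formula: the two coincide when $Z$ has no atom at its $\tau$-quantile but differ by $q_\tau\{\Pb(Z\le q_\tau) - \tau\}$ otherwise. In the intended application the relevant variable $g(\etab)$ is assumed absolutely continuous (as in Theorems \ref{gp_convergence} and \ref{eps_tilde_theorem}), so the quantiles are continuity points and the variational identity holds exactly. For the general statement I would either assume the $\tau$-quantiles of $g_1$ and $g_2$ are continuity points, or carry the atom-correction term through; I expect this bookkeeping to be the only genuinely fiddly part of the argument, the convex-order core being short.
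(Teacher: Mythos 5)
Your proof is correct, and it takes a genuinely different route from the paper's. The structural fact you lead with---$g_2(X)=\E\{g_1(A,X)\mid X\}$---is also the engine of the paper's argument, where it enters as the tower-property identity $\E[g_2(X)\one\{g_2(X)\le q_{2\tau}\}]=\E[g_1(A,X)\one\{g_2(X)\le q_{2\tau}\}]$, but the mechanisms then diverge. The paper argues directly, in the style of Acerbi--Tasche: pointwise, $g_1\,[\one\{g_2\le q_{2\tau}\}-\one\{g_1\le q_{1\tau}\}]\ \ge\ q_{1\tau}\,[\one\{g_2\le q_{2\tau}\}-\one\{g_1\le q_{1\tau}\}]$, because the bracketed difference is nonpositive exactly when $g_1\le q_{1\tau}$; taking expectations, the right-hand side is asserted to equal $q_{1\tau}(\tau-\tau)=0$, and the upper-tail inequality follows by the same complementation you use. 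Your route---the Rockafellar--Uryasev variational representation plus conditional Jensen---invokes heavier machinery but is more modular: it exhibits the lemma as an instance of the fact that conditioning decreases a variable in the convex order, so the argument generalizes verbatim to any convex-order-dominated pair, whereas the paper's proof is elementary and self-contained, needing nothing beyond the definition of a quantile.

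The ``delicate point'' you flag is not pedantry; it is precisely where the paper's own proof is incomplete. The step $\E[\one\{g_2\le q_{2\tau}\}-\one\{g_1\le q_{1\tau}\}]=\tau-\tau$ requires $\Pb(g_i\le q_{i\tau})=\tau$ exactly for $i=1,2$, which is the same no-atom condition your variational identity needs, and the lemma as stated can fail without it. For instance, take $f\equiv 1$, $X$ uniform on $\{1,2\}$, $A=1$ almost surely when $X=1$ and $A\sim\mathrm{Bernoulli}(1/2)$ when $X=2$, and $\tau=1/2$: then $q_{1\tau}=1$ and $q_{2\tau}=1/2$, so $\E[g_1\one\{g_1\le q_{1\tau}\}]=3/4 > 1/4 = \E[g_2\one\{g_2\le q_{2\tau}\}]$, and both displayed inequalities are violated. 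So your proof and the paper's rest on the same hidden hypothesis---quantiles attained exactly, which does hold in the paper's intended applications since the margin condition rules out atoms at the quantiles---but yours is the only one that names it; carrying the correction term through, as you suggest, would require restating the lemma in terms of integrated quantiles rather than partial means.
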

\begin{proof}
	This lemma is essentially a restatement of the subadditivity property of expected shortfall \citep{acerbi2002coherence}. It is sufficient to note that
	\begin{align*}
	\E\left[ g_2(X) \one \left\{ g_2(X) \leq q_{2\tau} \right\} \right]  = \E\left[ g_1(A, X) \one \left\{ g_2(X) \leq q_{2\tau} \right\} \right]
	\end{align*}
	and that
	\begin{align*}
	\E\left( g_1(A, X) \left[\one \left\{ g_2(X) \leq q_{2\tau} \right\} - \one \left\{ g_1(A, X) \leq q_{1\tau} \right\} \right] \right) & \geq q_{1\tau} \E\left[\one \left\{ g_2(X) \leq q_{2\tau} \right\} - \one \left\{ g_1(A, X) \leq q_{1\tau} \right\} \right] \\
	& = q_{1\tau}(\tau - \tau) \\
	& = 0 
	\end{align*}
	where the inequality follows because
	\begin{align*}
	\begin{cases}
	\one \left\{ g_2(X) \leq q_{2\tau} \right\} - \one \left\{ g_1(A, X) \leq q_{1\tau} \right\} \leq 0 & \text{ if } g_1(A, X) \leq q_{1\tau} \\
	\one \left\{ g_2(X) \leq q_{2\tau} \right\} - \one \left\{ g_1(A, X) \leq q_{1\tau} \right\} \geq 0 & \text{ if } g_1(A, X) > q_{1\tau}
	\end{cases}
	\end{align*}
	Inequality \eqref{eq_expected_shortfall_ub} follows by rearranging:
	\begin{align*}
	\E\left[ g_1(A, X) \one \left\{ g_1(A, X) > q_{1\tau} \right\} \right] &  =  \E\left(g_1(A, X) \left[1-\one \left\{ g_1(A, X) \leq q_{1\tau} \right\}\right] \right) \\
	\E\left[ g_2(X) \one \left\{ g_2(X) > q_{2\tau} \right\} \right] &  =  \E\left( g_1(A, X) \left[1-\one \left\{ g_2(X) \leq q_{2\tau} \right\}\right] \right)
	\end{align*}
	so that
	\begin{align*}
	& \E\left( g_1(A, X) \left[\one \left\{ g_2(X) > q_{2\tau} \right\} - \one \left\{ g_1(A, X) > q_{1\tau} \right\} \right] \right) \\
	& \quad = \E\left( g_1(A, X) \left[\one \left\{ g_1(A, X) \leq q_{1\tau} \right\} - \one \left\{ g_2(X) \leq q_{2\tau} \right\} \right] \right) \\
	& \quad \leq 0
	\end{align*}
	as desired.
\end{proof}
From Lemma \ref{lemma_expected_shortfall} we conclude that the lower bound (upper bound) under $S \ind (Y, A) \mid \Xb$ is greater (smaller) than that under $S \ind Y \mid A , \Xb$. 
\section{Extensions} \label{appendix:extra_extension}
In this section, we discuss one possible extension to our model, though we note that others are possible. The impact of unmeasured confounding $U$ can be controlled by linking the true, unidentifiable propensity score $\Pb(A = a \mid \Xb, U, S = 0)$ to the estimable ``pseudo-propensity score" $\pi(a \mid \Xb)$ via a sensitivity model of choice. For example, as proposed in \cite{zhao2017sensitivity}, an extension to Rosenbaum's framework to non-matched data can be formulated by noting that, under consistency and positivity,
\begin{align}\label{zhao_starting_point}
\E \left(Y^a\right) = \E \left\{ \frac{Y \one \left( A = a \right)S}{\Pb(A = a \mid \Xb, S = 1, Y^a)} \right\} + \E \left\{ \frac{Y \one \left( A = a \right)(1-S)}{\Pb(A = a \mid \Xb, S = 0, Y^a)} \right\}
\end{align} 
and thus we can simply take the unobserved confounder $U$ to be one of the potential outcomes. Next, notice that $\Pb(A = a \mid \Xb, S = 1, Y^a) = \pi(a \mid \Xb)$ under Assumption \eqref{x_mixture_assumption} ($S \ind (Y, A) \mid \Xb$), so that \eqref{zhao_starting_point} simplifies to
\begin{align*}
\E \left(Y^a\right) & = \E \left\{ \frac{Y \one \left(A = a \right)S}{\pi(a\mid \Xb)} \right\} + \E \left\{ \frac{Y \one \left( A = a \right)(1-S)}{\Pb(A = a \mid \Xb, S = 0, Y^a)} \right\} 
\end{align*}
Let $\pi_a(\xb, y) = \Pb(A = a \mid \Xb = \xb, S = 0, Y^a = y)$. Noting that $\Pb (A = a \mid \Xb, S = 0) = \pi(a \mid \Xb)$ under Assumption \eqref{x_mixture_assumption}, the impact of unmeasured confounding can be governed by requiring $\pi_a(\xb, y)$ to be an element of the following sensitivity model
\begin{align}\label{zhao_model}
\mathcal{E}\left(\Lambda \right) = \left\{ \Lambda^{-1} \leq \text{OR} \left\{\pi_a(\mathbf{x}, y), \pi(a\mid \mathbf{x}) \right\} \leq \Lambda, \text{ for all } \mathbf{x} \in \mathcal{X}, \ y \in [0, 1], \ a \in \{0, 1\} \right\}
\end{align}
where $\Lambda \geq 1$ and $\Lambda = 1$ corresponds to the unconfounded case. Model \eqref{zhao_model} can be conveniently reformulated on the logit scale. Let
\begin{align*}
& g(a\mid \mathbf{x}) = \text{logit}\{\pi(a\mid \mathbf{x})\}, \quad g_a(\mathbf{x}, y) = \text{logit}\{\pi_a(\mathbf{x}, y)\} \\
& h(\mathbf{x}, y) =  g(a\mid \mathbf{x}) - g_a(\mathbf{x}, y), \quad \pi^{(h)} (\mathbf{x}, y) = \left[1 + \exp\left\{ h(\mathbf{x}, y) - g(a\mid \mathbf{x}) \right\}\right] ^{-1}
\end{align*}
and write
\begin{align}\label{zhao_model_logit}
\mathcal{E}\left(\Lambda \right) = \left\{ \pi^{(h)} (\mathbf{x}, y) : \ h \in \mathcal{H}(\Lambda) \right\},  \text{ where } \mathcal{H}(\Lambda) = \left\{h: \mathcal{X} \times [0, 1] \to \R \text{ and } \|  h \| _\infty \leq \log\Lambda \right\}
\end{align}
From \eqref{zhao_model_logit}, we rewrite $\E\left(Y^a\right)$ as
\begin{align}\label{zhao_end_point}
\E\left(Y^a\right) & = \E\left(\frac{S Y\one\left(A = a\right)}{\pi(a\mid \Xb)} + (1 - S)Y\one\left(A = a\right)\left[1 + \exp\left\{h(\Xb, Y)\right\} \exp\left\{- g(a\mid \Xb)\right\} \right] \right)
\end{align}
where $\exp\left\{h(\Xb, Y)\right\} \in [\Lambda^{-1}, \Lambda ]$. Bounds on $\psi$ can then be computed following the same line of reasoning as in Theorem \ref{theorem_xmodel}, where $\exp\left\{h(\Xb, Y)\right\}$ takes the role of $\lambda_a(\Xb)$. Convergence statements for estimators of \eqref{zhao_end_point} can be derived using standard arguments for convergence of inverse propensity score-weighted estimators together with the arguments made in proving Theorem \ref{gp_convergence}. However, we expect the conditions for $\sqrt{n}$-consistency and asymptotic normality to be stronger than those assumed in Theorem \ref{gp_convergence}. Moreover, note that, if $\Pb(S = 1) = 0$, as in \cite{zhao2017sensitivity}, expression \eqref{zhao_end_point} can be bounded and estimated via a stabilized IPW (SIPW) and a suitable linear program. In our model, because $\Pb(S = 1) \geq 0$, optimization of a SIPW is harder due to the integer nature of $S$ and beyond the scope of this paper. 
\section{Technical Proofs}
\subsection{Proof of Theorem \ref{gp_convergence}}\label{proof_gp_convergence}
Before proceeding with the proof of Theorem \ref{gp_convergence}, we report a lemma used below. It can be found in \cite{kennedy2018sharp} (Lemma 1) or in the proof of Lemma 2 in \cite{van2014targeted}.
\medskip
\begin{lemma}\label{ind_lemma}
	Let $\widehat{f}$ and $f$ take any real values. Then
	\begin{align*}
	|\one(\widehat{f} > 0) - \one (f > 0)| \ \leq \one ( |f|  \leq |\widehat{f} - f|)
	\end{align*}
\end{lemma}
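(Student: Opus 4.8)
The plan is to exploit the fact that the left-hand side is itself an indicator, taking only the values $0$ and $1$, and thereby reduce the claim to a single nontrivial case. First I would observe that $|\one(\widehat{f} > 0) - \one(f > 0)|$ equals $1$ exactly when the two indicators disagree, i.e. when precisely one of $\widehat{f}$ and $f$ is strictly positive, and equals $0$ otherwise. When it equals $0$ the asserted inequality holds trivially, since the right-hand side is nonnegative. Hence it suffices to show that $|f| \leq |\widehat{f} - f|$ on the disagreement event, which would force the right-hand indicator to equal $1$ and match the left-hand side.

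The core of the argument is a two-case split of the disagreement event. In the first case $\widehat{f} > 0$ and $f \leq 0$: here $\widehat{f} - f \geq -f = |f|$, using $\widehat{f} > 0$ together with $-f = |f|$, so $|\widehat{f} - f| \geq \widehat{f} - f \geq |f|$. In the symmetric second case $\widehat{f} \leq 0$ and $f > 0$: here $f - \widehat{f} \geq f = |f|$, using $-\widehat{f} \geq 0$ together with $f = |f|$, so again $|\widehat{f} - f| = |f - \widehat{f}| \geq |f|$. In either case $\one(|f| \leq |\widehat{f} - f|) = 1$, completing the argument.

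There is essentially no obstacle here beyond bookkeeping, but the step requiring the most care is the boundary, since the indicators use strict inequalities while the bound on the right is non-strict. Concretely, the configuration $\widehat{f} = 0$, $f > 0$ yields $|f| = |\widehat{f} - f|$, so the inequality holds only with equality; this is precisely why the right-hand side must be written with $\leq$ rather than $<$. I would therefore verify explicitly that every boundary configuration ($f = 0$ or $\widehat{f} = 0$) is correctly absorbed into one of the two cases above, after which the conclusion is immediate.
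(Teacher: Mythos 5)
Your proof is correct and follows essentially the same route as the paper's: both reduce to the event where the two indicators disagree and then show that disagreement forces $|f| \leq |\widehat{f} - f|$ (the paper via the identity $|\widehat{f}| + |f| = |\widehat{f} - f|$ for opposite signs, you via an equivalent two-case inequality argument). Your explicit treatment of the boundary configurations such as $\widehat{f} = 0$ is a nice touch of care that the paper glosses over, but it is the same underlying argument.
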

\begin{proof}
	This follows since
	\begin{align*}
	|\one(\widehat{f} > 0) - \one (f > 0)| \ = \one(\widehat{f}, f \text{ have opposite sign})
	\end{align*}
	and if $\widehat{f}$ and $f$ have opposite sign then
	\begin{align*}
	|\widehat{f}| \ + \ |f| = |\widehat{f} - f|
	\end{align*}
	which implies that $| f | \ \leq |\widehat{f} - f |$. Therefore, whenever $|  \one(\widehat{f} > 0) - \one(f > 0)| \ =1$, it must also be the case that $\one(|f| \ \leq |\widehat{f} - f|) = 1$, which yields the result.
\end{proof}

The proof of Theorem \ref{gp_convergence} is similar to that of Theorem 3 in \cite{kennedy2018nonparametric}, with the main difference being that the influence function of the estimator proposed is not a smooth function of the sensitivity parameter $\epsilon$. Fortunately, we can exploit the fact that the bounds are monotone in $\epsilon$ to establish convergence to a Gaussian process. We prove the result for the upper bound, as the case for the lower bound follows analogously. We also proceed by assuming $Y$ is bounded in $[0, 1]$. 

Let $\|  f \| _\Ec \ = \sup_{\epsilon \in \Ec}|f(\epsilon)|  $ denote the supremum norm over $\Ec \subseteq [0, 1]$, a known interval. Let $\lambda_{1-\epsilon}$ be shorthand notation for $\one\left\{ g(\etab) > q_{1-\epsilon} \right\}$. Similarly, let $\tau$ and $\nu$ be shorthand notations for the uncentered influence functions of $\E\left\{g(\etab)\right\}$ and $\E\{\mu_1(\Xb) - \mu_0(\Xb)\}$ respectively, so that
\begin{align*}
& \tau = \frac{(1-2A)\left\{Y - \mu_A(\Xb)\right\}}{\pi(A\mid \Xb) / \pi(1-A\mid \Xb)}+ A \mu_0(\Xb) + (1-A)\left(1 - \mu_1(\Xb)\right) \\
& \nu = \frac{(2A - 1)\left\{Y - \mu_A(\Xb) \right\}}{\pi(A\mid \Xb)} + \mu_1(\Xb) - \mu_0(\Xb)
\end{align*}
Define the following processes:
\begin{align*}
\widehat{\Psi}_n(\epsilon) & = \sqrt{n} \{\widehat{\psi}_u(\epsilon) - \psi_u(\epsilon)\} / \widehat{\sigma}_u(\epsilon) \\
\tilde{\Psi}_n(\epsilon) & = \sqrt{n}\{\widehat{\psi}_u(\epsilon) - \psi_u(\epsilon)\} / \sigma_u(\epsilon) \\
\Psi_n(\epsilon) & = \Gn([ \varphi_u(\Ob; \etab, q_{1-\epsilon}) - \lambda_{1-\epsilon} q_{1-\epsilon} -  \{\psi_u(\epsilon) - \epsilon q_{1-\epsilon}\} ] / \sigma_u(\epsilon)) \\
& = \Gn([ \overline{\varphi}_u(\Ob; \etab, q_{1-\epsilon}) -  \{\psi_u(\epsilon) - \epsilon q_{1-\epsilon} \}] / \sigma_u(\epsilon)) \\
& = \Gn\{\phi_u(\Ob; \etab, q_{1-\epsilon})\}
\end{align*}
where $\overline{\varphi}_u(\Ob; \etab, q_{1-\epsilon})=~{\nu + \lambda_{1-\epsilon}(\tau - q_{1-\epsilon})}$ and $\Gn (\cdot) = \sqrt{n}(\Pn - \Pb)$ denotes the empirical process on the full sample.

We also let $\Gb(\cdot)$ denote the mean-zero Gaussian process with covariance $\E\left\{\phi_u(\Ob; \etab, q_{1-\epsilon_1})\phi_u(\Ob; \etab, q_{1-\epsilon_2})\right\}$. We will show that $\Psi_n(\cdot) \indist \Gb(\cdot)$ in $\ell^{\infty}(\Ec)$ and that $\|  \widehat{\Psi}_n - \Psi_n\| _\Ec = o_\Pb(1)$. 

To show that $\Psi_n(\cdot) \indist \Gb(\cdot)$ in $\ell^{\infty}(\Ec)$, notice that $\overline{\varphi}_u(\cdot; \etab, q_{1-\epsilon}): \Ec \to [-M, M]$, for some $M < \infty$, consists of a sum of a bounded, constant function plus a product of two monotone functions. Specifically, consider $s(\cdot; \etab, \epsilon): \Ec \mapsto [-S, S]$, defined as $s(\cdot; \etab, \epsilon) = \nu$, $f(\cdot; \etab, \epsilon) : \Ec \mapsto \{0, 1\}$, defined as $f(\cdot; \etab, \epsilon) = \lambda_{1-\epsilon}$, and $h(\cdot; \etab, \epsilon) : \Ec \mapsto [-H, H]$, defined as $h(\cdot; \etab, \epsilon) = \tau - q_{1-\epsilon}$. Then, $\overline{\varphi}_u(\cdot; \etab, q_{1-\epsilon}) = s(\cdot; \etab, \epsilon) + f(\cdot; \etab, \epsilon) h(\cdot; \etab, \epsilon)$. The fact that $s(\cdot; \etab, \epsilon)$ and $h(\cdot; \etab, \epsilon)$ are uniformly bounded follows by the assumptions that $\Pb\{t \leq \pi(a\mid \Xb) \leq 1-t\} =1$, for some $t > 0$ and $a \in \{0,1\}$, and that the outcome $Y$ is bounded. 

Then we define the class $\mathcal{F}_\eta$ where $\overline{\varphi}_u(\cdot; \etab, q_{1-\epsilon})$ takes value in
\begin{align*}
\mathcal{F}_\eta = \left\{ \nu + \lambda_{1-\epsilon}(\tau - q_{1-\epsilon}) : \epsilon \in \Ec \right\}.
\end{align*} 
$\mathcal{F}_\eta$ is contained in the sum of $\mathcal{F}_{\eta, 0}$ and the pairwise product $\mathcal{F}_{\eta, 1} \cdot \mathcal{F}_{\eta, 2}$, where  $\mathcal{F}_{\eta, 0} = \{\nu: \epsilon \in \Ec\}$ (constant function class), $\mathcal{F}_{\eta, 1} = \{\lambda_{1-\epsilon}: \epsilon \in \Ec \}$ and $\mathcal{F}_{\eta, 2}~=~\{ \tau - q_{1-\epsilon}:~\epsilon \in \Ec \}$. 

By, for example, Theorem 2.7.5 in \cite{van1996weak}, the class of bounded monotone functions possesses a finite bracketing integral, and in particular, for $w \in \{0, 1,2\}$:
\begin{align*}
\log N_{[]}\left( \delta, \mathcal{F}_{\eta, w}, L_2(\Pb) \right) \lesssim \frac{1}{\delta}
\end{align*}
Furthermore, because $\mathcal{F}_{\eta, 0}$, $\mathcal{F}_{\eta, 1}$ and $\mathcal{F}_{\eta, 2}$ are uniformly bounded:
\begin{align*}
\log N_{[]}\left( \delta, \mathcal{F}_{\eta}, L_2(\Pb) \right) \lesssim 3 \log N_{[]}\left( \frac{\delta}{2}, \mathcal{F}_{\eta, 1}, L_2(\Pb) \right) \lesssim \frac{1}{\delta}
\end{align*}
by, for instance, Lemma 9.24 in \cite{kosorok2008introduction}. Thus, by for example Theorem 19.5 in \cite{van2000asymptotic}, $\mathcal{F}_{\eta}$ is Donsker.

Next, we prove the statement that $\| \widehat{\Psi}_n - \Psi_n \| _\Ec \ = o_\Pb(1)$. First, we notice that
\begin{align*}
\|  \widehat{\Psi}_n - \Psi_n\|_\Ec  & = \|  (\tilde{\Psi}_n - \Psi_n ) \sigma_u / \widehat{\sigma}_u + \Psi_n \left(\sigma_u - \widehat{\sigma}_u \right) / \widehat{\sigma}_u \|_\Ec  \\
& \leq \|  \tilde{\Psi}_n - \Psi_n \| _\Ec \|  \sigma_u/\widehat{\sigma}_u \| _\Ec \ + \ \|  \sigma_u / \widehat{\sigma}_u - 1 \| _\Ec \|  \Psi_n \| _\Ec \\
& \lesssim \|  \tilde{\Psi}_n - \Psi_n \| _\Ec \ + \ o_\Pb(1)
\end{align*}
where the last inequality follows because $\|  \widehat{\sigma}_u / \sigma_u - 1\| _\Ec \ = o_\Pb(1)$ by assumption and $\|  \Psi_n\| _\Ec \ = O_\Pb(1)$ by, for example, Theorem 2.14.2 in \cite{van1996weak} since $\mathcal{F}_\eta$ possesses a finite bracketing integral.

Let $N = n / B$ be the number of samples in any group $k = 1, \ldots, B$, and denote the empirical process over group $k$ units by $\Gn^k = \sqrt{N}(\Pn^k - \Pb)$. Then, we have
\begin{align*}
\tilde{\Psi}_n(\epsilon) - \Psi_n(\epsilon) & = \frac{\sqrt{n}}{\sigma_u(\epsilon)} \{\widehat{\psi}_u(\epsilon) - \psi_u(\epsilon)\} - \Gn\{\tilde{\varphi}_u(\Ob; \etab, q_{1-\epsilon})\} \\
& = \frac{\sqrt{n}}{B \sigma_u(\epsilon)}\sum_{k=1}^B \left[ \Pn^k\{ \varphi_u(\Ob; \widehat{\etab}_{-k}, \widehat{q}_{-k, 1-\epsilon})\} - \psi_u(\epsilon)  - (\Pn - \Pb)\{\overline{\varphi}(\Ob; \etab, q_{1-\epsilon})\} \right] \\
& = \frac{\sqrt{n}}{B \sigma_u(\epsilon)}\sum_{k=1}^B \left[ \Pn^k\{ \varphi_u(\Ob; \widehat{\etab}_{-k}, \widehat{q}_{-k, 1-\epsilon}) - \varphi_u(\Ob; \etab, q_{1-\epsilon})\} + \left(\Pn - \Pb\right)(\lambda_{1-\epsilon} q_{1-\epsilon}) \right] \\
& = \frac{\sqrt{n}}{B\sigma_u(\epsilon)}\sum_{k=1}^B\left[ \frac{1}{\sqrt{N}} \Gn^k\{ \varphi_u(\Ob; \widehat{\etab}_{-k}, \widehat{q}_{-k, 1-\epsilon}) - \varphi_u(\Ob; \etab, q_{1-\epsilon})\} \right. \\
& \hphantom{=\frac{\sqrt{n}}{B\sigma_u(\epsilon)}\sum_{k=1}^B\left[ \right.} \left. \vphantom{\frac{1}{\sqrt{N}}} + \Pb\{ \varphi_u(\Ob; \widehat{\etab}_{-k}, \widehat{q}_{-k, 1-\epsilon}) - \varphi_u(\Ob; \etab, q_{1-\epsilon})\} + (\Pn - \Pb)(\lambda_{1-\epsilon} q_{1-\epsilon} ) \right]
\end{align*}
where we used the facts that 
\begin{align*}
\psi_u(\epsilon) = \Pb\left\{ \varphi_u(\Ob; \etab, q_{1-\epsilon}) \right\}\quad \text{and} \quad  \sum_{k=1}^B \Pn^k\left\{\varphi_u(\Ob; \etab, q_{1-\epsilon})\right\} = \sum_{k=1}^B \Pn \left\{\varphi_u(\Ob; \etab, q_{1-\epsilon})\right\}
\end{align*}
The term $\Pb\left\{ \varphi_u(\Ob; \widehat{\etab}_{-k}, \widehat{q}_{-k, 1-\epsilon}) - \varphi_u(\Ob; \etab, q_{1-\epsilon}) \right\}$ can be decomposed in
\begin{align*}
\Pb\{ \varphi_u(\Ob; \widehat{\etab}_{-k}, \widehat{q}_{-k, 1-\epsilon}) - \varphi_u(\Ob; \etab, q_{1-\epsilon})\} & = \Pb\{\widehat{\nu}_{-k} - \nu + \widehat{\lambda}_{-k, 1-\epsilon} \left(\widehat{\tau}_{-k} - \tau \right) + (\tau - q_{1-\epsilon})(\widehat{\lambda}_{-k, 1-\epsilon} - \lambda_{1-\epsilon}) \\
& \hphantom{= \Pb\{ } \ + q_{1-\epsilon}(\widehat{\lambda}_{-k, 1-\epsilon} - \lambda_{1-\epsilon}) \}
\end{align*}
Notice that $\epsilon + o_\Pb(n^{-1/2}) = \Pn^k(\widehat{\lambda}_{-k, 1-\epsilon}) = \Pb \left(\lambda_{1-\epsilon} \right)$, so that
\begin{align*}
o_\Pb(n^{-1/2}) & = \sum_{k = 1}^B \Pn^k(\widehat{\lambda}_{-k, 1-\epsilon}) - \Pb(\lambda_{1-\epsilon}) \\
& = \sum_{k = 1}^B \left(\Pn^k - \Pb\right)(\widehat{\lambda}_{-k, 1-\epsilon} - \lambda_{1-\epsilon}) + \Pb(\widehat{\lambda}_{-k, 1-\epsilon} - \lambda_{1-\epsilon}) + \left(\Pn - \Pb\right)(\lambda_{1-\epsilon})
\end{align*}
where we used again the fact that $\sum_{k = 1}^B \Pn^k (\lambda_{1-\epsilon}) = \sum_{k = 1}^B \Pn (\lambda_{1-\epsilon})$. Thus, we have that
\begin{align*}
\sum_{k = 1}^B \Pb\{ q_{1-\epsilon}( \widehat{\lambda}_{-k, 1-\epsilon} - \lambda_{1-\epsilon})\} & = - \sum_{k = 1}^B (\Pn^k - \Pb)\{q_{1-\epsilon}( \widehat{\lambda}_{-k, 1-\epsilon} - \lambda_{1-\epsilon})\} - \left(\Pn - \Pb\right)(q_{1-\epsilon}\lambda_{1-\epsilon}) + o_\Pb(n^{-1/2})
\end{align*}
Therefore, we rewrite $\tilde{\Psi}_n(\epsilon) - \Psi_n(\epsilon)$ as
\begin{align*}
\tilde{\Psi}_n(\epsilon) - \Psi_n(\epsilon) & = \frac{\sqrt{n}}{B\sigma_u(\epsilon)}\sum_{k=1}^B\left( \frac{1}{\sqrt{N}} \Gn^k\{\varphi_u(\Ob; \widehat{\etab}_{-k}, \widehat{q}_{-k, 1-\epsilon}) - q_{1-\epsilon} \widehat{\lambda}_{-k, 1-\epsilon} - \overline{\varphi}_u(\Ob; \etab, q_{1-\epsilon})\} \right. \\
& \hphantom{= \frac{\sqrt{n}}{B\sigma_u(\epsilon)}\sum_{k=1}^B\left[ \right.} + \left. \vphantom{\frac{1}{\sqrt{N}}} \Pb\{\widehat{\nu}_{-k} - \nu + \widehat{\lambda}_{-k, 1-\epsilon} (\widehat{\tau}_{-k} - \tau) + (\tau - q_{1-\epsilon})(\widehat{\lambda}_{-k, 1-\epsilon} - \lambda_{1-\epsilon})\} \right) \\
& \equiv B_{n, 1}(\epsilon) + B_{n, 2}(\epsilon) + o_\Pb(1)
\end{align*}

Next, we show that $\left\| B_{n, 1}\right\| _\Ec = o_\Pb(1)$ and $\left\| B_{n, 2} \right\| _\Ec = o_\Pb(1)$, which completes the proof. 

For $B_{n,1}(\epsilon)$, notice that, because $B$ is fixed regardless of $n$, we have that
\begin{align*}
\left\| B_{n, 1}\right\| _\Ec & = \sup_{\epsilon \in \Ec} \left|  \frac{1}{\sqrt{B}\sigma_u(\epsilon)}\sum_{k=1}^B \Gn^k\{ \varphi_u(\Ob; \widehat{\etab}_{-k}, \widehat{q}_{-k, 1-\epsilon}) - q_{1-\epsilon} \widehat{\lambda}_{-k, 1-\epsilon} - \overline{\varphi}_u(\Ob; \etab, q_{1-\epsilon}) \} \right|  \\
& \lesssim \max_k \sup_{f \in \mathbf{\mathcal{F}}_n^k} |\Gn(f)| 
\end{align*}
where we define the class $\mathbf{\mathcal{F}}_n^k = \mathbf{\mathcal{F}}_{\widehat{\etab}_{-k}} - \mathbf{\mathcal{F}}_{\etab} $, where $\mathcal{F}_{\widehat{\etab}_{-k}} = \{ \widehat{\nu}_{-k} + \widehat{\lambda}_{-k, 1-\epsilon} (\widehat{\tau}_{-k} - q_{1-\epsilon}): \epsilon \in \Ec \}$ and $\mathcal{F}_\eta = \left\{ \overline{\varphi}_u(\cdot; \etab, \epsilon): \epsilon \in \Ec \right\}$ as above. Viewing $\widehat{\etab}_{-k}$ as fixed given the training data $D_0^k = \left\{\Ob_i : K_i \neq k \right\}$, by Theorem 2.14.2 in \cite{van1996weak}, we have that
\begin{align*}
\E\left\{ \sup_{f \in \mathbf{\mathcal{F}}_n^k} \left|  \Gn(f) \right|  \left| \right.  D_0^k \right\} \lesssim \left\|  F_n^k \right\|  \int_0^1 \sqrt{1 + \log N_{[]}\left(\delta \left\| F_n^k\right\| , \mathcal{F}_n^k, L_2(\Pb) \right) d\delta}
\end{align*} 
where $F_n^k$ is an envelop of the class $\mathcal{F}_n^k$. Given the training data, the function class where $\widehat{\lambda}_{-k, 1-\epsilon}$ lives can be expressed as $\{ \one (u > q), q \in \mathcal{Q} \}$, where $\mathcal{Q}$ is the set of all quantile functions, which in this case is a subset of the class of all bounded, monotone functions because $g(\etab_{-k})$ is bounded for any $k$. Therefore, by the same line of argument as above, the class $\mathbf{\mathcal{F}}_n^k$ is contained in unions and products of classes of uniformly bounded, monotone functions. As such, it satisfies
\begin{align*}
\log N_{[]}\left(\delta \left\| F_n^k\right\| , \mathcal{F}_n^k, L_2(\Pb) \right) \lesssim \frac{1}{\delta \left\| F_n^k\right\|}
\end{align*}
If we take 
\begin{align*}
F_n^k(\mathbf{o}) = \sup_{\epsilon \in \Ec} | \varphi_u(\Ob; \widehat{\etab}_{-k}, \widehat{q}_{-k, 1-\epsilon}) - q_{1-\epsilon} \widehat{\lambda}_{-k, 1-\epsilon} - \overline{\varphi}_u(\mathbf{o}; \etab, q_{1-\epsilon})|
\end{align*}
then $\|F_n^k\|  = o_\Pb(1)$ by assumption. The bracketing integral is finite for any fixed $\etab$, but here $\mathcal{F}_n^k$ depends on $n$ through $\widehat{\etab}_{-k}$, hence concluding that the LHS is $o_\Pb(1)$ requires further analysis. Letting $C_n^k = \| F_n^k\| $, we have that
\begin{align*}
\left\|  F_n^k \right\|  \int_0^1 \sqrt{1 + \log N_{[]}\left(\delta \left\| F_n^k\right\| , \mathcal{F}_n^k, L_2(\Pb) \right) d\delta} & \lesssim C_n^k \int_0^1\sqrt{1 + \frac{1}{\delta C_n^k}} d\delta \\
& = \sqrt{C_n^k(C_n^k + 1)} + \frac{1}{2}\log \left\{ 1 + 2C_n^k\left(1 + \sqrt{ 1 + \frac{1}{C_n^k}}\right) \right\}
\end{align*}
which goes to zero as $C_n^k \to 0$. Hence, we conclude that $\sup_{f\in \mathcal{F}_n^k}|\Gn(f)|  = o_\Pb(1)$ for each $k$. Because $B$ is finite, this implies that $\| B_{n, 1}\| _\Ec \ = o_\Pb(1)$ as desired. 

For $B_{n,2}(\epsilon)$, first notice that
\begin{align*}
\Pb(\widehat{\nu}_{-k} - \nu) & \lesssim \Pb \left[\{\pi(1 \mid \Xb) - \widehat{\pi}(1\mid \Xb)\}\left\{\frac{\mu_1(\Xb) - \widehat{\mu}_1(\Xb)}{\widehat{\pi}(1\mid \Xb)} + \frac{\mu_0(\Xb) - \widehat{\mu}_0(\Xb)}{1-\widehat{\pi}(1\mid\Xb)}\right\} \right] \\
& \lesssim \left\|\widehat{\pi}(1\mid \Xb) - \pi(1\mid \Xb) \right\| \max_a \left\|  \widehat{\mu}_a(\Xb) - \mu_a(\Xb) \right\|
\end{align*}
by an application of the Cauchy-Schwartz inequality. 

Next, similar calculations yield
\begin{align*}
\sup_{\epsilon \in \Ec} \Pb\{\widehat{\lambda}_{-k, 1-\epsilon}(\widehat{\tau}_{-k} - \tau)\}  & \leq \Pb(|\widehat{\tau}_{-k} - \tau|) \\
& \lesssim \left\|  \widehat{\pi}(1\mid \mathbf{X}) - \pi(1\mid \mathbf{X}) \right\|  \left(\max_a \left\|  \widehat{\mu}_a(\mathbf{X}) - \mu_a(\mathbf{X}) \right\| \right)
\end{align*}
where the first inequality follows because $\sup_{\epsilon \in \Ec}|\widehat{\lambda}_{-k, 1-\epsilon}|  \ \leq 1$.

Finally, we have
\begin{align*}
\Pb \{(\tau - q_{1-\epsilon})(\widehat{\lambda}_{-k, 1-\epsilon} - \lambda_{1-\epsilon})\} 
& \lesssim \Pb[(\widehat{\lambda}_{1-\epsilon} - \lambda_{1-\epsilon})\{g(\etab) - q_{1-\epsilon}\}]   \\
& \leq \Pb[ \left| g\left(\etab\right) - q_{1-\epsilon} \right|  \left|  \one \left\{g(\widehat{\etab}) - \widehat{q}_{1-\epsilon} > 0 \right\} - \one \left\{g(\etab) - q_{1-\epsilon} > 0 \right\} \right|] \\
& \leq \Pb[\left|  g\left(\etab\right) - q_{1-\epsilon} \right|  \one \left\{ \left|  g\left(\etab\right) - q_{1-\epsilon} \right|  \leq \left| g(\etab) - g(\widehat{\etab}) \right|  + \left|  \widehat{q}_{1-\epsilon} - q_{1-\epsilon} \right|  \right\}] \\
& \lesssim ( \left\|  g(\widehat{\etab}) - g\left(\etab\right) \right\| _\infty + \left|  \widehat{q}_{1-\epsilon} - q_{1-\epsilon} \right|)^{1+\alpha}
\end{align*}
where the third inequality follows by Lemma \ref{ind_lemma} and the last inequality follows by the margin condition (assumption \eqref{margin_condition}).

Therefore, we have that
\begin{align*}
\frac{\left\| B_{n, 2} \right\|_\Ec}{\sqrt{n}}& \lesssim \left\|  \widehat{\pi}(1\mid \mathbf{X}) - \pi(1\mid \mathbf{X}) \right\| \max_a \left\|  \widehat{\mu}_a(\mathbf{X}) - \mu_a(\mathbf{X}) \right\| +  \left( \left\|  g(\widehat{\etab}) - g\left(\etab\right) \right\| _\infty + \sup_{\epsilon \in \Ec} \left|  \widehat{q}_{1-\epsilon} - q_{1-\epsilon} \right|  \right)^{1+\alpha}
\end{align*}
where the RHS is $o_\Pb(n^{-1/2})$ by assumption. 
\subsection{Construction of Uniform Confidence Bands} \label{section:uniform_bands}
In this section, we propose the construction of $1-\alpha$ confidence bands capturing $\psi$ uniformly in $\epsilon$. For any given $\epsilon$, confidence intervals for $\psi$ can be constructed in at least two ways. One way is to construct a confidence interval for the identification region $[\psi_l(\epsilon), \psi_u(\epsilon) ]$. Another way is to construct a confidence interval for $\psi$ directly \citep{imbensmanskyci, stoye2009more, vansteelandt2006ignorance}. The former approach yields a conservative confidence interval for $\psi$, particularly for larger values of $\epsilon$ for which the identification interval is wider. To see this, notice that, unless the length of the interval is of the same order as the sampling variability, the true parameter $\psi$ can be close to either the lower bound or the upper bound, but not to both. Thus, the confidence interval in regimes of large $\epsilon$ is practically one-sided. Here, we provide confidence bands for the identification region that are valid uniformly over $\epsilon$. These bands also serve as conservative uniform bands for the true $\psi$ curve. We also provide the code to construct bands covering just $\psi(\epsilon)$, as in \cite{imbensmanskyci}, that are valid pointwise. We leave the construction of bands covering just $\psi(\epsilon)$ that are valid uniformly over $\epsilon$ for future research.

Let sample analogues of the variance functions of the bounds at $\epsilon$ be
\begin{align*}
& \widehat{\sigma}^2_u(\epsilon) = \Pn ([ \varphi_u(\Ob; \widehat{\etab}_{-K}, \widehat{q}_{1-\epsilon, -K}) - \one\{g(\widehat{\etab}_{-K}) > \widehat{q}_{1-\epsilon, -K} \}\widehat{q}_{1-\epsilon, -K} - \widehat{\psi}_u(\epsilon) + \epsilon \widehat{q}_{1-\epsilon, -K}]^2) \\
& \widehat{\sigma}^2_l(\epsilon) = \Pn ([ \varphi_l(\Ob; \widehat{\etab}_{-K}, \widehat{q}_{\epsilon, -K}) - \one\{g(\widehat{\etab}_{-K}) \leq \widehat{q}_{\epsilon, -K} \}\widehat{q}_{\epsilon, -K} - \widehat{\psi}_l(\epsilon) + \epsilon \widehat{q}_{\epsilon, -K}]^2).
\end{align*}
To construct asymptotically valid $(1-\alpha)$-uniform bands of the form 
\begin{align}\label{uniform_bands}
\widehat{\text{CI}}(\epsilon; c_\alpha, d_\alpha) = \left[\widehat{\psi}_l(\epsilon) - c_\alpha \frac{\widehat{\sigma}_l(\epsilon)}{\sqrt{n}}, \widehat{\psi}_u(\epsilon) + d_\alpha \frac{\widehat{\sigma}_u(\epsilon)}{\sqrt{n}}\right],
\end{align}
we need to find the critical values $c_\alpha$ and $d_\alpha$ such that
\begin{align*} 
& \Pb \left[ \sup_{\epsilon \in \Ec} \left\{\frac{\widehat{\psi}_l(\epsilon) - \psi_l(\epsilon)}{\widehat{\sigma}_l(\epsilon)/ \sqrt{n}} \right\} \leq c_\alpha \text{ and } \sup_{\epsilon \in \Ec} \left\{ \frac{\psi_u(\epsilon) - \widehat{\psi}_u(\epsilon)}{\widehat{\sigma}_u(\epsilon)/ \sqrt{n}}  \right\} \leq d_\alpha \right] \geq 1 - \alpha + o(1)
\end{align*}
In particular, we propose choosing $c_\alpha$ and $d_\alpha$ such that
\begin{align}\label{calpha_def}
& \Pb \left[ \sup_{\epsilon \in \Ec} \left\{ \frac{\widehat{\psi}_l(\epsilon) - \psi_l(\epsilon)}{\widehat{\sigma}_l(\epsilon)/ \sqrt{n}} \right\} \leq c_\alpha \right] =  \Pb \left[ \sup_{\epsilon \in \Ec} \left\{ \frac{\psi_u(\epsilon) - \widehat{\psi}_u(\epsilon)}{\widehat{\sigma}_u(\epsilon)/ \sqrt{n}} \right\}  \leq d_\alpha \right] = 1-\frac{\alpha}{2} + o(1),
\end{align}
essentially allowing the lower (upper) bound estimate to be greater (smaller) than the true lower (upper) bound with probability equal to $\alpha/2$. In light of the result in Theorem \ref{gp_convergence},  $c_\alpha$ and $d_\alpha$ can be found by approximating the distribution of the supremum of the respective Gaussian processes. Similarly to \cite{kennedy2018nonparametric}, we use the multiplier bootstrap to approximate these distributions. A key advantage of this approximating method is its computational efficiency, as it does not require refitting the nuisance functions estimators. 

The following lemma asserts that, for $\xi$ and $\zeta$ iid Rademacher random variables, the suprema of the following multiplier processes
\begin{align*}
& \sqrt{n} \Pn(\zeta [\varphi_l(\Ob; \widehat{\etab}_{-K}, \widehat{q}_{\epsilon; - K}) - \one\{g(\widehat{\etab}_{-K}) \leq \widehat{q}_{\epsilon, -K} \} \widehat{q}_{\epsilon, -K} - \widehat{\psi}_l(\epsilon) + \epsilon \widehat{q}_{\epsilon; -K} ] / \widehat{\sigma}_l(\epsilon)) \\
& \sqrt{n} \Pn(\xi [\widehat{\psi}_u(\epsilon) - \epsilon \widehat{q}_{1-\epsilon; -K} - \varphi_u(\Ob; \widehat{\etab}_{-K}, \widehat{q}_{1-\epsilon; - K}) + \one\{g(\widehat{\etab}_{-K}) > \widehat{q}_{1-\epsilon, -K} \} \widehat{q}_{1-\epsilon, -K} ] / \widehat{\sigma}_u(\epsilon))
\end{align*}
are valid approximations to their counterparts in \eqref{calpha_def}.
\medskip
\begin{lemma}\label{mult_theorem}
	Conditional on the sample, let $\widehat{c}_\alpha$ and $\widehat{d}_\alpha$ denote the $(1-\alpha/2)$-quantiles of
	\begin{align*}
	& \sup_{\epsilon \in \Ec} \sqrt{n} \Pn(\zeta [\varphi_l(\Ob; \widehat{\etab}_{-K}, \widehat{q}_{\epsilon; - K}) - \one\{g(\widehat{\etab}_{-K}) \leq \widehat{q}_{\epsilon, -K} \} \widehat{q}_{\epsilon, -K} - \widehat{\psi}_l(\epsilon) + \epsilon \widehat{q}_{\epsilon; -K} ] / \widehat{\sigma}_l(\epsilon)) \\
	& \sup_{\epsilon \in \Ec} \sqrt{n} \Pn(\xi [\widehat{\psi}_u(\epsilon) - \epsilon \widehat{q}_{1-\epsilon; -K} - \varphi_u(\Ob; \widehat{\etab}_{-K}, \widehat{q}_{1-\epsilon; - K}) - \one\{g(\widehat{\etab}_{-K}) > \widehat{q}_{1-\epsilon, -K} \} \widehat{q}_{1-\epsilon, -K} ] / \widehat{\sigma}_u(\epsilon))
	\end{align*}
	respectively, where $(\zeta_1, \ldots, \zeta_n)$ and $(\xi_1, \ldots, \xi_n)$ are iid Rademacher random variables independent of the sample. Then, under the same conditions of Theorem \ref{gp_convergence}, it holds that
	\begin{align*}
	\Pb \{ [\psi_l(\epsilon), \psi_u(\epsilon)] \subseteq \widehat{\text{CI}}(\epsilon; \widehat{c}_\alpha, \widehat{d}_\alpha), \text{ for all } \epsilon \in \Ec \} \geq 1 - \alpha + o(1)
	\end{align*}
\end{lemma}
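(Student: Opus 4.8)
The plan is to reduce the uniform-coverage claim to two one-sided supremum statistics, show that the multiplier bootstrap consistently estimates their quantiles, and then recombine the two halves by a Bonferroni argument. Writing out the containment $[\psi_l(\epsilon),\psi_u(\epsilon)]\subseteq \widehat{\text{CI}}(\epsilon;\widehat{c}_\alpha,\widehat{d}_\alpha)$ from \eqref{uniform_bands} for every $\epsilon\in\Ec$, the coverage event is seen to be exactly
\begin{align*}
\Big\{\sup_{\epsilon\in\Ec}\tfrac{\widehat{\psi}_l(\epsilon)-\psi_l(\epsilon)}{\widehat{\sigma}_l(\epsilon)/\sqrt{n}}\le \widehat{c}_\alpha\Big\}\cap\Big\{\sup_{\epsilon\in\Ec}\tfrac{\psi_u(\epsilon)-\widehat{\psi}_u(\epsilon)}{\widehat{\sigma}_u(\epsilon)/\sqrt{n}}\le \widehat{d}_\alpha\Big\}.
\end{align*}
Hence it suffices to show that $\widehat{c}_\alpha$ and $\widehat{d}_\alpha$ converge in probability to the deterministic critical values $c_\alpha,d_\alpha$ defined in \eqref{calpha_def}: each event then has probability $1-\alpha/2+o(1)$, and the intersection has probability at least $1-\alpha+o(1)$ by the union bound. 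This decoupling is precisely why the authors split the nominal level into two $\alpha/2$ pieces.

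Next I would establish conditional weak convergence of the multiplier processes. In the proof of Theorem \ref{gp_convergence} the class $\mathcal{F}_\eta$ was shown to have a finite bracketing integral and hence to be Donsker. I would therefore invoke the conditional multiplier central limit theorem for Donsker classes (for instance Theorem 2.9.6 in \cite{van1996weak}): conditional on the sample, the Rademacher-weighted empirical process $\sqrt{n}\,\Pn\{\zeta\,\phi_u(\Ob;\etab,q_{1-\epsilon})\}$ converges weakly in $\ell^{\infty}(\Ec)$ to the same mean-zero Gaussian process $\Gb$ that is the limit in Theorem \ref{gp_convergence}, and analogously for the lower-bound process. Because the multipliers have conditional mean zero, there is no analogue of the drift term $B_{n,2}(\epsilon)$ from that proof; only the stochastic-equicontinuity part survives. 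I would then swap the true nuisances for the cross-fitted estimates $\widehat{\etab}_{-K}$ and $\widehat{q}_{1-\epsilon,-K}$ and control the difference exactly as for $B_{n,1}(\epsilon)$: the increments lie in the shrinking class $\mathcal{F}_n^k$, whose uniform $L_2$-envelope is $o_\Pb(1)$ under conditions \ref{gp_convergence_cond_envelope}--\ref{gp_convergence_condLinf}, so the same entropy bound and maximal inequality force the difference to vanish uniformly.

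Finally I would pass from process convergence to quantile convergence via the continuous mapping theorem. The supremum functional $f\mapsto \sup_{\epsilon\in\Ec} f(\epsilon)$ is continuous on $\ell^{\infty}(\Ec)$, so the conditional law of the supremum of each multiplier process converges to the law of $\sup_{\epsilon\in\Ec}\Gb(\epsilon)$, which by Theorem \ref{gp_convergence} is also the limit law of the corresponding original statistic. Provided the distribution of $\sup_{\epsilon\in\Ec}\Gb(\epsilon)$ is continuous and strictly increasing at its $(1-\alpha/2)$-quantile---which holds because positivity and boundedness keep $\sigma_u,\sigma_l$ bounded away from zero, so the limit is a nondegenerate Gaussian supremum with no atom at its quantile (Gaussian anti-concentration)---standard quantile-continuity results yield $\widehat{c}_\alpha\inprob c_\alpha$ and $\widehat{d}_\alpha\inprob d_\alpha$, which closes the argument from the first paragraph.

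The main obstacle I anticipate is the second step: pushing the conditional multiplier CLT through an $n$-dependent, estimated-nuisance function class rather than the fixed Donsker class $\mathcal{F}_\eta$. This requires a conditional (given the held-out folds) maximal inequality over $\mathcal{F}_n^k$ together with the shrinking-envelope fact, mirroring the $B_{n,1}$ analysis, and some care---an almost-sure representation or a careful double-conditioning on both the data and the Rademacher weights---to handle the randomness of the estimated quantile process $\widehat{q}_{1-\epsilon,-K}$ sitting inside the bootstrap weights while still reading off the nondegeneracy needed for quantile convergence.
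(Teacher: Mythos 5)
Your proposal is correct and follows essentially the same route as the paper: the paper's own proof of Lemma \ref{mult_theorem} simply invokes the proof of Theorem 4 in \cite{kennedy2018nonparametric} together with a Bonferroni correction, and your argument---splitting coverage into two one-sided supremum events at level $\alpha/2$ each, applying the conditional multiplier CLT over the Donsker class $\mathcal{F}_\eta$ established in the proof of Theorem \ref{gp_convergence}, controlling the estimated-nuisance difference class by the same entropy/maximal-inequality argument as for $B_{n,1}$, and passing from process convergence to quantile convergence---is precisely the content of that cited proof plus the Bonferroni step. The only difference is that you spell out the machinery the paper outsources to the reference.
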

\begin{proof}
	Together with an application of the Bonferroni correction, the proof of Theorem 4 in \cite{kennedy2018nonparametric} can be used here.
\end{proof}
\subsection{Proof of Theorem 3} \label{esp_tilde_proof}
Recall the following map used to define $\epsilon_0$:
\begin{align*}
\Psi(\epsilon) = \psi_l(\epsilon)\psi_u(\epsilon) = \Pb\left\{ \varphi_l(\Ob; \etab; q_\epsilon) \right\}\Pb\left\{ \varphi_u(\Ob; \etab; q_{1-\epsilon}) \vphantom{q_{1-\epsilon}} \right\}
\end{align*}
where
\begin{align*}
\varphi_l(\Ob; \etab, q_\epsilon) = \nu(\Ob; \etab) + \tau(\Ob; \etab) \kappa_\epsilon -\epsilon,  \quad \varphi_u(\Ob; \etab, q_{1-\epsilon}) = \nu(\Ob; \etab) + \tau(\Ob; \etab) \lambda_{1-\epsilon},
\end{align*}
$\kappa_\epsilon = \one\left\{ g(\etab) \leq q_\epsilon \right\}$ and $\lambda_{1-\epsilon} = \one\left\{ g(\etab) > q_{1-\epsilon} \right\}$.
The corresponding empirical version, which makes use of cross-fitting, is:
\begin{align*}
\widehat{\Psi}_n(\epsilon) = \frac{1}{B}\sum_{k=1}^B \Pn^k\left\{ \widehat{\varphi}_l(\Ob; \widehat{\etab}_{-k}, \widehat{q}_{-k, \epsilon})  \right\}\Pn^k\left\{ \widehat{\varphi}_u(\Ob; \widehat{\etab}_{-k}, \widehat{q}_{-k, 1-\epsilon}) \right\}
\end{align*}
where $\Pn^k$ is the empirical measure over fold $k$, defined as in Section \ref{section:estimators}. 

The moment condition defining $\epsilon_0$ is $\Psi(\epsilon_0) = 0$, since at $\epsilon = \epsilon_0$ either the lower bound or the upper bound is equal to 0 and both are uniformly bounded so that the product is 0. Furthermore, the lower and upper bound curves are monotone in $\epsilon$; if the bounds are continuous and strictly monotone in a neighborhood of $\epsilon_0$, then the moment condition will be satisfied by a unique value in $[0, 1]$. In practice, we would estimate $\epsilon_0$ by $\epsilon_n$ solving the empirical moment condition $\widehat{\Psi}_n(\epsilon_n) = o_\Pb(n^{-1/2})$. 

Theorem \ref{eps_tilde_theorem} follows from a direct application of Theorem 3.3.1 in \cite{van1996weak}. Therefore, our proof consists of checking that the following conditions hold:
\begin{enumerate}
	\item $\sqrt{n}(\widehat{\Psi}_n - \Psi)(\epsilon_0) \indist N(0, \var\{\tilde{\varphi}(\Ob; \etab, \epsilon_0)\})$, where
	\begin{align*}
	\tilde{\varphi}(\Ob; \etab, \epsilon) = \psi_u(\epsilon)[\nu(\Ob; \etab) + \kappa_\epsilon \{\tau(\Ob; \etab) - q_\epsilon\} - \epsilon] + \psi_l(\epsilon)[\nu(\Ob; \etab) + \lambda_{1-\epsilon} \{\tau(\Ob; \etab) - q_{1-\epsilon} \}]
	\end{align*}
	\item $\sqrt{n}(\widehat{\Psi}_n - \Psi)(\epsilon_n) - \sqrt{n}(\widehat{\Psi}_n - \Psi)(\epsilon_0) = o_\Pb\left(1 + \sqrt{n} \left| \epsilon_n - \epsilon_0 \right| \right)$
	\item The map $\epsilon \mapsto \Psi(\epsilon)$ is differentiable at $\epsilon = \epsilon_0$. 
	\item $\epsilon_n$ is such that $\widehat{\Psi}_n(\epsilon_n) = o_\Pb(n^{-1/2})$ and $\epsilon_n \inprob \epsilon_0$. 
\end{enumerate}
We will follow the same notation as for the proof of Theorem \ref{gp_convergence}. In particular, let $\|  f \| _\Ec \ = \sup_{\epsilon \in \Ec} |  f(\epsilon) |  $ denote the supremum norm over $\Ec$. We proceed with considering $\Ec = [0, 1]$. 
\subsubsection{Proof of Statement 1}
We actually prove the following stronger result:
\begin{align*}
\| \sqrt{n} (\widehat{\Psi}_n - \Psi) - \sqrt{n}(\Pn - \Pb) \tilde{\varphi} \|_\Ec = o_\Pb(1),
\end{align*}
for $\tilde{\varphi}(\cdot; \etab, \epsilon)$ living in a Donsker class. This is useful in establishing the other conditions. 

First, we claim that the function $\tilde{\varphi}(\cdot; \etab, \epsilon)$ lives in a Dosker class. To see this, notice that 
\begin{align*}
\tilde{\varphi}(\Ob; \etab, \epsilon) = \psi_u(\epsilon)\overline{\varphi}_l(\Ob; \etab, \epsilon) + \psi_l(\epsilon)\overline{\varphi}_u(\Ob; \etab, \epsilon)
\end{align*}
where $\overline{\varphi}_l(\Ob; \etab, \epsilon) = \nu(\Ob; \etab) + \kappa_\epsilon \{\tau(\Ob; \etab) - q_\epsilon\} - \epsilon$ and $\overline{\varphi}_u(\Ob; \etab, \epsilon) = \nu(\Ob; \etab) + \lambda_{1-\epsilon} \{\tau(\Ob; \etab) - q_{1-\epsilon} \}$.
In the proof of Theorem \ref{gp_convergence}, we showed that $\overline{\varphi}_u(\cdot; \etab, \epsilon)$ lives in a Donsker class because its class can be constructed via sums and products of classes of uniformly bounded, monotone functions. Therefore, following a similar logic, we conclude that $\tilde{\varphi}(\cdot; \etab, \epsilon)$ lives in a Donsker class as well. 

Next, we argue that $\| \sqrt{n} (\widehat{\Psi}_n - \Psi) - \sqrt{n}(\Pn - \Pb) \tilde{\varphi} \|_\Ec = o_\Pb(1)$. A bit of algebra reveals that
\begin{align*}
\widehat{\Psi}_n(\epsilon) - \Psi(\epsilon) - (\Pn - \Pb) \tilde{\varphi} & = \frac{1}{B} \sum_{k=1}^B \left[ \vphantom{\widehat{\lambda}_{-k, 1-\epsilon}}(\Pn^k - \Pb)(\widehat{\overline{\varphi}}_{l, -k} - \overline{\varphi}_l)(\Pn^k - \Pb)(\widehat{\overline{\varphi}}_{u, -k} - \overline{\varphi}_u) \right. \\
& \hphantom{= \frac{1}{B} \sum_{k=1}^B \left\{\vphantom{\widehat{\lambda}_{-k, 1-\epsilon}}\right.} + ( \Pn^k - \Pb)(\widehat{\overline{\varphi}}_{l, -k} - \overline{\varphi}_l) \{T_1 - T_2 + (\Pn - \Pb)(\varphi_{u})\} \\
& \hphantom{= \frac{1}{B} \sum_{k=1}^B \left\{\vphantom{\widehat{\lambda}_{-k, 1-\epsilon}}\right.} + (\Pn^k - \Pb)(\widehat{\overline{\varphi}}_{u, -k} - \overline{\varphi}_u)\{V_1 - V_2 + (\Pn - \Pb)(\varphi_l)\} \\
& \hphantom{= \frac{1}{B} \sum_{k=1}^B \left\{\vphantom{\widehat{\lambda}_{-k, 1-\epsilon}}\right.} + ( \Pn - \Pb)(\varphi_u)(V_1 - V_2) + (\Pn - \Pb)(\varphi_l)(T_1 - T_2) \\
& \hphantom{= \frac{1}{B} \sum_{k=1}^B \left\{\vphantom{\widehat{\lambda}_{-k, 1-\epsilon}}\right.} + (\Pn - \Pb)(\varphi_u)(\Pn - \Pb)(\varphi_l) \\
& \hphantom{= \frac{1}{B} \sum_{k=1}^B \left\{\vphantom{\widehat{\lambda}_{-k, 1-\epsilon}}\right.} + T_1V_1 - T_1V_2 - T_2V_1 + T_2V_2  \\
& \hphantom{= \frac{1}{B} \sum_{k=1}^B \left\{\vphantom{\widehat{\lambda}_{-k, 1-\epsilon}}\right.} + \Pb(\varphi_u) \{(\Pn^k - \Pb)(\widehat{\overline{\varphi}}_{l, -k} - \overline{\varphi}_l) + V_1\} \\
& \hphantom{= \frac{1}{B} \sum_{k=1}^B \left\{\vphantom{\widehat{\lambda}_{-k, 1-\epsilon}}\right.} + \Pb(\varphi_l) \{(\Pn^k - \Pb)(\widehat{\overline{\varphi}}_{u, -k} - \overline{\varphi}_u) + T_1 \} \left. \right]
\end{align*} 
where
\begin{align*}
& \widehat{\overline{\varphi}}_{l, -k} - \overline{\varphi}_l = \widehat{\varphi}_{l -k} - q_{\epsilon} \widehat{\kappa}_{-k, \epsilon} - \varphi_{l -k} + q_{\epsilon}\kappa_{\epsilon} \\
& \widehat{\overline{\varphi}}_{u, -k} - \overline{\varphi}_u = \widehat{\varphi}_{u-k} - q_{1-\epsilon} \widehat{\lambda}_{-k, 1-\epsilon} - \varphi_{u} + q_{1-\epsilon}\lambda_{1-\epsilon} \\
& V_1 = \Pb\{ \widehat{\nu}_{-k} - \nu + \widehat{\kappa}_{-k, \epsilon}(\widehat{\tau}_{-k} - \tau) + (\tau - q_{\epsilon})(\widehat{\kappa}_{-k, \epsilon} - \kappa_\epsilon)\} \\
& T_1 = \Pb\{ \widehat{\nu}_{-k} - \nu + \widehat{\lambda}_{-k, 1-\epsilon}(\widehat{\tau}_{-k} - \tau) + (\tau - q_{1-\epsilon})( \widehat{\lambda}_{-k, 1-\epsilon} - \lambda_{1-\epsilon})\} \\
& V_2 = (\Pn - \Pb)( q_{\epsilon}\kappa_{\epsilon}) \quad \text{ and } \quad T_2 = (\Pn - \Pb)( q_{1-\epsilon}\lambda_{1-\epsilon})
\end{align*}
As shown in the proof of Theorem \ref{gp_convergence}, under the conditions of the theorem, it holds that
\begin{align*}
& \left\| \frac{1}{B}\sum_{k=1}^B (\Pn^k - \Pb) (\widehat{\overline{\varphi}}_{l, -k} - \overline{\varphi}_l) \right\|_\Ec = o_\Pb(n^{-1/2}), \quad \left\|\frac{1}{B}\sum_{k=1}^B (\Pn^k - \Pb) (\widehat{\overline{\varphi}}_{u, -k} - \overline{\varphi}_u) \right\|_\Ec = o_\Pb(n^{-1/2}), \\
& \| V_1 \|_\Ec = o_\Pb(n^{-1/2}), \quad \| T_1 \|_\Ec = o_\Pb(n^{-1/2}), \quad  \| V_2 \|_\Ec = O_\Pb(n^{-1/2}), \quad \text{and} \quad \| T_2 \|_\Ec = O_\Pb(n^{-1/2}).
\end{align*}
Therefore, by an application of the triangle inequality, it holds that
\begin{align*}
\| \sqrt{n}(\widehat{\Psi}_n - \Psi) - \sqrt{n}(\Pn - \Pb) \tilde{\varphi} \|_\Ec = o_\Pb(1)
\end{align*}
In particular, 
\begin{align*}
\sqrt{n} (\widehat{\Psi}_n - \Psi)(\epsilon_0) \indist N(0, \var\{\tilde{\varphi}(\Ob; \etab, \epsilon_0)\})
\end{align*}
by Slutsky's theorem.
\subsubsection{Proof of Statement 2}
Because in the proof of Statement 1 we have argued that
\begin{align*}
\| \sqrt{n} (\widehat{\Psi}_n - \Psi) - \sqrt{n}(\Pn - \Pb)\tilde{\varphi}\|_\Ec = \ o_\Pb(1)
\end{align*}
to prove Statement 2, it is sufficient to show
\begin{align}\label{statement_2}
\sqrt{n} (\Pn - \Pb) \{\tilde{\varphi}(\Ob; \etab, \epsilon_n) \} - \sqrt{n}(\Pn - \Pb) \{\tilde{\varphi}(\Ob; \etab, \epsilon_0) \} = o_\Pb(1 + \sqrt{n} \left| \epsilon_n - \epsilon_0 \right|)
\end{align}
Because $\tilde{\varphi}(\cdot; \etab, \epsilon)$ lives in a Donsker class and $\epsilon_n \inprob \epsilon_0$ (proved below in the proof of Statement 4), by Lemma 3.3.5 in \cite{van1996weak}, in order to prove \eqref{statement_2} it is sufficient to show that
\begin{align*}
\Pb \{\tilde{\varphi}(\epsilon) - \tilde{\varphi}(\epsilon_0)\}^2 \to 0 \quad \text{ as } \quad \epsilon \to \epsilon_0
\end{align*}
We have that
\begin{align*}
\Pb\{\tilde{\varphi}(\epsilon) - \tilde{\varphi}(\epsilon_0)\}^2 & = \Pb[ \psi_l(\epsilon)\{\nu + \lambda_{1-\epsilon}( \tau - q_{1-{\epsilon}} )\} - \psi_l(\epsilon_0)\{\nu + \lambda_{1-\epsilon_0}(\tau - q_{1-{\epsilon_0}})\} \\
& \hphantom{= \Pb[}+ \psi_u(\epsilon)\{\nu + \kappa_{\epsilon} ( \tau - q_{\epsilon}) - \epsilon\} - \psi_u(\epsilon_0)\{\nu + \kappa_{\epsilon_0}(\tau - q_{\epsilon_0}) - \epsilon_0\}]^2 \\
& = \Pb \left(D_l + D_u\right)^2
\end{align*}
Notice that we can write
\begin{align*}
D_l & = \{\psi_l(\epsilon) - \psi_l(\epsilon_0)\}\{\nu + \lambda_{1-\epsilon}(\tau - q_{1-{\epsilon}})\} \\
& \hphantom{D_l = } + \psi_l(\epsilon_0) \{(\lambda_{1-\epsilon} - \lambda_{1-\epsilon_0})( \tau - q_{1-{\epsilon}})  + \lambda_{1-\epsilon_0}(q_{1-{\epsilon_0}} - q_{1-{\epsilon}})\} \\
D_u & = \{\psi_u(\epsilon) - \psi_u(\epsilon_0)\}(\nu + \kappa_{\epsilon}(\tau - q_{\epsilon}) - \epsilon_n) \\
& \hphantom{D_u = } + \psi_u(\epsilon_0)\{(\kappa_{\epsilon} - \kappa_{\epsilon_0})(\tau - q_{\epsilon}) + \kappa_{\epsilon_0}( q_{\epsilon_0} - q_{\epsilon}) - (\epsilon - \epsilon_0)\} 
\end{align*}
Then, we have
\begin{align*}
& \Pb(D_l^2) \lesssim \Pb|\lambda_{1-\epsilon} - \lambda_{1-\epsilon_0}| \ + \ |q_{1-\epsilon_0} - q_{1-\epsilon}| \ + \ |\psi_l(\epsilon) - \psi_l(\epsilon_0)| \\
& \Pb(D_u^2) \lesssim \Pb|\kappa_{\epsilon} - \kappa_{\epsilon_0}| \ + \ |q_{\epsilon_0} - q_{\epsilon}| \ + \ |\psi_u(\epsilon) - \psi_u(\epsilon_0)| \ + \ |\epsilon - \epsilon_0| \\
& \Pb(D_lD_u) \lesssim \ |\psi_l(\epsilon) - \psi_l(\epsilon_0)| \ + \ |\psi_u(\epsilon) - \psi_u(\epsilon_0)| \ + \ \Pb|\lambda_{1-\epsilon} - \lambda_{1-\epsilon_0}| \ + \ \Pb|\kappa_{\epsilon} - \kappa_{\epsilon_0}|\\
& \hphantom{\Pb \left(D_lD_u\right)  \lesssim}  + |q_{1-\epsilon_0} - q_{1-\epsilon}| \ + \ |q_{\epsilon_0} - q_{\epsilon}| \ + \ |\epsilon - \epsilon_0|
\end{align*}
Next, notice
\begin{align*}
& \Pb| \kappa_{\epsilon} - \kappa_{\epsilon_0}| \ \leq \Pb[\one\{| g(\etab) - q_{\epsilon_0}| \ \leq | q_{\epsilon_0} - q_{\epsilon} |\}] \lesssim | q_{\epsilon_0} - q_{\epsilon}|^\alpha \\
& \Pb|\lambda_{1-\epsilon} - \lambda_{1-\epsilon_0}| \ \leq \Pb[\one\{|g(\etab) - q_{1-\epsilon_0}| \ \leq | q_{1-\epsilon_0} - q_{1-\epsilon} |\}] \lesssim | q_{1-\epsilon_0} - q_{1-\epsilon}|^\alpha
\end{align*}
for some $\alpha > 0$. The first inequalities rely on Lemma \ref{ind_lemma}. The last step hinges on the fact that the density of $g(\etab)$ satisfies the margin condition 3 for some $\alpha > 0$. 

Moreover, we have
\begin{align*}
| \psi_l(\epsilon) - \psi_l(\epsilon_0)| \ \lesssim \Pb|\kappa_{\epsilon} - \kappa_{\epsilon_0}| \ + \ |\epsilon - \epsilon_0|  \quad \text{ and } \quad | \psi_u(\epsilon) - \psi_u(\epsilon_0) | \ \lesssim \Pb|\lambda_{1-\epsilon} - \lambda_{1-\epsilon_0}|
\end{align*}
since $\Pb(|g(\etab)| \ \leq 1) = 1$. 

We have assumed that the CDF of $g(\etab)$ is continuous and strictly increasing in neighborhoods of $q_{\epsilon_0}$ and $q_{1 - \epsilon_0}$, thus the quantile function is continuous in neighborhoods of $\epsilon_0$ and $1 - \epsilon_0$ as well, allowing us to conclude that, for $\alpha > 0$
\begin{align*}
| q_{\epsilon_0} - q_{\epsilon}|^\alpha \to 0 \quad \text{ and } \quad |q_{1-\epsilon_0} - q_{1-\epsilon}|^\alpha \to 0 \quad \text{ as } \quad \epsilon \to \epsilon_0.
\end{align*} 
Then, it follows that $\Pb \{\tilde{\varphi}(\epsilon) - \tilde{\varphi}(\epsilon_0)\}^2 \to 0$ as $\epsilon \to \epsilon_0$. 

\subsubsection{Proof of Statement 3}
To prove Statement 3, notice that
\begin{align*}
\psi_l(\epsilon)\psi_u(\epsilon) = \left[ \E \left\{\mu_1(\Xb) - \mu_0(\Xb)\right\} + \int_{0}^{q_\epsilon} t dG(t) - \epsilon \right] \left[ \E \left\{\mu_1(\Xb) - \mu_0(\Xb)\right\} + \int_{q_{1-\epsilon}}^{1} t dG(t) \right]
\end{align*}
Because we have assumed that the quantile function of $g(\etab)$ is differentiable in neighborhoods of $\epsilon_0$ and $1-\epsilon_0$, by ``Leibniz integral rule," it holds that
\begin{align*}
\Psi^{'}(\epsilon_0) = \left. \frac{d}{d \epsilon} \psi_l(\epsilon)\psi_u(\epsilon) \right|_{\epsilon = \epsilon_0} = \psi_u(\epsilon_0) (q_{\epsilon_0} - 1) + \psi_l(\epsilon_0)q_{1-\epsilon_0}
\end{align*}
which we have assumed to be nonzero. Notice that in calculating the derivative, we used the fact that $\int t dG(t) = \int t f(t) dt$ with $f$ being the density of $g(\etab)$, which we have assumed to exist. 
\subsubsection{Proof of Statement 4}
We have that $\Psi_n(\epsilon_n) = o_\Pb(n^{-1/2})$ by definition. Furthermore, we have shown that
\begin{align*}
\| \Psi_n - \Psi \|_\Ec \ =  \| (\Pn - \Pb)\{ \tilde{\varphi}(\Ob; \etab, \epsilon) \}\|_\Ec \ + \ o_\Pb(n^{-1/2}) = o_\Pb(1)
\end{align*}
where the last equality follows because $\tilde{\varphi}(\cdot; \etab, \epsilon)$ is Donsker and thus Glivenko-Cantelli.

We now show that $\psi_l(\epsilon)$ and $\psi_u(\epsilon)$ are strictly monotone. First, for $\epsilon_1 < \epsilon_2$, we have
\begin{align*}
\psi_l(\epsilon_1) - \psi_l(\epsilon_2) & = \E(g(\etab)[\one\{g(\etab) \leq q_{\epsilon_1} \} - \one\{g(\etab) \leq q_{\epsilon_2} \}]) - (\epsilon_1 - \epsilon_2) \\
& = - \E\{g(\etab) \mid q_{\epsilon_1} < g(\etab) < q_{\epsilon_2}\}\Pb\{q_{\epsilon_1} < g(\etab) < q_{\epsilon_2}\}  - (\epsilon_1 - \epsilon_2) \\
& = - \E\{g(\etab) \mid q_{\epsilon_1} < g(\etab) < q_{\epsilon_2}\}(\epsilon_2 - \epsilon_1)  + (\epsilon_2 - \epsilon_1) \\
& > 0
\end{align*}
where we used the facts that $\Pb\{0 < g(\etab) < 1\} = 1$ and $\Pb\{q_{\epsilon_1} < g(\etab) < q_{\epsilon_2}\} = \epsilon_2 - \epsilon_1$ (continuity of $g(\etab)$), that $\one\{g(\etab) \leq q_{\epsilon_1} \} \leq \one\{g(\etab) \leq q_{\epsilon_2} \}$ (monotonicity of quantile function) and that
\begin{align*}
\one\{g(\etab) \leq q_{\epsilon_1} \} - \one\{g(\etab) \leq q_{\epsilon_2} \} = - 1 \iff q_{\epsilon_1} < g(\etab) < q_{\epsilon_2}
\end{align*}
Similarly, we note that, for $\epsilon_1 < \epsilon_2$, we have
\begin{align*}
\psi_u(\epsilon_1) - \psi_u(\epsilon_2) & = \E(g(\etab)[\one\{g(\etab) > q_{1 - \epsilon_1} \} - \one\{g(\etab) > q_{1 - \epsilon_2} \}]) \\
& = - \E\{g(\etab) \mid q_{1 - \epsilon_2} < g(\etab) < q_{1 - \epsilon_1}\}(\epsilon_2 - \epsilon_1) \\
& < 0
\end{align*}
using the same logic as before. Thus, we conclude that, under the assumption that $g(\etab)$ is a continuous random variable, both $\psi_l(\epsilon)$ and $\psi_u(\epsilon)$ are continuous and strictly monotone. Therefore, the value $\epsilon_0$ satisfying $\Psi(\epsilon_0) = 0$ must be unique. Furthermore, we have assumed (to derive a finite asymptotic variance of $\epsilon_n$) that $\Psi^{'}(\epsilon_0) \neq 0$, thus a first-order Taylor expansion of $\Psi(\epsilon_n)$ around $\epsilon_0$
\begin{align*}
\Psi(\epsilon_n) = \Psi^{'}(\epsilon_0)(\epsilon_n - \epsilon_0) + o(|\epsilon_n - \epsilon_0|)
\end{align*}
suffices to conclude that $|\Psi(\epsilon_n)| \ \to 0$ implies $|\epsilon_n - \epsilon_0| \ \to 0$ for any sequence $\epsilon_n \in \Ec$. In other words, under the assumptions of the theorem, the identifiability condition of $\epsilon_0$ is satisfied. Then, by an application of Theorem 2.10 in \cite{kosorok2008introduction}, we conclude that $|\epsilon_n - \epsilon_0| = o_\Pb(1)$ as desired. 
\section{Additional Data Analysis}
In this section, we provide additional analysis of the data from \cite{connors1996effectiveness}. In Figure \ref{connors_results_2D}, we consider values of $\delta$ smaller than 1, and notice that the bounds would start to include zero for larger values of $\epsilon$. For instance, under the $X$-mixture model, if $\delta = 1/2$ is used, the results appear to be robust for up to $11.00\%$ ($95\% \text{ CI}=[3.84\%, 18.16\%]$) of confounded units in the sample. A value of $\delta = 1/2$ requires that the counterfactual mean outcomes satisfy:
\begin{align*}
\frac{\mu_a(\Xb)}{2} \leq \E(Y^a \mid A = 1-a, \Xb, S = 0) \leq \frac{1}{2} + \frac{\mu_a(\Xb)}{2} \text{ with prob. 1.}
\end{align*}
for $a \in\{0, 1\}$, thereby restricting $\E(Y^a \mid A = 1-a, \Xb, S = 0)$ to be in an interval of length 1/2 instead of the worst-case interval of length 1. Robustness is up to $8.99\%$ ($95\% \text{ CI}=[3.78\%, 14.20\%]$) confounded units if the $XA$-mixture model is considered instead. 
\begin{figure}[H]
	\subfloat{%
		\includegraphics[width=0.5\textwidth]{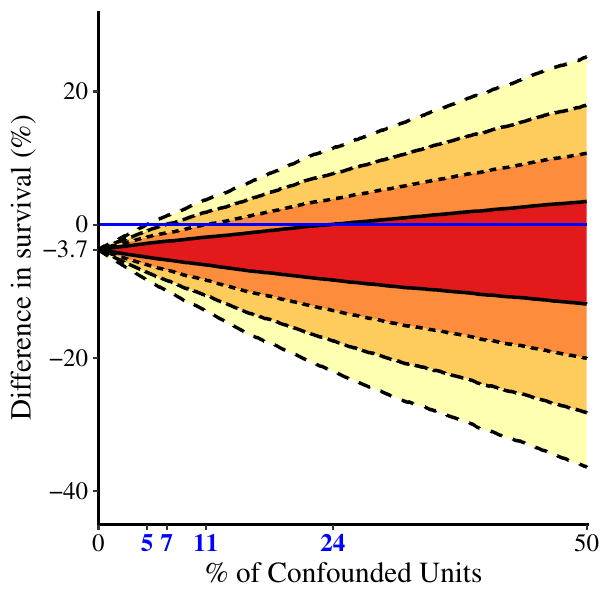}}
	\hspace{\fill}
	\subfloat{%
		\includegraphics[width=0.5\textwidth]{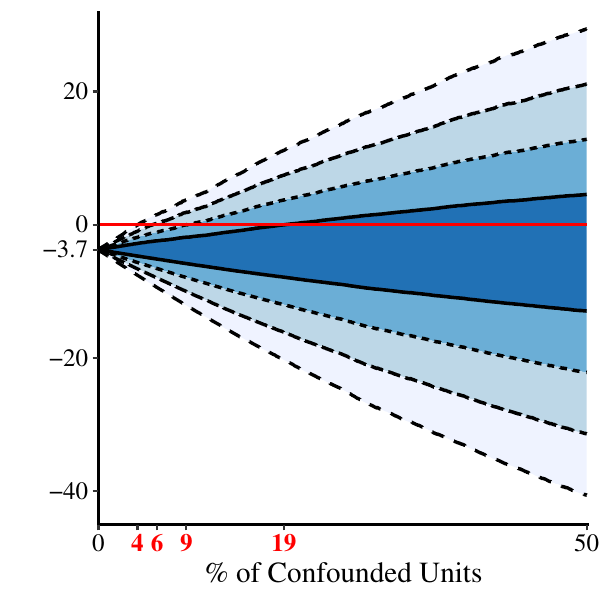}}\\
	\caption{\label{connors_results_2D} Estimated bounds on the average treatment effect as a function of the proportion of confounded units $\epsilon$ and the parameter $\delta \in \{0.25, 0.5, 0.75, 1\}$, which governs the amount of confounding among the $S = 0$ units. Darker shades correspond to smaller values of $\delta$. Bolded labels on the abscissa represent estimates of $\epsilon_0$ for corresponding values of $\delta$. Uniform and pointwise confidence intervals are not shown for the sake of clarity.}
\end{figure}

\subsection{Results using the sensitivity model from \cite{cinelli2020making}} \label{section:extra_data_analysis}
In this section, we briefly report the results from applying the sensitivity analysis for linear models discussed in \cite{cinelli2020making}. Because their model is appropriate only for causal effect estimates computed using OLS, we fit a linear model for 30-day survival regressed all baseline covariates, the treatment and no interactions. If the model is accurate and there is no residual confounding, RHC usage appears to decrease the probability of 30-day survival by 0.042 ($95\% \text{CI} = [-0.07, -0.02]$). However, it is sufficient that an unmeasured confounder explains 1.7\% of the outcome variance not already captured by the treatment and the covariates and 1.7\% of the treatment variance not already captured by the covariates to make the effect not statistically significance at the $0.05$-level (Table \ref{table:cinelli_hazlett_report}). 
\begin{table}[!h]
	\centering
	\begin{tabular}{lrrrrrr}
		\multicolumn{7}{c}{Outcome: survival at day 30} \\
		\hline \hline 
		Treatment: & Est. & S.E. & t-value & $R^2_{Y \sim D |{\bf X}}$ & $RV_{q = 1}$ & $RV_{q = 1, \alpha = 0.05}$  \\ 
		\hline 
		RHC usage & -0.042 & 0.013 & -3.261 & 0.2\% & 4.2\% & 1.7\% \\ 
		\hline 
		df = 5658 & & \multicolumn{5}{r}{ \small \textit{Bound (2x \texttt{dnr1})}: $R^2_{Y\sim Z| {\bf X}, D}$ = 3.7\%, $R^2_{D\sim Z| {\bf X} }$ = 1\%} \\
		df = 5658 & & \multicolumn{5}{r}{ \small \textit{Bound (2x \texttt{is\_miss\_adld3p})}: $R^2_{Y\sim Z| {\bf X}, D}$ = 6\%, $R^2_{D\sim Z| {\bf X} }$ = 1.6\%} \\
		\hline
	\end{tabular}
	\caption{\label{table:cinelli_hazlett_report} Summary of the effect estimate under no unmeasured confounding as well as assessment of the estimate's robustness using some key covariates as benchmarks.}
\end{table}
As shown in Figure \ref{figure:cinelli_hazlett_plots}, the observed effect would cease to be significance also if there is an unmeasured confounder with explanatory power that is 2 times greater than that of the variable \texttt{dnr1} (an indicator for whether there was a ``do not resuscitate" order when the patient was admitted on day 1) or 2 times greater than that of the variable indicating that \texttt{adld3p} (ADL) is missing.  

\begin{figure}[H]
	\subfloat{%
		\includegraphics[width=0.5\textwidth]{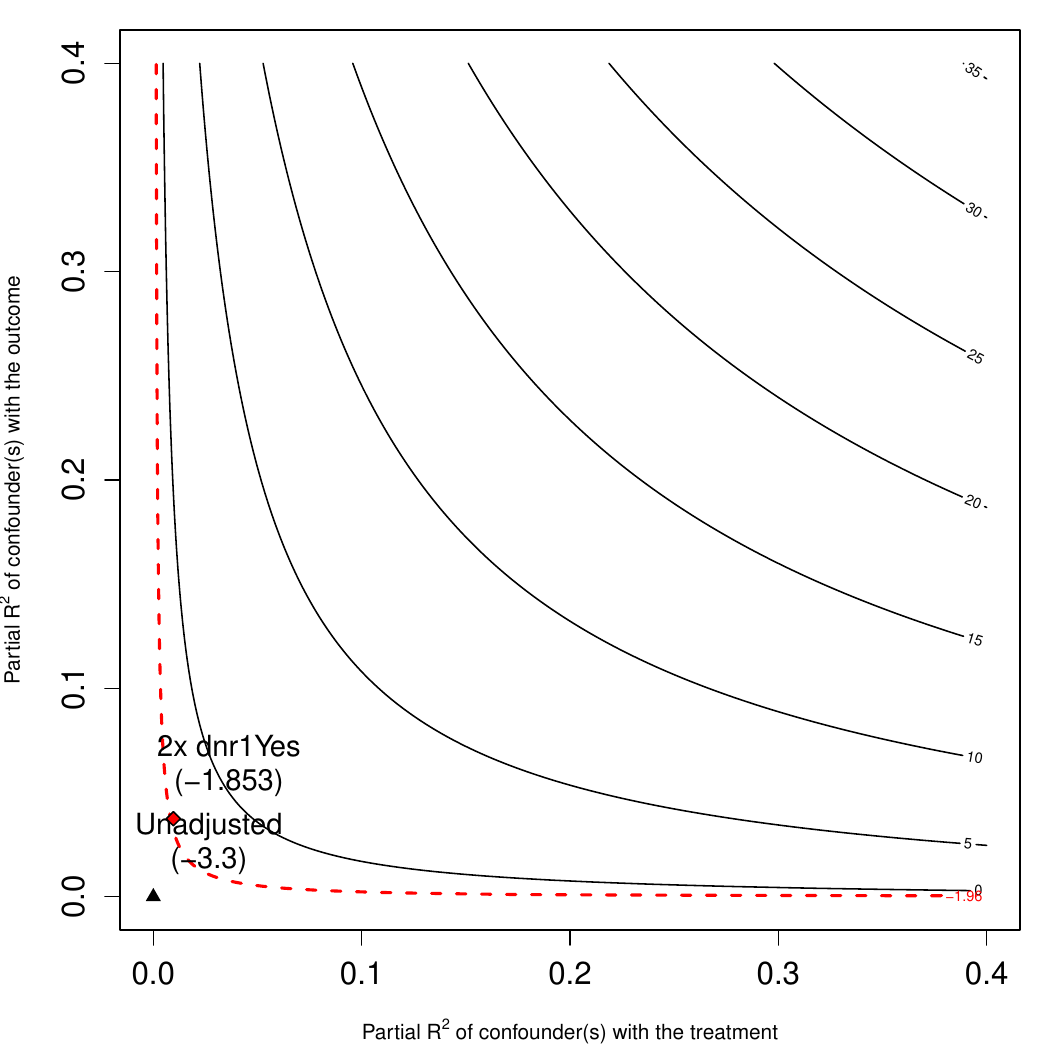}}
	\hspace{\fill}
	\subfloat{%
		\includegraphics[width=0.5\textwidth]{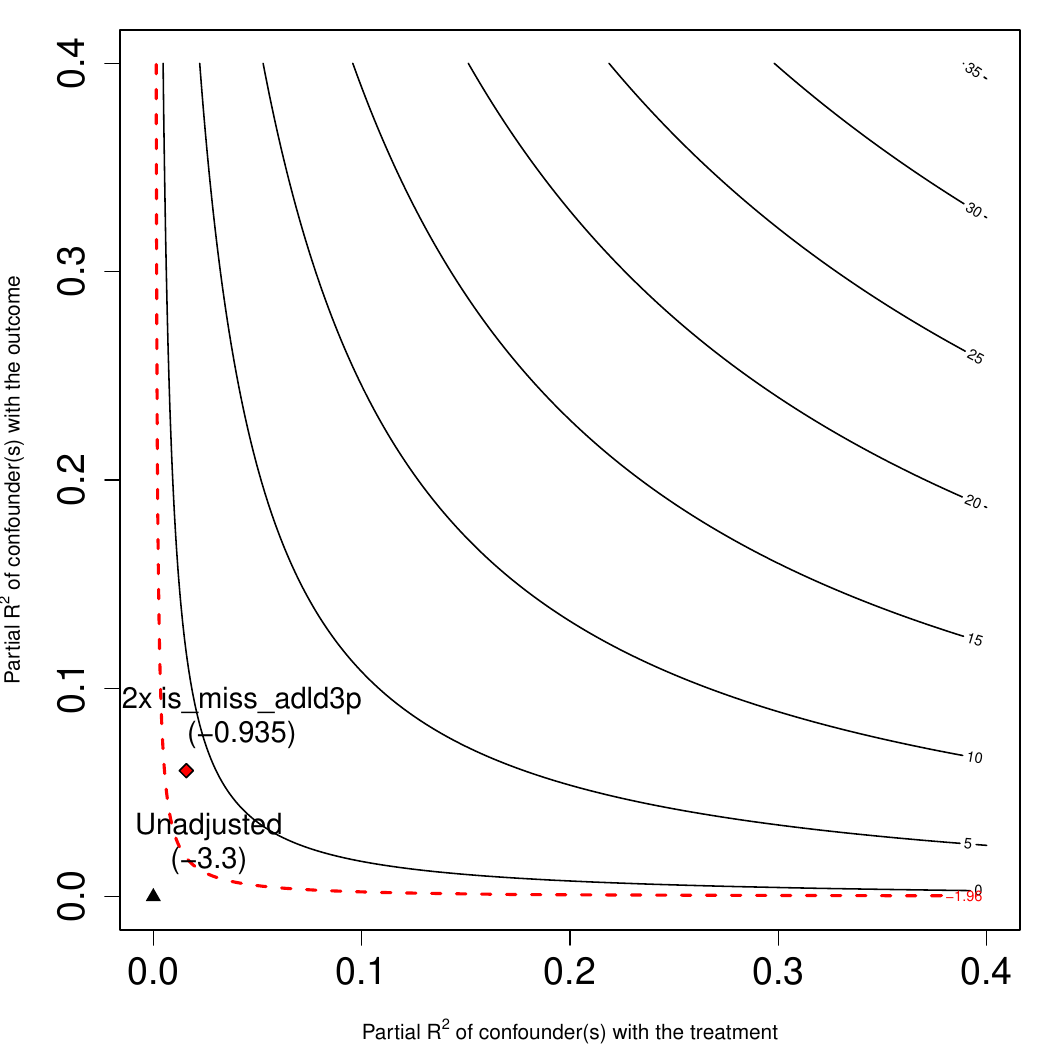}}
	\caption{\label{figure:cinelli_hazlett_plots} Sensitivity contour plots in the partial $R^2$ scale with benchmark bounds of the $t$-value.}
\end{figure}

\section{Simulations regarding power}\label{sec:power_sensitivity}
In this section, we conduct a brief simulation to investigate how conservative inference based on $\epsilon_0$ is when all the confounders have been measured so that the true $\epsilon$ is actually zero. The bounds on the ATE $\tau$ depends on three fundamental quantities, $\mu_a(\Xb) = \E(Y \mid A = a, \Xb)$ for $a = 0, 1$ and $\pi(\Xb) = \Pb(A = 1 \mid \Xb)$. Let  $Q_g(p, \delta)$ be the quantile function of either $g(\etab)$ as defined in Theorem 1 ($X$-model) or $g(A, \etab)$ as defined in Section \ref{appendix:xa-model} ($XA$-model). When $\epsilon = 0$, $\tau = \E\{\mu_1(\Xb) - \mu_0(\Xb)\}$ and the bounds can be written as
\begin{align*}
\psi_l(\epsilon, \delta) = \tau + \int_0^\epsilon Q_g(p, \delta) dp - \epsilon \delta(y_\max - y_\min) \quad { \text{ and }} \quad 	\psi_u(\epsilon, \delta) = \tau + \int_{1 - \epsilon}^1 Q_g(p, \delta) dp.
\end{align*}
Therefore, we may define the design sensitivity \citep{rosenbaum2004design} as $\widetilde \epsilon$ solving
\begin{align*}
& \tau + \int_0^{\widetilde\epsilon} Q_g(p, \delta) dp - \widetilde \epsilon \delta(y_\max - y_\min) = 0 \text{ if } \tau \geq 0 \quad { \text{ and }} \quad \tau + \int_{1 - \widetilde\epsilon}^1 Q_g(p, \delta) dp  = 0 \text{ if } \tau < 0.
\end{align*}
Thus, $\widetilde\epsilon$ depends on $\tau$ and the quantile function $Q_g(p, \delta)$, which itself depends on $\tau$ through the functions $\mu_a(\Xb)$. Without knowing $Q_g(p, \delta)$, one can get crude bounds on $\widetilde\epsilon$ as
\begin{align*}
& \frac{\tau}{\delta(y_\max - y_\min) - Q_g(0, \delta)} \leq \widetilde\epsilon \leq \frac{\tau}{\delta(y_\max - y_\min) - Q_g(1, \delta)} & \text{ if } \tau \geq 0 \\
& \frac{|\tau|}{ Q_g(1, \delta)} \leq \widetilde\epsilon \leq \frac{|\tau|}{Q_g(0, \delta)} & \text{ if } \tau < 0
\end{align*} 
since, for example, $\int_0^{\epsilon} Q_g(p, \delta) dp \geq \epsilon Q_g(0, \delta)$. The derivatives of the bounds are
\begin{align*}
\frac{d}{d\epsilon}\psi_l(\epsilon, \delta) =  Q_g(\epsilon, \delta) - \delta(y_\max - y_\min) \quad \text{and} \quad \frac{d}{d\epsilon}\psi_u(\epsilon, \delta) =  Q_g(1 - \epsilon, \delta).
\end{align*}
so that the rate at which they widen crucially depends on $Q_g(p, \delta)$. For example, let the data be generated as in the simulation setup of Section \ref{section:sim} except for 
\begin{align*}
Y^{a} \mid X_1, X_2, U, S, A \sim \text{Bern}\{(1 - a)/2 + a B^{-1}_{\alpha} \circ TN(X_1) \},
\end{align*}
where $B^{-1}_{\alpha}(\cdot)$ is the quantile function of a $\text{Beta}(\alpha, 1)$ random variable and $TN(\cdot)$ is the CDF of $X_1$, a truncated normal random variable in $[-2, 2]$. Therefore, $\mu_0(X_1, X_2) = 1/2$, $\mu_1(X_1, X_2) \sim \text{Beta}(\alpha, 1)$, and $\tau = \alpha / (\alpha + 1) - 1/2$. As shown in Figure \ref{fig:design_sensitivity}, as $\tau$ increases $\widetilde{\epsilon}$ increases, although the relationship is nonlinear. In fact, different values of $\alpha$ also affects the skewness of the distribution of $g(\etab)$, for example. Finally, $\widetilde\epsilon$ is always greater under the $X$-model than under the $XA$-model because the bounds under the latter are at least as wide as those under the former model.  
\begin{figure}[H]
	\centering
	\includegraphics[scale=0.25]{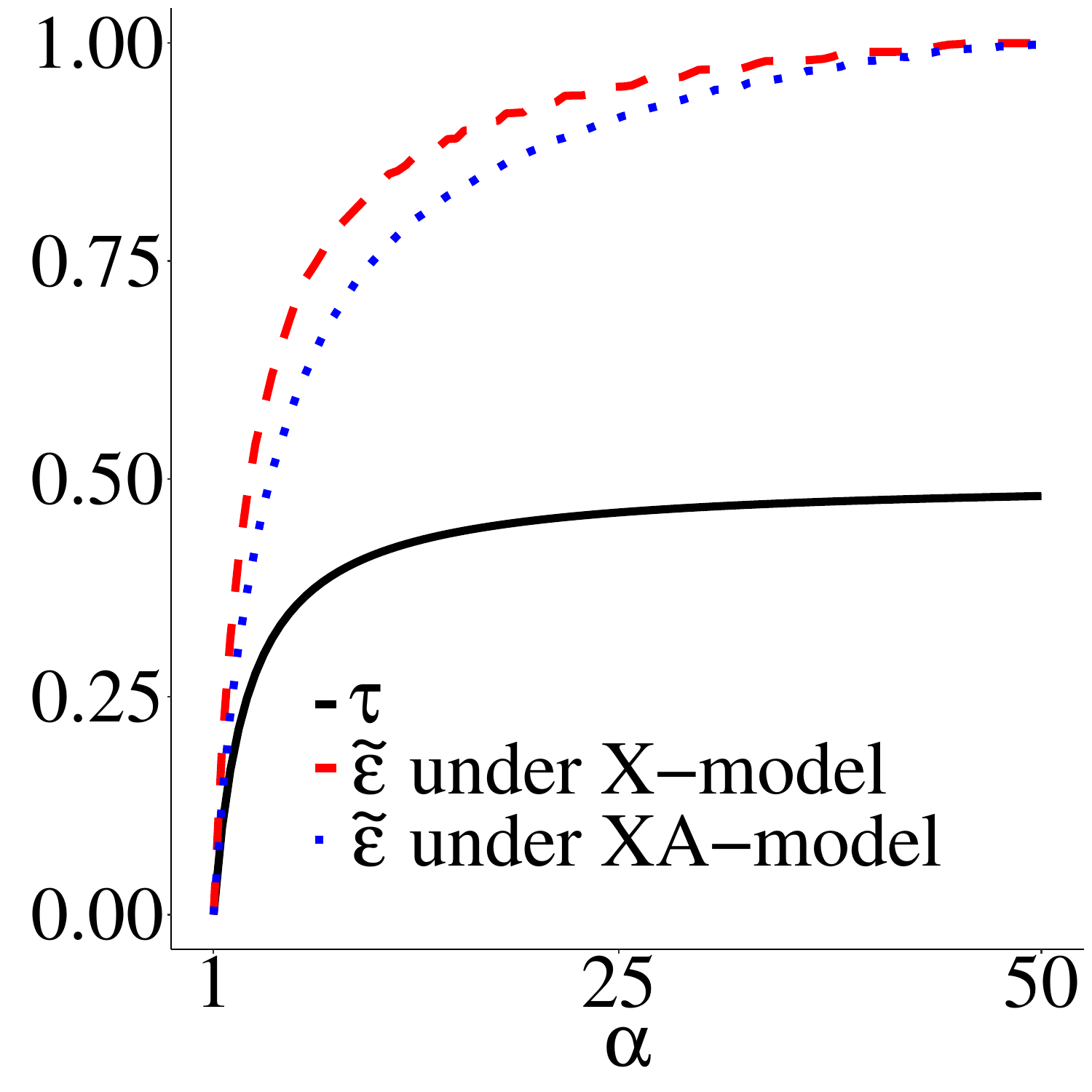}	
	\caption{\label{fig:design_sensitivity} The values of $\tau$ and $\widetilde{\epsilon}$ under either the $X$-model or the $XA$-model are shown as a function of $\alpha$. }
\end{figure}
\end{document}